\edef\restoreparindent{\parindent=\the\parindent\relax}
\newtheorem{claim}{Claim}
\newcommand{\be}{\begin{equation}}
\newcommand{\ee}{\end{equation}}
\newcommand{\bea}{\begin{eqnarray}}
\newcommand{\eea}{\end{eqnarray}}
\newcommand{\e}{\epsilon}
\newcommand{\cof}{\Lambda}
\newcommand{\sgn}{\mathrm{sgn}}
\newcommand{\hD}{\hat \Delta} 
\newcommand{\hp}{\hat \phi}
\newcommand{\vac}{| 0 \rangle}
\newcommand{\cF}{\mathcal F}
\newcommand{\mL}{\mathcal L}
\newcommand{\li}{{\mathrm{Li}}}
\newcommand{\re}{{\mathrm{Re}}}
\newcommand{\im}{\mathrm{Image}}
\newcommand{\kker}{\mathrm{Ker}}
\newcommand{\ha}{\hat a}
\newcommand{\cS}{\mathcal S}
\newcommand{\mink}{\mathbb M}
\newcommand{\Bkg}{\Box_{\mathrm{KG}}}
\newcommand{\cK}{\mathcal K}
\newcommand{\cO}{\mathcal O} 
\newcommand{\intm}{\int_{-}}
\newcommand{\intp}{\int_{+}}
\newcommand{\gOjsl}{\Omega^{l}_{js}} 
\newcommand{\djsa}{\Delta^{a}_{js}}
\newcommand{\dja}{\Delta^{a}_{j(n-j)}} 
\newcommand{\UAS}{U^{A/S}}
\newcommand{\FAS}{F^{A/S}}
\newcommand{\GAS}{G^{A/S}}
\newcommand{\HAS}{H^{A/S}}
\newcommand{\QAS}{Q^{A/S}}
\newcommand{\PAS}{P^{A/S}}
\newcommand{\uAS}{u^{A/S}}
\newcommand{\kAS}{k^{A/S}}
\newcommand{\Du}{\Delta u}
\newcommand{\Dv}{\Delta v}
\newcommand{\mathsym}[1]{{}}
\newcommand{\unicode}[1]{{}}
\newcommand{\ka}{k_A}
\newcommand{\ks}{k_S}
\newcommand{\cc}{{\cal C}}
\newcommand{\cm}{{\cal M}}
\newcommand{\id}{{\mathbb I}}
\newcommand{\diam}{\mathcal{D}}
\newcommand{\wsjc}{W^c_{SJ}}
\newcommand{\wsj}{W_{SJ}}
\newcommand{\wmink}{W^{\mathrm{mink}}_0}
\newcommand{\wminkm}{W^{\mathrm{mink}}_m}
\newcommand{\wrind}{W^{\mathrm{rind}}_0}
\newcommand{\wrindm}{W^{\mathrm{rind}}_m}
\newcommand{\wmirr}{W^{\mathrm{mirror}}_0}
\newcommand{\wmirrm}{W^{\mathrm{mirror}}_m}
\newcommand{\wa}{A_{\mathrm{I}}}
\newcommand{\waa}{A_{\mathrm{II}}}
\newcommand{\waaa}{A_{\mathrm{III}}}
\newcommand{\waaaa}{A_{\mathrm{IV}}}
\newcommand{\ws}{S_{\mathrm{I}}}
\newcommand{\wss}{S_{\mathrm{II}}}
\newcommand{\wsss}{S_{\mathrm{III}}}
\newcommand{\wssss}{S_{\mathrm{IV}}}
\newcommand{\emc}{\e_m^{center}}
\newcommand{\emcr}{\e_m^{corner}}
\begin{document}
\title{Sorkin-Johnston vacuum for a massive scalar field in the 2D causal diamond}
\author{Abhishek Mathur\footnote{\it abhishekmathur@rri.res.in} and Sumati Surya\\{\it {\small{ Raman Research Institute, CV
    Raman Ave, Sadashivanagar, Bangalore, 560080, India}}}}
\date{}
\maketitle
\begin{abstract}
We study the massive scalar field Sorkin-Johnston (SJ) Wightman function $\wsj$  restricted to a flat 2D causal diamond
$\diam$ 
of linear dimension $L$.  Our approach is two-pronged. In the first, we solve the central SJ eigenvalue
problem explicitly in the small mass regime, up to order  $(mL)^4$.  This allows us to  formally construct  $\wsj$ up to
this order. Using a combination of analytical and numerical methods, we obtain
expressions for $\wsj$ both in the center and the corner of $\diam$, to leading order. We find that in the center,
$\wsj$ is more like  the massless Minkowski Wightman function $\wmink$  than the massive one $\wminkm$, while in the corner
it corresponds to  that of the massive mirror $\wmirrm$.  In the second part, in order to explore larger masses, we perform  numerical
simulations using a causal set approximated by a flat 2D causal diamond. We find that in
the
center of the diamond the causal set SJ Wightman function $\wsjc$
resembles $\wmink$ for small masses, as in the continuum,  but beyond a critical value $m_c$ it resembles $\wminkm$, as expected. Our calculations suggest
that unlike  $\wminkm$,  $\wsj$ has a well-defined massless limit, which mimics the behavior of the Pauli Jordan function underlying the SJ construction.  In the corner of the diamond,
moreover, $\wsjc$ agrees with  $\wmirrm$ for all masses, and not, as might be expected, with the Rindler
vacuum.  
\end{abstract}

\section{Introduction}

The standard approach to quantum field theory is inherently observer dependent, as is evident from the Unruh effect for
accelerating observers in  Minkowski spacetime.  In Minkowski spacetime, due to its high degree of
symmetry, there is a preferred  family of inertial observers and hence a unique Poincare invariant vacuum. This
Minkowski vacuum is considered the bedrock of quantum field theory, and  its Poincare invariance can be used to
explain many aspects of the theory.

However,  in a generic curved spacetime no such preferred family of
observers exists which can be used to  single out a preferred vacuum state.  This suggests that the state plays a
subsidiary role in the theory. This is the approach taken in algebraic quantum field theory, where a primary role is
played by the algebra of operators. The choice of state is relegated to a choice of representation of this algebra,
which need not be coordinate invariant.  A 
proposal for a unique vacuum state, the \emph{SJ
  vacuum},   for a  free scalar field theory was developed by Sorkin
and Johnston \cite{sorkin,Johnston:2009fr} for a bounded, globally hyperbolic region $M$ of a  spacetime.
The Pauli-Jordan integral operator, defined as
\begin{equation}
i \hD \circ f(X) \equiv \int_M i\Delta (X,X') f(X') \,  dV_{X'} \label{eq:pjop} 
\end{equation} 
is self adjoint in $M$. Here, $\Delta(X,X')$, is the covariantly defined Pauli-Jordan function (which
is the difference in the retarded and advanced Green  functions) and $dV_X$ is the volume element.  The  associated SJ Wightman function $\wsj$ (or two
point function)  is then simply the positive part of $i \hD$.  $\wsj$ can be shown to be the unique vacuum which
  satisfies the following conditions \cite{sorkin, Afshordi:2012ez}
\bea
W(X,X')-W(X',X) &=& i\Delta(X,X')\quad\mathrm{Commutator \; condition} \nonumber\\ 
W(X,X')-W^*(X',X)&=&0\quad\mathrm{Hermiticity} \nonumber\\
\int_M dV_X\, dV_Y \, f^*(X)W(X,Y)f(Y)&\geq&0 \quad \mathrm{Positive \; semidefinite} \nonumber\\
\int_M dV_{X'} \, W(X,X')W(X'',X')&=&0 \quad \mathrm{orthogonal\;support}. \label{eq:wsjconditions}
\eea
$\wsj$ can be explicitly constructed from the spectral decomposition of $i\hD$, where the spectrum of $i \hD$ is given
by the integral eigenvalue equation
\begin{equation}
i \hD \circ u(X) = \lambda \, u(X). 
  \end{equation} 
  This is what we refer to as the ``central eigenvalue  problem''  in the SJ approach.

However the integral form makes it a challenging task  to find solutions even in simple cases. As a result there
are very few
cases in which $\wsj$ has been obtained explicitly. These include the  massless free scalar SJ vacuum in a 2D flat
causal diamond \cite{Afshordi:2012ez,johnston}, a patch of trousers spacetime \cite{Buck:2016ehk} and the ultrastatic slab spacetime \cite{fewster2012}.
In this work, we study
the SJ vacuum for a massive free scalar field in the 2D flat causal diamond $\diam$ of length $2L$,  both in
the continuum and 
on a  causal set $\cc_\diam$ obtained from sprinkling into $\diam$.

In the continuum  we solve the central SJ eigenvalue problem explicitly  in the small mass approximation  keeping terms only up to $\cO(m^4)$,
with $m^4\ll1$ (in dimensionless units, with $L=1$). The eigenfunctions and eigenvalues so obtained 
reduce  to their massless counterparts when $m=0$ \cite{Afshordi:2012ez}.  This allows us to formally construct
$\wsj$ in $\diam$.

As in \cite{Afshordi:2012ez} we consider two  regimes of interest: one in the center of the diamond, and the other at the
corner.  In a small central region $\diam_l$ of size $l$,   we find analytically that $\wsj$ resembles the {\it massless} Minkowski vacuum $\wmink$ up to a
small mass-dependent constant $\emc$, rather than the massive Minkowski vacuum $\wminkm$.  In the corner,  
$\wsj$ resembles the massive mirror vacuum  $\wmirrm$, with the difference depending on a small  mass-dependent constant $\emcr$,
rather than the expected agreement with the massive Rindler vacuum $\wrindm$.   Both $\emc$ and $\emcr$ are the errors
that arise in the approximation of a quantization condition which is a mass dependent transcendental equation,  and
are therefore non-trivial to calculate analytically.  

In order to find  $\e_m^{center}, \e_m^{corner}$, we evaluate 
$\wsj$ numerically using a convergent  truncation $\wsj^t$ of the mode-sum. The
calculations show that $\emc, \emcr$ contribute negligibly to $\wsj$ both in the center and the corner.   
This confirms that for small mass $\wsj$ corresponds to the massless  
Minkowski vacuum. This behavior is unexpected, and suggests that at least  in this small mass approximation $\wsj$ does not
satisfy the expected massive Poincare invariance of the vacuum but rather the massless Poincare invariance. In the corner, again $\e_m^{corner}$ is found to be small, and confirms that  $\wsj$ resembles  $\wmirrm$ rather than
$\wrindm$.   

We then examine the behavior of this truncated $\wsj^{t}$ in a slightly enlarged  region in the center. We find that it continues to  differ  from $\wminkm$, while agreeing with $\wmink$ at least up to $l\sim 0.1$.   In an
enlarged corner region  $\wsj$ there is a marked deviation from $\wmirrm$, but it still does not resemble the Rindler
vacuum.

In the next part of this work we obtain $\wsjc$ numerically  for a causal set $\cc_\diam$ obtained by sprinkling into $\diam$, for a
range of masses.   We
  find that in the small mass regime $\wsjc$ agrees with our analytic calculation of 
  $\wsj$ in the center of the diamond and therefore resembles $\wmink$. This means that it  {\it differs} from $\wminkm$  in the small mass
regime. However, as the mass is increased, there is a cross-over
point at which the massless and massive Minkowski vacuum coincide. This occurs when the mass $m_c\equiv 2\cof \sim
0.924$, where $\cof \sim 0.462$ is the IR cut-off for the massless vacuum calculated in \cite{Afshordi:2012ez}.  For $m \geq
m_c$, $\wsjc$ then tracks the massive Minkowski vacuum instead of the massless Minkowski vacuum.   
In the corner of the diamond, the causal set $\wsjc$ looks like the mirror vacuum and not the Rindler vacuum
for all masses. 

Our calculations suggest that, as in the case of the  de Sitter SJ vacuum studied in \cite{Surya:2018byh},  the massive $\wsj$ has a well defined
$m\rightarrow 0$ limit, unlike  $\wminkm$.  A possible reason for this is that
the SJ vacuum is built from the Green function which is a continuous function of $m$ even as $m\rightarrow 0$. The behavior of
$\wsj$ for $m>0$  is also curious.   For  $\wmink$, $\cof$   sets a scale and 
dominates in the small $m$ regime, while for large $m$, the opposite is true.  At $m_c$, $\wmink$ and $\wminkm$
coincide at small distance scales, so that  $\wsj$ tracks  $\wmink$ for $m<m_c$ and  $\wminkm$ 
for $m>m_c$ in a continuous fashion.

Whether this  unexpected small mass behavior of $\wsj$ is the result of finiteness of $\diam$ or an intrinsic feature
of the 2D SJ vacuum is unclear at the moment.  Further examination of the massive SJ vacuum in different spacetimes
should shed light on these questions.   The mass dependent behavior in the 2D causal diamond echoes that in 4d de Sitter
spacetime \cite{Surya:2018byh}.  For de Sitter spacetime  it is known that there is no massless de Sitter invariant
vacuum, and that the Mottola-Allen vacua do not have an $m \rightarrow 0$ limit. However, for a causal set
that is approximated by de Sitter spacetime $\wsj^c$ seems to behave very differently, and in  particular,   does have a
well defined $m \rightarrow 0 $ limit.   Understanding how these differences in behavior between the SJ and the standard
vacua manifest themselves in the conditions Eqn (\ref{eq:wsjconditions}) should shed some light. However this is beyond
the scope of the present work.  

We begin in Sec.~\ref{sec:sj} with a short introduction to the SJ approach to quantum field theory for free scalar
field in a bounded globally hyperbolic spacetime.  In Sec.~\ref{sec:soln} we set up the SJ eigenvalue problem for the massive
scalar field in $\diam$ and find the SJ spectrum in the small mass limit to $\cO(m^4)$.  Sec.~\ref{sec:wightmann}
contains the analytic and numerical calculations of $\wsj$ in different regions of $\diam$. In Sec.~\ref{sec:causet} we show the results of simulations of the causal set SJ vacuum $\wsjc$ for a range of
masses.  We then compare  $\wsjc$ with the analytical calculation $\wsj$ in the small mass regime, as well as  with the standard
vacua  in the large mass regime, both in the center and the corner of the
diamond for small and large values of $m$. We end with a brief discussion of our results in Section
\ref{sec:conclusions}.  Appendixes \ref{expressions}, \ref{app:speigel} and \ref{sec:wight-app} contain the details of many of the calculations. In Appendix \ref{sec:rindler} we present a trick to get the 2D Rindler vacuum from the SJ prescription.

\section{The SJ prescription}
\label{sec:sj} 

For a free scalar field $\hp$, with
Gaussian vacuum state $\vac$, the two point function 
\be
W(X,X')\equiv\left<0\left|\hp(X)\hp(X')\right|0\right> \label{vev}
\ee
contains all the information about the theory. In the standard route to quantization $\vac$ is itself defined using an 
observer dependent mode
decomposition of $\hp(x)$. The absence of a preferred class
of observers for a  general curved spacetime $(M,g)$ means that this mode decomposition does not lead to a preferred choice
of $\vac$ and thence $W(X,X')$. 

The SJ prescription provides an observer independent mode decomposition  $\hp$ defined in a compact globally hyperbolic
 spacetime region \cite{sorkin,Johnston:2009fr,Afshordi:2012ez,Buck:2016ehk,fewster2012,aas,Brum:2013bia,Avilan:2014hra}.  Instead of an equal time 
commutation relation, it uses the covariant 
Peierls bracket  
\be
\left[\hp(X),\hp(X')\right]=i\Delta(X,X') \label{peierls}, 
\ee
where the Pauli Jordan function is given by 
\be
i\Delta(X,X')=i \left( G_R(X,X')-G_A(X,X') \right)
\ee
and $G_R(X,X'),G_A(X,X')$ are the retarded and advanced Green functions respectively. $i\Delta(X,X')$ is therefore
imaginary and antisymmetric.  

The Pauli-Jordan operator is an  integral operator,  Eqn (\ref{eq:pjop})  on the space $\cF(M,g)$ of bounded functions
in $(M,g)$ (see \cite{wald}), whose $\mL^2$ inner product  is 
\be
\left(f,g\right)\equiv\int_M dV_X f^*(X)g(X). 
\ee
 $i\hD$ is therefore self adjoint on $\cF(M,g)$. 
The eigenvalues of $i\hD$ are therefore real and come in positive and negative pairs 
\bea
i\hD \circ u_k=\lambda_ku_k \nonumber \\
i\hD \circ u_k^*=-\lambda_ku_k^*,
\label{eq:sjmodes} 
\eea
where $u_k \in \im(i\hD)$. The normalized modes $u_k^{SJ}=\sqrt{\lambda_k} u_k$ are referred to as the \emph{SJ modes}. Since the $\{u_k\}$ are a complete orthonormal basis in $\im(i\hD)$, they give the following spectral decomposition
\be
i\Delta(X,X')=\sum_k\lambda_k\left(u_k(X)u_k^*(X')-u_k^*(X)u_k(X')\right).
\label{eq:spec}
\ee
It can be shown that \cite{fewster2012,wald,sorkin17} 
\be
\im(i\hD)=\ker(\nabla_\mu\nabla^\mu-m^2)\label{ker-im}.
\ee
Thus the SJ modes are also solutions of the KG equation. 

The SJ proposal is to obtain $\wsj$ from $i\Delta$, without reference to preferred observers. Using the properties
of $\wsj$ given in  Eqn.~(\ref{eq:wsjconditions}), it follows that
\be
\wsj=\mathrm{Pos}(i\hD)\Longleftrightarrow\wsj=\frac{1}{2}\left(i\hD+\sqrt{-\hD^2}\right)\Longleftrightarrow\wsj(X,X')=\sum_k\lambda_ku_k(X)u_k^*(X').
\ee
The SJ mode expansion of $\hp(X)$ is then
\begin{equation}
  \hp(X)=\sum_k\sqrt{\lambda_k}\left(\ha_ku_k(X)+\ha_k^\dagger u_k^*(X)\right),
\end{equation} 
with the vacuum $\vac_{SJ}$ defined by $\ha_k\vac_{SJ}=0$.
 
In the discussion above, there is an implicit assumption  that $i\hD$ is self-adjoint. This is guaranteed when $(M,g)$ is bounded, but not so
when this condition is lifted. To rigorously show that $\vac_{SJ}$ reduces to the various known vacua, including the
Minkowski vacuum, it is important to take this into account. In \cite{aas} a mode comparison argument was used to show
that the SJ vacuum in Minkowski spacetime is the Minkowski vacuum.  However, as argued in \cite{Surya:2018byh} a mode comparison
may not indicate the equivalence of vacua.

A more careful approach was adopted  in \cite{Afshordi:2012ez} where the massless SJ vacuum was calculated explicitly in a 2D causal diamond 
$\diam$ of length $2L$.  Evaluating $\wsj$ in the center of the diamond, i.e., with  $|\vec x -\vec x'| <<
L $  and $|\vec x|, |\vec x'| <<L $ it  was shown that $\vac_{SJ} \sim \vac_{mink}$.  Thus, away
from the boundaries, the massless SJ vacuum is indeed the Minkowski vacuum. The goal of this work is to perform a similar
calculation for the massive case in the finite diamond, in which the SJ construction  is well defined. 
  
Important to this calculation is not only the boundedness of $i\hD$ which ensures self-adjointness, but also its
Hilbert-Schmidt property using which the completeness of its  eigenfunctions can be checked.
In higher even dimensions, the massless retarded  Green's function has  $\delta$ functions. While  $i\hD$ 
is self-adjoint for bounded spacetime region, it is not Hilbert Schmidt.

\section{The Spectrum of the Pauli Jordan Function: The small mass limit}
\label{sec:soln}

As we have stated earlier, the SJ modes Eqn.~(\ref{eq:sjmodes}) are also solutions of the KG equation. A natural starting  point for
constructing these modes is therefore to start with a complete set of solutions $\{ s_k\}$ in the space $\cS =
\ker(\Bkg)$ where $\Bkg\equiv \Box-m^2$,  and to find the action of $i \hD$ on this set. In light-cone coordinates the 2D Klein Gordon equation in Minkowski spacetime takes the simple form 
\be
\Bkg(u,v) \phi(u,v) \equiv \left(2\partial_u\partial_v+m^2\right)\phi(u,v)=0.  \label{kgeqn}
\ee
where 
\be
u=\frac{1}{\sqrt{2}}\left(t+x\right), \;\;\;\ \;\;\;v=\frac{1}{\sqrt{2}}\left(t-x\right). 
\ee
Thus, for $m=0$ any differentiable  function $\psi(u)$ or $\xi(v)$ is in $\ker(\Bkg (u,v))$. 

One can generate a larger class of solutions starting  from a given differentiable function $\psi(u)$. The infinite sum 
\be
\phi(u,v)\equiv \sum_{n=0}^\infty\frac{(-1)^nm^{2n}}{2^nn!}v^n\int^n\psi(u),
\ee
with  $\int^n\psi(u)\equiv\int du\int du\dots\int du\psi(u)$, can be seen to belong to $\ker(\Bkg)$. Similarly one can generate solutions starting with a differentiable function $\xi(v)$.  Different choices of $\psi(u),\xi(v)$ gives
different $\phi(u,v)$. 

From the { Weierstrass theorem}, we know that any continuous function $\psi(u)$ in a bounded interval in $u$ can be
written as $\psi(u)=\sum_{n}a_nu^n$ for some $a_n's$. Hence  a natural class of solutions is generated by  $\psi(u)=u^l$,
\be
Z_l(u,v) \equiv
\sum_{n=0}^\infty\frac{(-1)^nm^{2n}l!}{2^nn!(n+l)!}u^{n+l}v^n=\frac{2^{l/2}l!}{m^l}\left(\frac{u}{v}\right)^{l/2}J_l\left(m\sqrt{2uv}\right)
\label{eq:serieszl},
\ee
for $l$ a whole number. Thus the SJ modes,  can in general be written as a sum over $Z_l(u,v)$
and $Z_l(v,u)$ for an appropriate set of $l$ values.  
Since plane waves are an important class of solutions, we note that starting from a function $\psi(u)=e^{au}$ for some
constant $a$ the plane wave solutions 
\be
U_a(u,v) \equiv \sum_{n=0}^\infty\frac{(-1)^nv^nm^{2n}}{2^nn!a^n}e^{au}=e^{au-\frac{m^2}{2a}v} \label{eq:seriesua}
\ee
and similarly, $U_a(v,u)$, can be obtained.

Before we proceed with the construction of the SJ modes, it will be useful to look at its following property.
\vskip 0.1in
\begin{claim}
In $\diam$ the SJ modes can be arranged into a complete set of eigenfunctions, each of which is either symmetric or
antisymmetric under the interchange of $u$ and $v$ coordinates.
\end{claim}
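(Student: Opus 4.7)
The plan is to exploit the reflection symmetry of $\diam$ under $x\mapsto -x$, which in the light-cone coordinates of the excerpt is just the swap $u\leftrightarrow v$. I would define the parity operator $P$ on $\mL^2(\diam)$ by $(Pf)(u,v)\equiv f(v,u)$. Since $\diam$ is invariant under this swap and the measure $du\,dv$ is preserved, $P$ is a well-defined unitary involution on $\mL^2(\diam)$; in particular its only eigenvalues are $\pm 1$, and its eigenfunctions are precisely the functions symmetric or antisymmetric under $u\leftrightarrow v$.

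The key step is to show that $P$ commutes with $i\hD$. In 2D Minkowski, the Pauli--Jordan function depends only on the invariant combination $(u-u')(v-v')$ (together with the sign structure determined by $\mathrm{sgn}(t-t')$ on the causal past/future), so $i\Delta(u,v;u',v')=i\Delta(v,u;v',u')$. Combined with the reflection invariance of the integration domain, a change of variables $\tilde u=v'$, $\tilde v=u'$ in $(i\hD\circ Pf)(u,v)$ produces $(P\circ i\hD f)(u,v)$, giving $[P,i\hD]=0$.

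Next I would invoke the spectral theorem. As noted in the excerpt, on the bounded region $\diam$ the operator $i\hD$ is self-adjoint and (in 2D) Hilbert--Schmidt, so $\im(i\hD)$ admits an orthonormal basis of eigenfunctions of $i\hD$ with each nonzero eigenspace finite-dimensional. Because $P$ is a unitary involution commuting with $i\hD$, it preserves each eigenspace of $i\hD$. Within each finite-dimensional eigenspace one can therefore diagonalize $P$ simultaneously, obtaining a basis of $\pm 1$-eigenvectors of $P$, i.e.\ symmetric and antisymmetric functions under $u\leftrightarrow v$. Assembling these bases across all eigenspaces yields the desired complete set.

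The only delicate point is the handling of degeneracies. If the spectrum of $i\hD$ were simple, the $P$-eigenvalue of each mode would be forced and the symmetric/antisymmetric split would be automatic; in general, however, eigenvalues may be degenerate (typically through $u\leftrightarrow v$ related pairs), and one needs the finite-dimensionality of each nonzero eigenspace together with $P^2=\id$ to carry out the internal diagonalization. This is essentially elementary linear algebra once commutation has been verified, so I expect the main work to lie in the commutation check, with the Hilbert--Schmidt property of $i\hD$ doing the rest.
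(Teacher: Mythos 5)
Your proof is correct, and its engine is the same as the paper's: the kernel $i\Delta(u,v;u',v')$ is invariant under the simultaneous swap $\{u,u'\}\leftrightarrow\{v,v'\}$ (both $J_0\left(m\sqrt{2\Du\Dv}\right)$ and $\theta(\Du)+\theta(\Dv)-1$ are), and the integration domain is the square $[-L,L]^2$ in $(u,v)$, which is swap-invariant. Where you differ is in packaging: the paper works pointwise with a single eigenfunction $u_k$, shows directly by the change of variables that $v_k(u,v)=u_k(v,u)$ is again an eigenfunction with the same $\lambda_k$, and then simply forms $u_k(u,v)\pm u_k(v,u)$; it never introduces a parity operator, never invokes the spectral theorem or the Hilbert--Schmidt property in the proof, and leaves the completeness of the rearranged set (and the possibility that one of the two combinations vanishes when $u_k$ already has definite parity) implicit, since $u_k$ is recovered as half the sum of the two combinations. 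You instead promote the swap to a unitary involution $P$, verify $[P,i\hD]=0$, and use simultaneous diagonalization within each finite-dimensional nonzero eigenspace. Your route is slightly heavier but buys a clean, explicit treatment of degeneracies and of the word ``complete'' in the claim; the paper's route is shorter and more concrete but, as written, proves the existence of symmetric/antisymmetric eigenfunctions at each eigenvalue rather than spelling out the basis rearrangement that your eigenspace argument makes automatic.
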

\begin{proof}
Let $u_k$ be an eigenfunction of $i\hD$ with eigenvalue $\lambda_k\neq 0$ i.e.
\be
i\hD\circ u_k=\lambda_ku_k \label{eqn:proof1}.
\ee
Define an operator $\hD'$ with integral kernel $\Delta'(u,v;u',v')=\Delta(v,u;v',u')$ and let $v_k$ such that $v_k(u,v)=u_k(v,u)$. Interchanging $u$ and $v$ since $u,v\in[-L,L]$, Eqn.~(\ref{eqn:proof1}) can be rewritten as
\be
i\hD'\circ v_k=\lambda_kv_k \label{eqn:proof2}.
\ee
Since $\Delta(u,v;u',v')$ is symmetric under $\{u,u'\}\leftrightarrow \{v,v'\}$, this implies that
\be
i\hD\circ v_k=i\hD'\circ v_k=\lambda_kv_k \label{eqn:proof3}.
\ee
Therefore $v_k$ is also an eigenfunction of $i\hD$ with same eigenvalue $\lambda_k$. This means that, the symmetric combination $u^S_k(u,v)=u_k(u,v)+u_k(v,u)$ and the antisymmetric combination $u^A_k(u,v)=u_k(u,v)-u_k(v,u)$ are also eigenfunctions of $i\hD$ with eigenvalue $\lambda_k$.
\end{proof}

In $\mink^2$ for $m=0$ the natural choice of solutions is the set of plane wave modes $\{e^{iku}, e^{ikv}\}$. However, in the
finite causal diamond, the constant function is also a solution. The explicit form of the corresponding SJ modes are
given  in Johnston's thesis \cite{johnston}. There are two
sets of eigenfunctions. The first set found by Johnston are the $f_k=e^{iku}-e^{ikv}$ modes with $k =n \pi/L$ and are
antisymmetric with respect to $u\leftrightarrow v$. The
second set  $g_k=e^{iku}+e^{ikv}-2\cos(kL)$, were found by Sorkin and satisfy the more complicated quantization condition $\tan(kL)=2kL$. These are symmetric with respect to $u\leftrightarrow v$.  The eigenvalues for each set are $\pm L/k$.

We now proceed to set up the calculation for the central SJ eigenvalue problem.  We will find it useful to work with the
dimensionless quantities.
\be
  m L \rightarrow m,\;kL\rightarrow k,\;\frac{u}{L}\rightarrow u,\;\frac{v}{L}\rightarrow v,\;\frac{u'}{L}\rightarrow u',\;\frac{v'}{L}\rightarrow v'.
\ee
The  massive Pauli Jordan function in $\mink^2$ is  
\be
i\Delta(u,v;u',v')=-\frac{i}{2}J_0\left(m\sqrt{2\Du\Dv}\right)\left(\theta (\Du)+\theta(\Dv)-1\right) \label{idelta}
\ee
where $\Du=u-u', \Dv=v-v'$ and $\theta(x)$ is the Heaviside function.  
The SJ modes are thus given by (Eqn.~\ref{eq:sjmodes}) 
\be
-\frac{iL^2}{2}\int_{-1}^1du'dv'J_0\left(m\sqrt{2\Du\Dv}\right)\biggl(\theta(\Du)+\theta(\Dv)-1\biggr) u_k(u',v')=\lambda_ku_k(u,v).
\ee
We will find it useful to make the change of variables $\Du=p, \Dv=q$ so that the above expression becomes  
\be
\frac{iL^2}{2}\left(\intm dp dq -\intp dp dq \right)  J_0\left(m\sqrt{2pq}\right)u_k(u-p,v-q)=\lambda_ku_k(u,v),  
\ee
where we have  used the short-hand $\intm dp dq \equiv \int_0^{u-1}dp\int_0^{v-1}dq$ and $\intp dp dq \equiv
\int_0^{u+1}dp\int_0^{v+1}dq$. 
Our strategy is to begin with the action of $i\hD$ on the   symmetric and antisymmetric combinations of the
$Z_l(u,v)$ and $U_a(u,v)$ solutions defined above,
\begin{eqnarray} 
  U^A_{a}(u,v)\equiv U_{a}(u,v)-U_{a}(v,u), &\;\;\;&  U^S_{a}(u,v)\equiv U_{a}(u,v)+U_{a}(v,u),\nonumber \\ 
        Z^A_{l}(u,v)\equiv Z_{l}(u,v)-Z_{l}(v,u), &\;\;\;&  Z^S_{l}(u,v)\equiv Z_{l}(u,v)+Z_{l}(v,u).     \label{eq:symasym}                                            
\end{eqnarray} 
so that the  general form for the two sets $u^{A/S}$ of SJ modes is given by 
\begin{equation} 
 u^{A/S}_{\vec{a},\vec{l}}(u,v) \equiv  \sum_{a \in \vec{a}} \alpha^{A/S}_aU^{A/S}_a(u,v) + \sum_{l \in \vec{l}}
 \beta^{A/S}_l Z^{A/S}_l(u,v).
\end{equation}
Here $\vec{a}, \vec{l}$ denote  set of values for  $a$ and $l$ which satisfy  quantization conditions. Of course each $U_a(u,v)$ is itself an infinite sum over  $Z_l(u,v)$, but
we nevertheless consider it separately, taking our cue from the massless calculation. 

The expressions 
\begin{eqnarray} 
i\hD\circ U_a(u,v) &=&\frac{iL^2}{2}\left( \intm dp dq-\intp dp dq\right) J_0\left(m\sqrt{2pq}\right)
                          U_a^*(p,q)U_a(u,v), \nonumber \\
i\hD\circ Z_l(u,v) &=&  \frac{iL^2}{2}\left( \intm dp dq-\intp dp dq\right) J_0\left(m\sqrt{2pq}\right)
                         Z_l(u-p,v-q) \label{integration2} 
\end{eqnarray} 
are in general not easy to evaluate and  subsequently manipulate in order to obtain the SJ modes.  We instead begin by
looking for solutions order by order in $m^{2}$ assuming that for some $n$, $m^{2n}<<1$.\footnote{The series expansion
  of $U^{A/S}_{ik}$ in the SJ modes for small $m$ can be truncated to a finite order of $m^2$ if and only if $k$ is of
  the order of unity or higher. However, this is the case for small $m$, since small $k$ corresponds to wavelengths much
  larger than the size of the diamond.}
We use the series form of $Z_l(u,v)$ in Eqn.~(\ref{eq:serieszl}) and $U_a(u,v)$ in Eqn.~(\ref{eq:seriesua})
  as well as 
\be
J_0\left(m\sqrt{2pq}\right) =\sum_{n=0}^\infty\frac{(-1)^nm^{2n}}{2^n(n!)^2}p^nq^n. 
\label{eq:seriesbess}
\ee
As we will show, for $n=4$, we find that, to $\cO(m^4)$ the two families of eigenfunctions, antisymmetric and symmetric are 

{\bf Antisymmetric:}
\be
u^A_k(u,v)=\left[U^A_{ik}(u,v)-\cos(k)\left(\left(\frac{im^2}{2k}-\frac{im^4(6+k^2)}{24k^3}\right)Z^A_1(u,v)-\frac{m^4}{4k^2}Z^A_2(u,v)\right)\right]+\cO(m^6), \label{eq:asym}
\ee
with eigenvalue $-\frac{L^2}{k}$ with $k \in \cK_A$ satisfying the quantization condition 
\be
\sin(k)=\left(\frac{m^2}{k}+\frac{m^4}{12k}\left(1-\frac{3}{k^2}\right)\right)\cos(k)+\cO(m^6).\label{eq:qcondA}
\ee
Solving for $k$, order by order in $m^2$ up to $\cO(m^4)$, as shown in Sec. \ref{sec:compl},  gives $k=\ka(n)$, where
\be
\ka(n)\equiv n\pi+\frac{m^2}{n\pi}+m^4\left(\frac{1}{12n\pi}-\frac{5}{4n^3\pi^3}\right)+\cO(m^6), \label{eq:qcondka}
\ee
where $n\in{\mathbb Z}$ and $n \neq 0$.

{\bf Symmetric:}
\bea
u^S_k(u,v)&=&\left[U^S_{ik}(u,v)-\cos(k)\left(\left(1+\frac{m^2}{2}-\frac{m^4}{8k^2}(2-9k^2)\right)Z^S_0(u,v)\right.\right.\nonumber\\
&&\left.\left.+\left(\frac{3im^2}{2k}-\frac{im^4}{24k^3}(6-31k^2)\right)Z^S_1(u,v)-\frac{m^4}{8k^2}(4-k^2)Z^S_2(u,v)\right)\right]+\cO(m^6), \label{eq:sym}
\eea
with eigenvalue $-\frac{L^2}{k}$, where $k \in \cK_S$ satisfies
\be
\sin(k)=\left(2k-\frac{m^2}{k}(1-2k^2)+\frac{m^4}{12k^3}(3-29k^2+28k^4)\right)\cos(k)+\cO(m^6). \label{eq:qcondS}
\ee
Solving for $k$, order by order in $m^2$ up to $\cO(m^4)$, as shown in Sec. \ref{sec:compl}, gives $k=\ks(k_0)$, where
\be
\ks(k_0)\equiv k_0+m^2\frac{1-2{k_0}^2}{k_0(1-4{k_0}^2)}+m^4\frac{(3-4{k_0}^2)(-5+35{k_0}^2-40{k_0}^4+16{k_0}^6)}{12{k_0}^3(1-4{k_0}^2)^3}+\cO(m^6) \label{eq:qcondks},
\ee
where $k_0$ are the solutions of $\sin(k)=2k\cos(k)$.

We plot these eigenvalues in Fig.~\ref{fig:eigenvalue} for m=0, 0.2 and 0.4. In the expressions for the eigenfunctions,  Eqns (\ref{eq:asym}) and (\ref{eq:sym}), it is to be noted that we have kept 
$U^{A/S}_{ik}$ and $Z^{A/S}_l$ as they are, rather than use their expansion to $\cO(m^4)$.   The reason for this is to
remind ourselves that they are solutions of the Klein Gordon equation. Note that in  Eqn.~(\ref{eq:asym}) and
Eqn.~(\ref{eq:sym}), we keep terms only up to $\cO(m^4)$ within the square bracket. In Sec. \ref{sec:compl} we show that
these form a complete set of orthonormal modes.

Here we have moved away from the $f_k$ and $g_k$ notation of \cite{Afshordi:2012ez,johnston} to $u^A_k$ and $u^S_k$ for
the antisymmetric and symmetric SJ modes respectively.

\begin{figure}[h]
\centerline{\begin{tabular}{ccc}
\includegraphics[height=5cm]{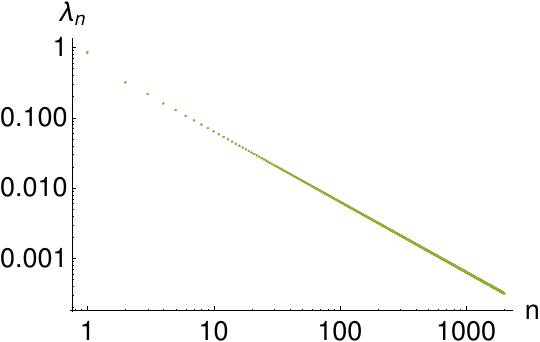}&
\includegraphics[height=5cm]{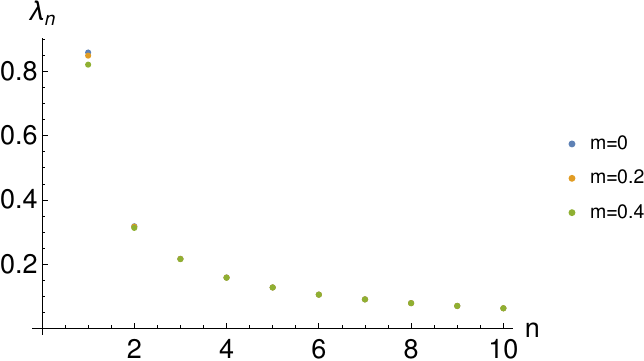}\\
(a)&(b)
\end{tabular}}
\caption{(a):A log-log plot of the SJ eigenvalues $\lambda_n$ vs $n$ for $m=0, 0.2$ and $0.4$, (b): a plot of $\lambda_n$ vs $n$ for small $n$. As one can see, the eigenvalues for $m=0.2$ and $0.4$ are barely distinguishable from $m=0$, except for the very smallest $n$ values.}
\label{fig:eigenvalue}
\end{figure}

\subsection{Details of the calculations of SJ modes}
We now show the calculation in broad strokes below, leaving some of the details to the Appendix \ref{expressions}. We begin by  reviewing the massless case. Here  $Z_l(u,v)$ reduces to $u^l$ and $U_a(u,v)$ to $e^{au}$.

Operating $i \hD$ on $u^l$ or $v^l$  we find that
\begin{eqnarray} 
  i\hD_{m=0}\circ u^l&=&\frac{iL^2}{2(l+1)}\left(\left(1+(-1)^{l+1}\right)-v\left(1-(-1)^{l+1}\right)-2u^{l+1}\right), \nonumber \\
  i\hD_{m=0}\circ v^l&=&\frac{iL^2}{2(l+1)}\left(\left(1+(-1)^{l+1}\right)-u\left(1-(-1)^{l+1}\right)-2v^{l+1}\right), 
\end{eqnarray}
while on the plane wave modes  
\bea
i\hD_{m=0}\circ e^{iku}&=&
-\frac{L^2}{k}\left(e^{iku}-\cos(k)+iv\sin(k)\right), \nonumber \\
i\hD_{m=0}\circ e^{ikv}&=&
-\frac{L^2}{k}\left(e^{ikv}-\cos(k)+iu\sin(k)\right).
\eea 

Here, $k$ takes on all values including $k=0$, which is the constant solution. From the antisymmetric combination
\be
i\hD_{m=0}\circ\left(e^{iku}-e^{ikv}\right)=-\frac{L^2}{k}\left(e^{iku}-e^{ikv}-i\sin(k)(u-v)\right), 
\ee
we find the first set of massless eigenfunctions
\be
u_k^{A(0)}(u,v)\equiv e^{iku}-e^{ikv}
\ee
with  $k \in \cK_f$ satisfying the quantization condition 
\be
\sin(k)=0\;\text{or}\;k=n\pi.  \label{f condition m0}
\ee
with eigenvalues $-\frac{L^2}{k}$. The symmetric  combination
on the other hand gives 
\begin{equation} 
i\hD_{m=0}\circ\left(e^{iku}+e^{ikv}\right)=
-\frac{L^2}{k}\left(e^{iku}+e^{ikv}-2\cos(k)\right)-\frac{iL^2}{k}\sin(k)(u+v).
\end{equation} 
Since the  symmetric eigenfunction can include a  constant piece and noting that 
\begin{equation}
 \hD_{m=0}\circ c =  -icL^2(u+v), 
  \end{equation} 
we find the second set of eigenfunctions 
\be
u_k^{S(0)}(u,v)\equiv e^{iku}+e^{ikv}-2\cos(k)
\ee
with eigenvalue $-\frac{L^2}{k}$, where $k \in \cK_g$ satisfies
\be
\sin(k)=2k\cos(k). \label{g condition m0}
\ee
$\{u_k^{A(0)}\}$ and $\{u_k^{S(0)}\}$ together form a complete set of eigenfunctions of $i\Delta$ as can be shown by \cite{johnston}.

This sets the stage for the calculation of the massive SJ modes. We begin by again looking the action of $i\hD$ on the
solutions $Z_l(u,v)$ and $U_a(u,v)$,  
\begin{eqnarray} 
i\hD\circ Z_l(u,v)&=&\frac{iL^2}{2}\sum_{j,s=0}^\infty\frac{(-1)^{j+s}m^{2(j+s)}l!}{2^{l+s}(j!)^2s!(s+l)!} \gOjsl ,\label{eq:ideltaz}   \\ 
i\hD\circ U_{a}(u,v) &=&\frac{iL^2}{2} U_a(u,v) \sum_{j,s=0}^\infty\frac{(-1)^j
                            m^{2(j+s)}}{2^{j+s}(j!)^2s!a^s}\djsa(u,v),
\label{eq:genzluaexp}                             
\end{eqnarray}
where
\begin{eqnarray}
\gOjsl(u,v) &\equiv & \biggl(\intm dp\, dq - \intp dp\, dq\biggr)p^jq^j(u-p)^{l+s}(v-q)^s, \nonumber \\  
\djsa(u,v) & \equiv &\biggl(\intm dp\, dq - \intp dp\, dq\biggr)p^jq^{j+s}e^{-a p}. \label{deltajs}
\end{eqnarray}
It is useful to re-express Eqn.~(\ref{eq:genzluaexp})  as 
\begin{equation} 
i\hD\circ U_{a}(u,v) = \frac{iL^2}{2} U_a(u,v) \sum_{n=0}^\infty m^{2n} A_{a,n}(u,v), \label{eq:ideltau} 
\end{equation}
where
\begin{equation}
A_{a,n}(u,v)\equiv \sum_{j=0}^{n}\frac{(-1)^j}{2^{n}(j!)^2(n-j)!a^{(n-j)}}\dja(u,v). 
\end{equation}
This gives
\begin{equation}
  i\hD\circ U_{a}(u,v) = -\frac{iL^2}{a}U_{a}(u,v) - \frac{iL^2}{a} \sum_{n=0}^\infty m^{2n} \cF_{a,n}(u,v),
  \label{eq:genexp} 
  \end{equation} 
  where
  \begin{equation}
\cF_{a,n}(u,v) \equiv F_{a,n}(u,v)\sinh(a)+G_{a,n}(u,v)\cosh(a) ,
    \end{equation} 
with
\bea
F_{a,n}(u,v) &\equiv & \sum_{s=0}^n\sum_{j=0}^s\sum_{l=0}^j \frac{(-1)^{n-s+j}v^{n-s}\left((u+1)^{j-l}(v+1)^{s+1}+(u-1)^{j-l}(v-1)^{s+1}\right)}{2^{n+1}a^{n-j+l}(n-s)!j!(s-j)!(j-l)!(s+1)}, \nonumber\\
G_{a,n}(u,v) &\equiv & \sum_{s=0}^n\sum_{j=0}^s\sum_{l=0}^j \frac{(-1)^{n-s+j}v^{n-s}\left((u-1)^{j-l}(v-1)^{s+1}-(u+1)^{j-l}(v+1)^{s+1}\right)}{2^{n+1}a^{n-j+l}(n-s)!j!(s-j)!(j-l)!(s+1)}. \label{f and g}
\eea

Our first guess, inspired by the massless calculation, is that in order to  find the SJ modes,  we will need the antisymmetrized and symmetrized versions
of Eqns (\ref{eq:ideltaz}) and (\ref{eq:ideltau}), which we denote
by $A/S$.  As noted above, and is evident from Eqn.~(\ref{eq:genexp}), in order to obtain the SJ modes,
$\UAS_a(u,v)$  must be supplemented by a function $\HAS_a(u,v)$ made from the $Z_l(u,v)$.

Taking our cue from the massless case,  let us assume that such a function
exists, i.e., 
\be
i\hD\circ\left(U^{A/S}_{a}(u,v)+\HAS_a(u,v)\right)=-\frac{i L^2}{a}\left(U^{A/S}_{a}(u,v)+\HAS_a(u,v)\right) \label{eq:u,h},
\ee
where $k$ satisfies an appropriate quantization condition $\cK^{A/S}$.  Then, from Eqn.~(\ref{eq:genexp}) $\HAS_a(u,v)$
must satisfy 
\begin{equation}
i\hD\circ \HAS_a(u,v)+\frac{i L^2}{a}\HAS_a(u,v)-\frac{iL^2}{a} \sum_{n=0}^\infty m^{2n} \cF^{A/S}_{a,n}(u,v)=0 .
  \end{equation} 

Up to now the discussion has been general. If the expressions above could be calculated in closed form,  then one would be able to
solve the SJ mode problem for any mass $m$.  It is unclear how to proceed to do this, except order by order in $m^2$.

We now demonstrate this explicitly up to $\cO(m^4)$. We begin by taking $a=ik$ and writing Eqn.~(\ref{eq:genexp}) as
\begin{equation} 
i\hD\circ \UAS_{ik}(u,v) \approx -\frac{L^2}{k}\UAS_{ik}(u,v)-\frac{L^2}{k}\left( i\sin(k) \sum_{n=0}^\infty m^{2n} \FAS_{ik,n}(u,v)
     + \cos(k) \sum_{n=0}^\infty m^{2n} \GAS_{ik,n}(u,v)\right),\label{ideltaua}
\end{equation}
where the expressions for $F_{ik,n}(u,v)$ and $G_{ik,n}(u,v)$ for different $n$ have been calculated in  Appendix \ref{expressions}. 
The function $H^{A/S}_k(u,v)$ must therefore satisfy
\begin{equation} 
i\hD\circ H^{A/S}_{ik}(u,v)+\frac{L^2}{k}\left(H^{A/S}_{ik}(u,v)-i\sin(k) \sum_{n=0}^\infty m^{2n} \FAS_{ik,n}(u,v)-\cos(k) \sum_{n=0}^\infty m^{2n} \GAS_{ik,n}(u,v)\right)=0. \label{eq:has}
\end{equation} 

From the result for the massless case, we expect the quantization condition for $k$ to be of the general form
\be
\sin(k)=\cos(k) \sum_{n=0}^\infty m^{2n} \QAS_n(k), \label{f condition}
\ee
with $Q^A_0(k)=0$ and $Q^S_0(k)=2k$. Inserting this into Eqn.~(\ref{eq:has}) gives
\begin{equation} 
  i\hD\circ \HAS_{ik}(u,v)+\frac{L^2}{k}\HAS_{ik}(u,v)-\frac{L^2}{k}\cos(k) \left( \sum_{n=0}^\infty m^{2n} \PAS_n(u,v) \right)
  =0,
  \label{eq:hask}
  \end{equation} 
  where
  \begin{equation}
    \PAS_n(u,v) \equiv \GAS_n(u,v) + i\sum_{j=0}^n \QAS_j(k) \FAS_{n-j}(u,v).  \label{eq:pas}
    \end{equation} 
The challenge is therefore to obtain the explicit form for these expressions. Finding a general expression in this
 manner is very
 challenging, but we will now show that it can be found to $\cO(m^4)$. 

Since the $\HAS_a(u,v)$ must be constructed from the $Z_l(u,v)$, we are interested in the action of $i\hD$ on $Z_l(u,v)$
up to $\cO(m^4)$ i.e.,
\begin{equation}
i\hD\circ Z_l(u,v)=  \frac{iL^2}{2}\sum_{j,s, j+s \leq 2 } \frac{(-1)^{j+s}m^{2(j+s)}l!}{2^{l+s}(j!)^2s!(s+l)!} \gOjsl+\cO(m^6).
  \end{equation}   
We calculate this expression for  $l=0,1,2$, up to $\cO(m^4)$  in the Appendix \ref{expressions}. Using the expression of
$P_n^A(u,v)$ given in Appendix \ref{expressions}, we find that up to  $\cO(m^4)$ the antisymmetric version of  Eqn.~(\ref{eq:hask}) reduces to  
\be
\left(i\hD+\frac{L^2}{k}\right)\circ\left(H^A_{ik}(u,v)+\cos(k)\left(\left(\frac{im^2}{2k}-\frac{im^4(6+k^2)}{24k^3}\right)Z^A_1(u,v)-\frac{m^4}{4k^2}Z^A_2(u,v)\right)\right)\approx 0. 
\ee
Therefore  
\be
u^A_k(u,v)=U^A_{ik}(u,v)-\cos(k)\left(\left(\frac{im^2}{2k}-\frac{im^4(6+k^2)}{24k^3}\right)Z^A_1(u,v)-\frac{m^4}{4k^2}Z^A_2(u,v)\right)+\cO(m^6),
\ee
with eigenvalue $-\frac{L^2}{k}$ with $k \in \cK_A$ satisfying the quantization condition 
\be
\sin(k)=\left(\frac{m^2}{k}+\frac{m^4}{12k}\left(1-\frac{3}{k^2}\right)\right)\cos(k)+\cO(m^6).
\ee
Similarly using the expression of
$P_n^S(u,v)$ given in Appendix \ref{expressions} and after more painstaking algebra, 
we find that Eqn.~(\ref{eq:hask}) can  be written as

\bea
\left(i\hD+\frac{L^2}{k}\right)\circ\left(H^S_{ik}(u,v)+\cos(k)\left(\left(1+\frac{m^2}{2}-\frac{m^4}{8k^2}(2-9k^2)\right)Z^S_0(u,v)\right.\right.\nonumber\\
\left.\left.+\left(\frac{3im^2}{2k}-\frac{im^4}{24k^3}(6-31k^2)\right)Z^S_1(u,v)-\frac{m^4}{8k^2}(4-k^2)Z^S_2(u,v)\right)\right)
\approx 0 .
\eea
Therefore the symmetric eigenfunction is 
\bea
u^S_k(u,v)&=&U^S_{ik}(u,v)-\cos(k)\left(\left(1+\frac{m^2}{2}-\frac{m^4}{8k^2}(2-9k^2)\right)Z^S_0(u,v)\right.\nonumber\\
&&\left.+\left(\frac{3im^2}{2k}-\frac{im^4}{24k^3}(6-31k^2)\right)Z^S_1(u,v)-\frac{m^4}{8k^2}(4-k^2)Z^S_2(u,v)\right)+\cO(m^6),
\eea
with eigenvalue $-\frac{L^2}{k}$, where $k \in \cK_S$ satisfies
\be
\sin(k)=\left(2k-\frac{m^2}{k}(1-2k^2)+\frac{m^4}{12k^3}(3-29k^2+28k^4)\right)\cos(k)+\cO(m^6).
\ee

Unfortunately, the structure of neither the coefficients in $\uAS_k$  nor the quantization condition are enough to
suggest a generalization to all orders. One could of course proceed to the next order $\cO(m^6)$ but the calculation  gets
prohibitively more complex.  

\subsection{Completeness of the eigenfunctions}\label{sec:compl}

We now show that the eigenfunctions $\{ u_k^A| k \in \cK_A\}$ and  $\{ u_k^S| k \in \cK_S\}$  form a complete set of
eigenfunctions of $i\Delta$. If this is the case, then we can decompose $i\Delta$ as  
\be
i\Delta(u,v;u',v')=\sum_{k\in \cK_A}  -\frac{L^2}{k}  u^A_k(u,v){u^A_k}^*(u',v') + \sum_{k\in \cK_S}  -\frac{L^2}{k}
u^S_k(u,v){u^S_k}^*(u',v')+\cO(m^6), 
\ee
which implies that 
\be
\int_{S} du\, dv\,du'\, dv' |\Delta(u,v;u',v')|^2=\sum_{k\in \cK_A} \biggl(\frac{L^2}{k}\biggr) ^2 + \sum_{k\in \cK_S} \biggl(\frac{L^2
}{k}\biggr) ^2+\cO(m^6).\label{eq:trace}
\ee 

To  $\cO(m^4)$  the LHS  of Eqn.~(\ref{eq:trace}) reduces to 
\bea
&& \frac{L^4}{4}\int_{-1}^1dudv\left(\intm dp \, dq+\intp dp\, dq\right) J_0^2\left(m\sqrt{2pq}\right) \nonumber \\ 
&=&
\frac{L^4}{4}\int_{-1}^1dudv\left( \intm dp \, dq+\intp dp\, dq\right)\left(1-m^2pq+\frac{3}{8}m^4p^2q^2\right)+\cO(m^6)\nonumber\\
&=&2L^4\left(1-\frac{4}{9}m^2+\frac{1}{6}m^4\right)+\cO(m^6). 
\eea
For the RHS  $k\in \cK_{A/S}$, we make use of the expansion $\kAS \approx \kAS_0+m^2\kAS_1+m^4\kAS_2$.
For the antisymmetric quantization condition Eqn.~(\ref{eq:qcondA}) since $k_0^A=n \pi$ this gives, up to $\cO(m^4)$  
\be
m^2k_1^A+m^4k_2^A=\frac{m^2}{k_0^A}\left(1-m^2\frac{k_1^A}{k_0^A}\right)-\frac{m^4}{4{k_0^A}^3}+\frac{m^4}{12k_0^A}+\cO(m^6).
\ee
Solving the above equation for different orders of $m^2$, we get
\bea
k_1^A&=&\frac{1}{n\pi}, \\
k_2^A&=&\frac{1}{12n\pi}-\frac{5}{4n^3\pi^3},
\eea
so that 
\bea
\sum_{k \in \cK_A} L^4 \frac{1}{k^2}
&=&2L^4\sum_{n=1}^\infty\frac{1}{n^2\pi^2}\left(1-2m^2\frac{1}{n^2\pi^2}-m^4\left(\frac{1}{6n^2\pi^2}-\frac{11}{2n^4\pi^4}\right)\right)+\cO(m^6)\nonumber\\
&=&2L^4\left(\frac{1}{6}-\frac{m^2}{45}+\frac{m^4}{252}\right)+\cO(m^6).
\eea
For the symmetric contribution  Eqn.~(\ref{eq:qcondS})   up to $\cO(m^4)$ we have 
\be
\sum_{n=0}^2 m^{2n}K_n(k^S_0,k^S_1,k^S_2)+\cO(m^6)=0,
\ee
where
\bea
K_1(k^S_0,k^S_1,k^S_2)&=&\sin(k^S_0)-2k^S_0\cos(k^S_0), \nonumber\\
K_2(k^S_0,k^S_1,k^S_2)&=&\left(\frac{2{k^S_0}^2-1+k^S_1k^S_0}{k^S_0}\right)\cos(k^S_0)-2k^S_1k^S_0\sin(k^S_0),\nonumber\\
K_3(k^S_0,k^S_1,k^S_2)&=&\left(\frac{3-29{k^S_0}^2+28{k^S_0}^4+12k^S_1k^S_0}{12{k^S_0}^3}+2k^S_1+k^S_2-{k^S_1}^2k^S_2\right)\cos(k^S_0)\nonumber\\
&&+\left(\frac{k^S_1-2k^S_1k^S_0-2{k^S_0}^3}{k^S_0}-\frac{3}{2}{k^S_1}^2\right)\sin(k^S_0).
\eea
Equating the above order by order in $m^2$, we get
\bea
\sin(k^S_0)&=&2k^S_0\cos(k^S_0),\\
k^S_1&=&\frac{1-2{k^S_0}^2}{k^S_0(1-4{k^S_0}^2)},\\
k^S_2&=&\frac{(3-4{k^S_0}^2)(-5+35{k^S_0}^2-40{k^S_0}^4+16{k^S_0}^6)}{12{k^S_0}^3(1-4{k^S_0}^2)^3}.
\eea
\bea
\sum_{k \in \cK_S} L^4 \frac{1}{k^2}
&=&2L^4\sum_{k^S_0\in \cK_g}\left(\frac{1}{{k^S_0}^2}-2m^2\left(\frac{1}{{k^S_0}^4}+\frac{2}{{k^S_0}^2}-\frac{8}{4{k^S_0}^2-1}\right)\right.\nonumber\\
&&\left.+m^4\left(\frac{11}{2{k^S_0}^6}+\frac{127}{6{k^S_0}^4}+\frac{280}{3{k^S_0}^2}+\frac{32}{(4{k^S_0}^2-1)^3}+\frac{32}{(4{k^S_0}^2-1)^2}-\frac{1120}{3(4{k^S_0}^2-1)}\right)\right)+\cO(m^6).\nonumber\\ \label{eqn:gcomp}
\eea
We evaluate the above series by using the method developed in \cite{speigel} and used in \cite{Afshordi:2012ez,johnston}, details of which can be found in Appendix \ref{app:speigel}. This leads to
\be
\sum_{k^S_0\in\cK_g}\frac{1}{{k^S_0}^2}=\frac{5}{6}\;\;,\;\;\;\sum_{k^S_0\in\cK_g}\frac{1}{{k^S_0}^4}=\frac{49}{90}\;\;\text{and}\;\;\;\sum_{k^S_0\in\cK_g}\frac{1}{{k^S_0}^6}=\frac{377}{945} \label{eq:ser1}
\ee
and
\bea
\sum_{k^S_0\in\cK_g}\frac{1}{4{k^S_0}^2-1}&=&\frac{1}{4},\nonumber\\
\sum_{k^S_0\in\cK_g}\frac{1}{(4{k^S_0}^2-1)^2}&=&-\frac{1}{4}\left(\frac{\cos(1/2)-2\sin(1/2)}{\cos(1/2)-\sin(1/2)}\right),\nonumber\\
\sum_{k^S_0\in\cK_g}\frac{1}{(4{k^S_0}^2-1)^3}&=&\frac{1}{64}\left(1+\frac{19\cos(1/2)-35\sin(1/2)}{\cos(1/2)-\sin(1/2)}\right).\label{eq:ser2}
\eea
This simplifies Eqn.~(\ref{eqn:gcomp}) to 
\be
\sum_{k\in \cK_S} 2L^4\frac{1}{k^2}=2L^4\left(\frac{5}{6}-\frac{19}{45}m^2+\frac{41}{252}m^4\right)+\cO(m^6).
\ee
Adding the contributions from the antisymmetric and symmetric eigenfunctions the RHS of Eqn.~(\ref{eq:trace}) reduces to 
\be
\sum\lambda_k^2=2L^4\left(1-\frac{4}{9}m^2+\frac{1}{6}m^4\right)+\cO(m^6),
\ee
which is same as its LHS. Thus, to $\cO(m^4)$  the $\uAS_k$ are a complete set of eigenfunctions of $i\hD$.

\section{The Wightman function: the small mass limit}
\label{sec:wightmann}
We can now write down the formal expression for the SJ  Wightman function to $\cO(m^4)$  using the SJ modes
obtained above, as 
\begin{equation}
\wsj(u,v,u',v')=\sum_{k\in\cK_{A},\, k<0}-\frac{L^2}{k}\frac{u^{A}_k(u,v){u^{A}_k}^*(u',v')}{||u_k^{A}||^2}
+\sum_{k\in\cK_{S}, \, k<0}-\frac{L^2}{k}\frac{u^{S}_k(u,v){u^{S}_k}^*(u',v')}{||u_k^{S}||^2}+\cO(m^6),  \label{eq:fullw}
\end{equation}
where $\cK_{A/S}$ denote  the positive SJ eigenvalues.  In particular $k=-\ka(n)$ with $n\in{\mathbb Z}^+$ (Eqn.~(\ref{eq:qcondka})) and  $k=-\ks(k_0)$ with
$k_0$ satisfying  $\tan(k_0)=2k_0$ (Eqn.~(\ref{eq:qcondks})). 
Here $||u^{A/S}_k||$ denotes  the $\mL^2$ norm of the modes $u^{A/S}_k$ 
\be
||u^{A/S}_k||^2=L^2\int_{-1}^1 du \int_{-1}^1 dv u^{A/S}_k(u,v){u^{A/S}_k}^*(u,v). 
\ee
For $k=-\ka(n)$ 
\be
||u^A_k||^2 =
8L^2\left(1+\frac{m^2}{n^2\pi^2}+\frac{m^4}{n^2\pi^2}\left(\frac{1}{12}-\frac{11}{4n^2\pi^2}\right)\right)+\cO(m^6). \label{eq:norma} 
\ee
In the symmetric case, $k =-\ks(k_0) $ the quantization condition is complicated. Following \cite{Afshordi:2012ez}, we
make the approximation
\be
\ks(n) \approx \left(n-\frac{1}{2}\right)\pi, \, \, n\in {\mathbb Z}^+.   \label{eq:appquant}
\ee
As shown in  Fig.~\ref{fig:quant}, we see that except for the first few modes this is a good approximation, and in fact
improves with increasing mass\footnote{Of course, at the same time, our approximation of the SJ modes becomes worse with
  increasing mass.}.  This approximation in the quantization condition makes $\cos(\ks)=0$, thus simplifying
$u^S_k(u,v)$ to 
\be
u^S_{-\ks}(u,v)=U^S_{-i\ks}(u,v) \Rightarrow ||u^S_{\ks}||=8L^2. \label{eq:norms} 
\ee
\begin{figure}[h]
\centerline{\includegraphics[height=6cm]{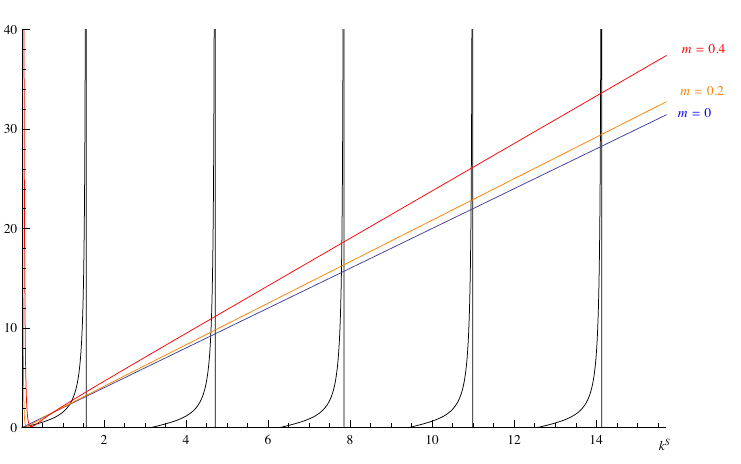}} 
\caption{Plot of the quantization condition, Eqn.~(\ref{eq:qcondS}) for the symmetric SJ modes for m=0,0.2 and 0.4,
  where $k_S>0$. }
\label{fig:quant}
\end{figure}

We examine the antisymmetric and symmetric contributions to $\wsj$ separately 
\be
\wsj=\wsj^A+\wsj^S. \label{eq:wsjsplit} 
\ee

For the antisymmetric contribution, using the quantization condition $k=-k_A(n)$ and the simplification Eqn.~(\ref{eq:norma}) for
the norm
\begin{equation}
  \wsj^A(u,v,u',v') = \sum_{n=1}^\infty \frac{1}{8n\pi}\left(1-\frac{2m^2}{n^2\pi^2}+\frac{m^4}{n^2\pi^2}\left(\frac{7}{n^2\pi^2}-\frac{1}{6}\right)\right) u^A_{k}(u,v) u_{k}^{A*}(u',v') + \cO(m^6).
\end{equation}
To leading order $u^A_{k}$ can be re-expressed as 
\begin{eqnarray} 
  u^A_{k}(u,v) &=& e^{-in\pi u}-e^{-in\pi v}+ \Psi_A(n,u,v) + \cO(m^6),\nonumber  \\ 
\Psi_A(n,u,v) &=&  \sum_{j=1}^3\left(\frac{(-1)^nf_j(m;u,v)}{n^j}+\frac{g_j(m;u,v)e^{-in\pi
      u}}{n^j}-\frac{g_j(m;v,u)e^{-in\pi v}}{n^j}\right), 
\end{eqnarray}
where 
\bea
f_1(m;u,v)\equiv \frac{im^2}{2\pi}(u-v)-\frac{im^4}{24\pi}(u-v)(1+3uv), &\quad& g_1(m;u,v)\equiv -\frac{im^2(2u+v)}{2\pi}-\frac{im^4u}{12\pi},\nonumber\\
f_2(m;u,v)\equiv -\frac{m^4}{4\pi^2}(u^2-v^2), &\quad& g_2(m;u,v) \equiv -\frac{m^4(2u+v)^2}{8\pi^2},\nonumber\\
f_3(m;u,v)\equiv-\frac{3im^4}{4\pi^3}(u-v), &\quad& g_3(m;u,v)\equiv \frac{im^4(15u+6v)}{12\pi^3}. 
\label{eq:gjfj}
\eea
We further split
\be
\wsj^A=\wa+\waa+\waaa+\waaaa+\cO(m^6),   \label{eq:splitwsja}
\ee
where 
\bea
\wa&\equiv& \sum_{n=1}^\infty\frac{1}{8n\pi}\left(1-\frac{2m^2}{n^2\pi^2}+\frac{m^4}{n^2\pi^2}\left(\frac{7}{n^2\pi^2}-\frac{1}{6}\right)\right)\left(e^{-in\pi u}-e^{-in\pi v}\right)\left(e^{in\pi u'}-e^{in\pi v'}\right),\nonumber\\
\waa&\equiv& \sum_{n=1}^\infty\frac{1}{8n\pi}\left(1-\frac{2m^2}{n^2\pi^2}\right)\left(e^{-in\pi u}-e^{-in\pi v}\right)\Psi_A^*(n,u',v'),\nonumber\\
\waaa&\equiv& \sum_{n=1}^\infty\frac{1}{8n\pi}\left(1-\frac{2m^2}{n^2\pi^2}\right)\Psi_A(n,u,v)\left(e^{in\pi u'}-e^{in\pi v'}\right),\nonumber\\
\waaaa&\equiv& \sum_{n=1}^\infty\frac{1}{8n\pi}\Psi_A(n,u,v)\Psi_A^*(n,u',v').\nonumber\\
\label{eq:aone}
\eea
These terms can be further simplified to $\cO(m^4)$ as we have shown in  Appendix.~\ref{sec:wight-app}. 

For the symmetric contribution $\wsj^S$ we use the simplification Eqns (\ref{eq:appquant}) and (\ref{eq:norms}) to express   
\be
\wsj^S = \sum_{n=1}^\infty\frac{1}{4\pi(2n-1)}U^S_{-i\ks}(u,v){U^S}_{-i\ks}^*(u',v')+\e_m(u,v,u',v') + \cO(m^6). 
\ee
Here $\e_m(u,v;u',v')$ is the correction term coming from the approximation of the quantization condition Eqn.~(\ref{eq:appquant}). This is
analytically difficult to obtain and in Sec.~\ref{sec:numerical}, we will evaluate it numerically for different values of $m$.   

Using the $\cO(m^4)$ expansion of $U_{-ik}$ from Eqn.~(\ref{eq:seriesua}), we write $U_{-ik_S}^S$ as
\bea
U_{-ik_S(n)}^S(u,v)&=&\left(e^{-i\left(n-\frac{1}{2}\right)\pi u}+e^{-i\left(n-\frac{1}{2}\right)\pi v}\right)+\Psi_S(n,u,v)+\cO(m^6), \nonumber\\
\Psi_S(n,u,v)&=&-\frac{im^2}{(2n-1)\pi}\left(v e^{-i\left(n-\frac{1}{2}\right)\pi u}+ue^{-i\left(n-\frac{1}{2}\right)\pi v}\right)\nonumber\\
&&\quad\quad\quad-\frac{m^4}{4(2n-1)^2\pi^2}\left(v^2e^{-i\left(n-\frac{1}{2}\right)\pi u}+u^2e^{-i\left(n-\frac{1}{2}\right)\pi v}\right).
\eea
Again for the symmetric part, we can write
\be
\wsj^S=\ws+\wss+\wsss+\wssss+\e_m(u,v,u',v')+\cO(m^6),
\label{eq:splitwsjs} 
\ee
where
\bea
\ws &\equiv& \frac{1}{4\pi}\sum_{n=1}^\infty\frac{1}{2n-1}\left(e^{-i\left(n-\frac{1}{2}\right)\pi u}+e^{-i\left(n-\frac{1}{2}\right)\pi v}\right)\left(e^{i\left(n-\frac{1}{2}\right)\pi u'}+e^{i\left(n-\frac{1}{2}\right)\pi v'}\right),\nonumber\\
\wss &\equiv& \frac{1}{4\pi}\sum_{n=1}^\infty\frac{1}{2n-1}\left(e^{-i\left(n-\frac{1}{2}\right)\pi u}+e^{-i\left(n-\frac{1}{2}\right)\pi v}\right)\Psi_S^*(n,u',v'),\nonumber\\
\wsss &\equiv& \frac{1}{4\pi}\sum_{n=1}^\infty\frac{1}{2n-1}\Psi_S(n,u,v)\left(e^{i\left(n-\frac{1}{2}\right)\pi u'}+e^{i\left(n-\frac{1}{2}\right)\pi v'}\right),\nonumber\\
\wssss &\equiv& \frac{1}{4\pi}\sum_{n=1}^\infty\frac{1}{2n-1}\Psi_S(n,u,v)\Psi_S^*(n,u',v').
\label{eq:sone} 
\eea
Using the following result
\be
\sum_{n=1}^\infty \frac{e^{i\left(n-\frac{1}{2}\right)\pi x}}{(2n-1)^j} =
\li_j\left(e^{i\pi\frac{x}{2}}\right)-\frac{1}{2^j}\li_j\left(e^{i\pi x}\right),
\label{eq:soneexp}
\ee
$\ws,\wss,\wsss$ and $\wssss$ can further be simplified up to $\cO(m^4)$ as we have shown in Appendix \ref{sec:wight-app}.
In particular, $\ws$ can be written as
\be
\ws=\frac{1}{4\pi}\left(\tanh^{-1}\left(e^{-\frac{i\pi(u-u')}{2}}\right)+\tanh^{-1}\left(e^{-\frac{i\pi(v-v')}{2}}\right)+\tanh^{-1}\left(e^{-\frac{i\pi(u-v')}{2}}\right)+\tanh^{-1}\left(e^{-\frac{i\pi(v-u')}{2}}\right)\right).\label{eq:sonetanh}
\ee

Despite these simplifications in $\wsj$, it  is difficult to find a general closed form expression for $\wsj$. Instead, as  was done in \cite{Afshordi:2012ez}, we
focus on two subregions of $\diam$, as shown in Fig.~\ref{fig:cd}. In the center, far away from the boundary,  one expects to obtain the Minkowski
vacuum, while in the corner, one expects the Rindler vacuum. In  the massless case studied by \cite{Afshordi:2012ez} the
former expectation was shown to be the case. However, in the corner, instead of the Rindler vacuum, they found that 
that $\wsj$ looks like the massless mirror vacuum. One of the main motivations to look at the massive case, is to compare with
these results. 

\begin{figure}[h]
\centering{\includegraphics[height=4.5cm]{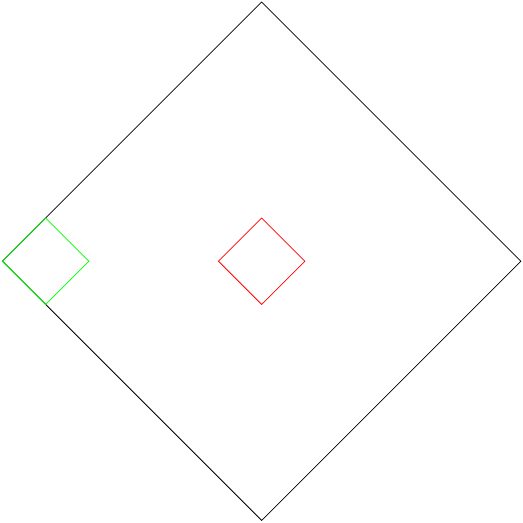}
\caption{The  center and corner regions in the causal diamond $\diam$.}
\label{fig:cd}}
\end{figure}

We now write down the expressions for the various vacua that we wish to compare with: 
\begin{eqnarray}
  \wmink(u,v;u',v')&=&-\frac{1}{4\pi}\ln\left(\cof^2e^{2\gamma}|2\Du\Dv|\right)-\frac{i}{4}\sgn(\Du+\Dv)\theta(\Du\Dv),\label{eq:minkm0}\\
  \wminkm(u,v;u',v')&=&\frac{1}{2\pi}K_0\left(m\sqrt{-2\Du\Dv+i(\Du+\Dv)\e}\right), \label{eq:massmink} \\ 
 \wrind(\eta,\xi,
  \eta',\xi')&=&-\frac{1}{4\pi}\ln\left(\cof^2e^{2\gamma}|\Delta\eta^2-\Delta\xi^2|\right)-\frac{i}{4}\sgn(\Delta\eta)\theta(\Delta\eta^2-\Delta\xi^2), \label{eq:rindm0}
  \\
  \wrindm(\eta,\xi, \eta',\xi') &=& \wminkm(u,v,u',v')-\frac{1}{2\pi}\int_{-\infty}^\infty
                                    \frac{dy}{\pi^2+y^2}K_0(m\gamma_1), \label{eq:rindm} \\
 \wmirr(u,v,u',v')&=&\wmink(u,v;u',v')-\wmink(u,v;v',u'), \label{eq:mirror} \\
\wmirrm(u,v,u',v')&=&\wminkm(u,v;u',v')-\wminkm(u,v;v',u').   \label{eq:mirrorm}                                 
\end{eqnarray}
In the expression Eqn.~(\ref{eq:minkm0})   for the massless Minkowski vacuum,  $\gamma$ is the Euler-Mascheroni constant
and $\cof=0.462$ (obtained in \cite{Afshordi:2012ez} by comparing $\wsj$ with $\wmink$). In the expression  Eqn.~(\ref{eq:massmink})  for the
massive Minkowski vacuum \cite{abdallah}, $K_0$ is the modified Bessel function of the second kind, with $\e$ a constant
such that that $0< \e \ll 1$.
In the expressions Eqn.~(\ref{eq:rindm0}) and Eqn.~(\ref{eq:rindm}) (see \cite{candelas:1976})
 for the Rindler vacua, $\alpha$ is the acceleration
parameter, with
\begin{eqnarray} 
\eta = \frac{1}{\alpha}\tanh^{-1}\left(\frac{u+v}{u-v}\right),&\quad & 
\xi= \frac{1}{2\alpha}\ln\left(-2\alpha^2uv\right), \nonumber \\ 
\Delta\eta=\eta-\eta', \quad \Delta\xi=\xi-\xi',  &\quad& \gamma_1=\sqrt{\xi^2+\xi'^2+2\xi\xi'\cosh(y-\eta+\eta')}.\label{eq.g1}
\end{eqnarray} 

\subsection{The center}

We now consider a small diamond ${\diam}_l$ at the center of  $\diam$ with $l \ll 1$  where one expects  $\wsj$ to resemble  
$\wminkm$.  For small $\Delta u,
\Delta v$, $\wminkm$ can be written as
\be
\wminkm(u,v;u',v')\approx
-\frac{1}{4\pi}\ln\left(\frac{m^2e^{2\gamma}}{2}\left|\Du\Dv\right|\right)-\frac{i}{4}\sgn(\Du+\Dv)\theta(\Du\Dv)J_0\left(m\sqrt{2\Du\Dv}\right).\label{eq:centerminkm}
\ee
To leading logarithmic order this is similar in form to $\wmink$ (Eqn.~(\ref{eq:minkm0})), with $m$ replaced by $2 \cof$.
We plot these functions in Fig.~\ref{fig:wmink}. For $m \ll \cof$ the real part of $\wminkm$ is larger than
$\wmink$ and for $m \gg \cof$ it is smaller. When $m_c=2\cof$, the two coincide in this approximation. 

\begin{figure}[h]
\centerline{\includegraphics{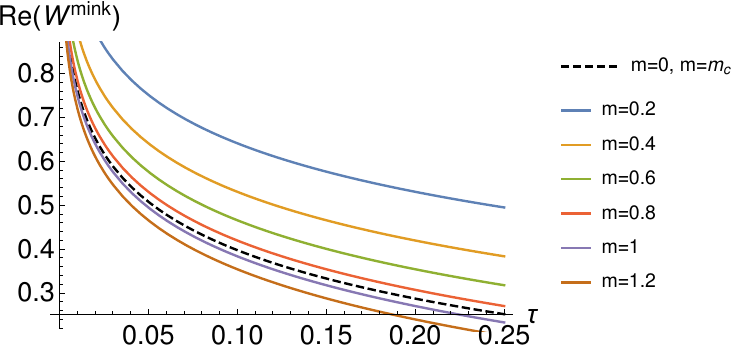}}
\caption{Plot of $\re(\wmink)$ and $\re(\wminkm)$ vs the proper time ($\tau$)}
\label{fig:wmink}
\end{figure}

Let us begin with $\wsj^A$, Eqns  (\ref{eq:splitwsja})  and (\ref{eq:aone}). As shown in  Appendix
\ref{sec:wight-app},  the expressions for $\wa,\waa,\waaa$ and $\waaaa$ can be written in terms of Polylogarithms 
$\li_s(x)$.  For small $x$, i.e., near the center of $\diam$ they simplify for the  $s=1,3$ and $5$ to 
\bea
\li_1\left(e^{i\pi x}\right)&=& -\ln(-i\pi x)-\frac{i\pi x}{2}+\frac{\pi^2x^2}{24}+\cO(x^3), \label{eq:li1}\\
\li_3\left(e^{i\pi x}\right)&=& \zeta(3)+\frac{i\pi^3x}{6}+\left(-\frac{3\pi^2}{4}+\frac{\pi^2}{2}\ln(-i\pi
  x)\right)x^2+\cO(x^3), \\
\li_5\left(e^{i\pi x}\right)&=& \zeta(5)+\frac{i\pi^5x}{90}-\frac{\pi^2\zeta(3) x^2}{2}+\cO(x^3),
\label{eq:polylog} 
\eea
where $\zeta$ are the Riemann Zeta function and $x$ denotes $u$ or $v$.  In the expression for $\wa$, the
constant and linear terms in $x$ cancel out, so that
\bea
\wa &=& -\frac{1}{8\pi}\left(\ln(|u-u'||v-v'|)-\ln(|u-v'||v-u'|)-C_1\frac{i\pi}{2}\right)\nonumber\\
&&-\left(\frac{\pi}{96}+\frac{3m^2}{8\pi}+\frac{m^4}{8\pi}\left(\frac{1}{4}-\frac{7\zeta(3)}{\pi^2}\right)\right)(u-v)(u'-v')\nonumber\\
&&-\frac{m^2}{8\pi}\left(1+\frac{m^2}{12}\right)\left[(u-u')^2\ln\left(-i\pi(u-u')\right)+(v-v')^2\ln\left(-i\pi(v-v')\right)\right.\nonumber\\
&&\left.-(u-v')^2\ln\left(-i\pi(u-v')\right)-(v-u')^2\ln\left(-i\pi(v-u')\right) \right] +\cO(\Delta^3), 
\eea
where $C_1=\sgn(u-u')+\sgn(v-v')-\sgn(u-v')-\sgn(v-u')$ and $\Delta$ collectively denotes either $u-u',v-v',v'-u$ or $ v-u'$.   
For sufficiently small $x$, the logarithmic term dominates significantly over other terms,  and hence in $\diam_l$ 
\be
\wa=-\frac{1}{8\pi}\left(\ln(|u-u'||v-v'|)-\ln(|u-v'||v-u'|)-C_1\frac{i\pi}{2}\right) + \cO(m^2,\Delta^2), \label{eq:wa-cent}
\ee
where we have hidden all the mass dependence in the correction.

Next, $\waa,\waaa$ and $\waaaa$ also involve another set of Polylogarithms of the type $\li_s(-e^{i\pi x})$ for $ s \geq 2$
as well as $\li_s(e^{ i\pi x})$ for $s=2,3,4$, which are multiplied to the functions $g_j(m;u,v)$ and $f_j(m;u,v)$ given in Eqn.~(\ref{eq:gjfj}). The $g_j(m;u,v)$ and $f_j(m;u,v)$ themselves go to zero  either linearly or quadratically with $u,v$. This second set of  Polylogarithms, unlike the first in Eqn.~(\ref{eq:polylog}), are strictly convergent as $x \rightarrow 0$. Hence the $\waa,\waaa $ and $\waaaa$ are  
strongly sub-dominant with respect to $\wa$ so that 
\be
\wsj^A(u,v,u',v') = -\frac{1}{8\pi}\left(\ln(|u-u'||v-v'|)-\ln(|u-v'||v-u'|)-C_1\frac{i\pi}{2}\right)+\cO(m^2,\Delta^2).
\ee
Here we note that while the mass correction is significant in the antisymmetric SJ modes, it becomes  insignificant in $\wsj^A$ in the center of the diamond, compared to the dominating logarithmic term.  Thus we see that in the center of $\diam$,  $\wsj^A$ is identical to the massless case found in \cite{Afshordi:2012ez}. 

We now turn to the symmetric part $\wsj^S$, Eqns  (\ref{eq:splitwsjs})  and (\ref{eq:sone}). The expressions for
$\ws,\wss,\wsss$  and $\wssss$ can again be written in terms of Polylogarithms $\li_s(x)$  as shown in Appendix
\ref{sec:wight-app}. For $\ws$ however, the form given in Eqn.~(\ref{eq:sonetanh}) is easier to analyze.  
Noting that for small $x$ 
\be
\tanh^{-1}\left(e^{i\pi x/2}\right)=-\frac{1}{2}\ln\left(\frac{-i\pi x}{4}\right)-\frac{\pi^2x^2}{96}+\cO(x^3),
\ee
near the center of $\diam$ we see that
\bea
\ws&=&-\frac{1}{8\pi}\left[\ln(|u-u'||v-v'|)+\ln(|u-v'||v-u'|)+4\ln\left(\frac{\pi}{4}\right)-C_2\frac{i\pi}{2}\right]\nonumber\\
&&-\frac{\pi}{384}\left((u-u')^2+(u-v')^2+(v-u')^2+(v-v')^2\right)+\cO(\Delta^3),
\eea
where $C_2=\sgn(u-u')+\sgn(v-v')+\sgn(u-v')+\sgn(v-u')$. Since the logarithmic term dominates,
\be
\ws=-\frac{1}{8\pi}\left[\ln(|u-u'||v-v'|)+\ln(|u-v'||v-u'|)+4\ln\left(\frac{\pi}{4}\right)-C_2\frac{i\pi}{2}\right]+\cO(\Delta^2).  
\ee
Next, we see that $\wss,\wsss$ and $\wssss$ involve a set of Polylogarithms of the type $\li_s(e^{i\pi x}),$ for $
s=2,3$, multiplied by linear and quadratic functions of $u,v,u'$ and $v'$. This set of  Polylogarithms are in fact
strictly convergent as $x \rightarrow 0$. Hence the $\wss,\wsss $ and $\wssss$ are strongly sub-dominant, with respect
to $\ws$, so that
\bea
\wsj^S(u,v,u',v')&=&-\frac{1}{8\pi}\left[\ln(|u-u'||v-v'|)+\ln(|u-v'||v-u'|)+4\ln\left(\frac{\pi}{4}\right)-C_2\frac{i\pi}{2}\right]\nonumber\\
&&\quad+\emc+\cO(m^2,\Delta^2), \label{eq:emc} 
\eea
where $\emc$ is the correction in the center coming from the
approximation to the quantization condition Eqn.~(\ref{eq:appquant}). We will determine this 
numerically in Section \ref{sec:numerical}.
Up to this mass correction $\wsj^S$ resembles the massless case found in \cite{Afshordi:2012ez}.

Putting these pieces together we find that
\be
\wsj^{center}(u,v,u',v') \approx 
-\frac{1}{4\pi}\ln|\Du\Dv|-\frac{i}{4}\sgn(\Du+\Dv)\theta(\Du\Dv)-\frac{1}{2\pi}\ln\left(\frac{\pi}{4}\right)+\e_m^{center}. \label{eq:wcent}
\ee
A direct comparison with $\wmink$ gives 
\begin{equation}
\wsj^{center}(u,v,u',v') - \wmink(u,v,u',v') \approx -\frac{1}{2\pi}\ln\left(\frac{\pi}{4}\right)+\e_m^{center} +
\frac{1}{4\pi}\ln\left(2 \cof^2e^{2\gamma}\right),  \label{eq:compcenter} 
  \end{equation} 
where $\cof \approx 0.462$ is fixed by comparing the massless $\wsj$ with $\wmink$ as in \cite{Afshordi:2012ez}.

\subsection {The corner}

We now consider either of the two spatial corners of the diamond, $\diam_c \subset \diam$ as shown in Fig.~\ref{fig:cd}. We use the small $\Delta u, \Delta v$ form of $\wminkm$ to
express
\be
\wmirrm \approx -\frac{1}{4\pi}\ln\left|\frac{\Du\Dv}{(u-v')(v-u')}\right|-\frac{i}{4}\sgn(\Du+\Dv)\left(\theta(\Du\Dv)-\theta((u-v')(v-u'))\right) \label{eq:approxmirrm} .
 \ee
 As in \cite{Afshordi:2012ez} we make the coordinate transformation
\be
\{u,u',v,v'\}\rightarrow \{u_c,u'_c,v_c,v'_c\} \equiv \{u-1,u'-1,v+1,v'+1\}, \label{eq:corner}
\ee
which brings the origin $(0,0)$ to the left corner of the diamond. 

For $\wsj^A$ (Eqn.~(\ref{eq:splitwsja}) and Eqn.~(\ref{eq:aone})),  we note that $\wa$ is invariant under this coordinate
transformation and hence given by Eqn.~(\ref{eq:wa-cent}) near the origin of $\diam_c$. In $\waa,\waaa$ and $\waaaa$ the
constant terms cancel out and, similar to the center calculation, they goes to zero linearly with $u,v$ and hence are
strongly sub-dominant with respect to $\wa$. 
Therefore,  in the corner, $\wsj^A$ simplifies to
\be
\wsj^A(u,v,u',v') = -\frac{1}{8\pi}\left(\ln(|u-u'||v-v'|)-\ln(|u-v'||v-u'|)-C_1\frac{i\pi}{2}\right)+\cO(m^2,\Delta),
\ee
and the sub-dominant part is now linear in $\Delta$.

For $\wsj^S$ (Eqn.~(\ref{eq:splitwsjs}) and Eqn.~(\ref{eq:sone})), under the coordinate transformation
\be
\ws=\frac{1}{4\pi}\left(\tanh^{-1}\left(e^{-\frac{i\pi(u-u')}{2}}\right)+\tanh^{-1}\left(e^{-\frac{i\pi(v-v')}{2}}\right)-\tanh^{-1}\left(e^{-\frac{i\pi(u-v')}{2}}\right)-\tanh^{-1}\left(e^{-\frac{i\pi(v-u')}{2}}\right)\right).
\ee
In the corner $\diam_c\subset\diam$ this simplifies to
\bea
\ws&=&\frac{1}{8\pi}\left[-\ln(|u-u'||v-v'|)+\ln(|u-v'||v-u'|)+C_1\frac{i\pi}{2}\right]\nonumber\\
&&-\frac{\pi}{384}\left((u-u')^2+(v-v')^2-(u-v')^2-(v-u')^2\right)+\cO(\Delta^3).
\eea
For sufficiently small $\Delta$, the logarithmic term dominates the other terms so that
\be
\ws = \frac{1}{8\pi}\left[-\ln(|u-u'||v-v'|)+\ln(|u-v'||v-u'|)+C_1\frac{i\pi}{2}\right]+\cO(\Delta^2).
\ee
As in the center, $\wss$ and $\wsss$ go  to zero while 
\be
\wssss=\frac{7\zeta(3)m^4}{8\pi^3}+\cO(\Delta)\approx 0.034m^4.
\ee
Therefore in the corner  we see that 
\be
\wsj^S \approx  \frac{1}{8\pi}\left[-\ln(|u-u'||v-v'|)+\ln(|u-v'||v-u'|)+C_1\frac{i\pi}{2}\right]+0.034m^4+\emcr  \label{eq:emcr}
\ee
i.e., there is a mass correction to the massless $\wsj^S$. $\emcr$ is, as in the center calculation, a small but finite term coming from the
approximation to the quantization condition Eqn.~(\ref{eq:appquant}), which we will  evaluate numerically in
Sec.~\ref{sec:numerical}. 

Putting these pieces together we find that in the corner $\wsj$ takes the form 
\bea
\wsj^{corner}(u,v,u',v') &\approx& -\frac{1}{4\pi}\ln\left|\frac{\Du\Dv}{(u-v')(v-u')}\right|-\frac{i}{4}\sgn(\Du+\Dv)\left(\theta(\Du\Dv)-\theta((u-v')(v-u'))\right)\nonumber\\
&&+0.034m^4+\emcr.
\eea
A direct comparison with $\wmirrm$ Eqn (\ref{eq:approxmirrm})  gives 
\be
\wsj^{corner}(u,v,u',v') - \wmirrm(u,v,u',v') \approx 0.034m^4+\emcr  \label{eq:compcorner} .
\ee 

\subsection {Numerical simulations for determining $\e_m$} \label{sec:numerical}
The formal expansion of $\wsj $ in terms of the SJ modes Eqn.~(\ref{eq:fullw}) can be truncated and evaluated numerically
in $\diam$. Here we do not need to use the approximation of the quantization condition Eqn (\ref{eq:appquant}). This
allows us to evaluate the ensuing corrections $\e_m^{center},\e_m^{corner}$ numerically,  and thus quantify the
comparisons of $\wsj$ obtained in the center and  corner of $\diam$ with the standard vacua. 

We begin with the $N^{\mathrm{th}}$ truncation $\wsj^t$ of the series form of $\wsj$ Eqn(\ref{eq:fullw}) in the full
diamond $\diam$ for $N=100, 200, \ldots 1000$.  Fig \ref{fig:truncations}  shows an explicit convergence of $\wsj^t$ for
these values of $N$. 
For the plot we considered  the pairs  $(u,v)=(x,x)$ and $(u',v')=(-x,-x)$ for timelike separated points, and $(u,v)=(x,-x)$ and $(u',v')=(-x,x)$
for spacelike separated points.  From this point onwards, we will consider $\wsj^t$ for $N=1000$.
\begin{figure}[h]
  \centering{\begin{tabular}{cc}
\includegraphics[height=4.2cm]{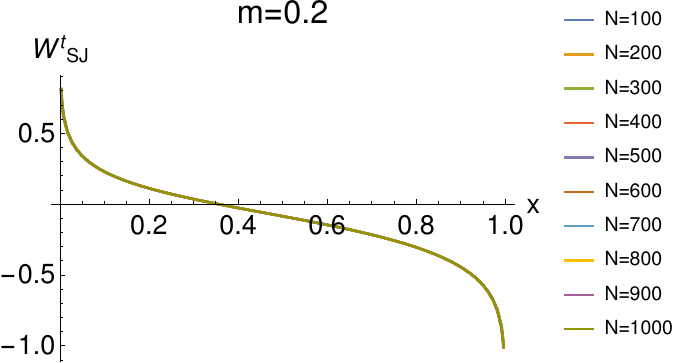} &\hskip 2cm 
                                                            \includegraphics[height=4.2cm]{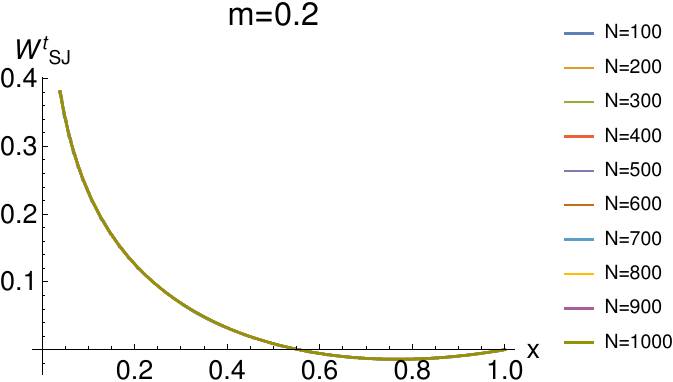}\\
                \includegraphics[height=4.2cm]{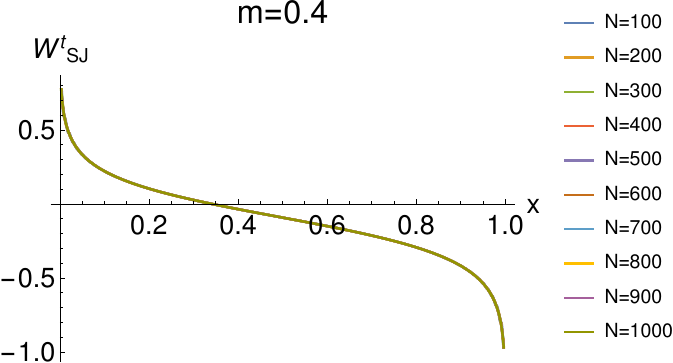} &\hskip 2cm 
\includegraphics[height=4.2cm]{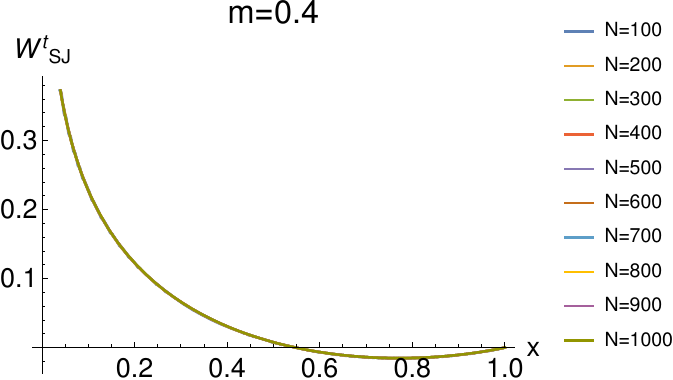} \\
\end{tabular}
\caption{We show the convergence of the truncation of the series $\wsj^t$ with $N$ for $m=0.2,0.4$ for timelike separated points (left) and spacelike separated points (right).}
\label{fig:truncations}}
\end{figure} 
  
Next, we consider the difference $\wsj^t-\wsj^{t,approx}$ where the latter uses the approximation Eqn.~
(\ref{eq:appquant}), both in the center and the corner of $\diam$ in order to obtain $\emc,\emcr$.  It suffices to
look at their  symmetric parts $\wsj^{S,t}$ since only these contribute (see Eqns (\ref{eq:emc}),
(\ref{eq:emcr})). $\emc$ and $\emcr$ are {\it not} strictly constants. However, as we will see, they are approximately
so. As in \cite{Afshordi:2012ez}, they are evaluated by taking a set of randomly selected points in a small diamond in
the center as well as in the corner. Here we take $10$ points and consider all $55$ pairs between them to calculate
$\emc,\emcr$. What we find in Fig.~\ref{fig:em} is that they are very nearly equal and hence we can consider their
average. The explicit averages for these masses are tabulated in Table \ref{tab:em} for future reference.  
\vskip 1cm 
\begin{figure}[h]
\centerline{\begin{tabular}{cc}
\includegraphics[height=4.2cm]{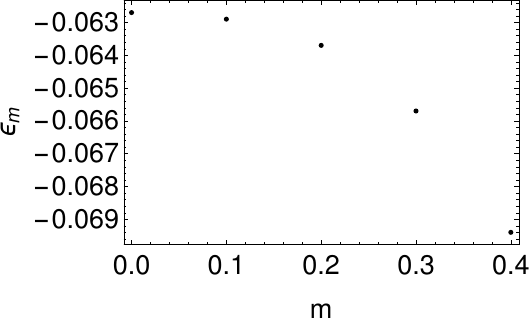}& \hskip 1cm 
\includegraphics[height=4.2cm]{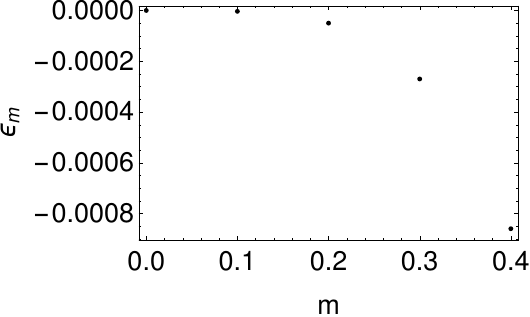}\\
\end{tabular}}
\caption{$\emc$ and $\emcr$  evaluated in a small diamond of $l=10^{-5}$ in the center and the corner of $\diam$, for
  m=0,0.1,0.2,0.3 and 0.4. The standard deviation is very small and hence we can take $\emc$ and $\emcr$ to be
  approximately constant.}
\label{fig:em}
\end{figure} 
\vskip 1cm 
\begin{table}[htb]
\centering{
\begin{tabular}{|c|c|c|}
\hline
mass & $\e_m^{center}$ & $\e_m^{corner}$\\
\hline
0 & -0.0627 & 0\\
0.1 & -0.0629 & $-3.5\times 10^{-6}$\\
0.2 & -0.0637 & -0.00005\\
0.3 & -0.0657 & -0.00027\\
0.4 & -0.0694 & -0.00086\\
\hline
\end{tabular}
\caption{A tabulation of $\emc, \emcr$  for different $m$}
\label{tab:em}}
\end{table}
\vskip 0.5cm

This allows us to now compare $\wsj$ calculated in the center Eqn (\ref{eq:wcent}) with $\wmink, \wminkm$. 
The difference with $\wmink$ given in  Eqn (\ref{eq:compcenter}) is indeed very small. For  $m=0.2$,  for example,  
\begin{equation}
\wmink - \wsj^{center}  \simeq -\frac{1}{4 \pi} \log(2 \times 0.462^2)- 
 \frac {\gamma}{2\pi} - (-\frac{1}{2 \pi} \log(\frac{\pi}{4}) - \emc) \simeq 0.001. 
\end{equation}
Similarly, in the corner, the difference with $\wmirrm$ is again very small. For example for $m=0.2$ it gives
\begin{equation} 
\wmirrm - \wsj^{corner}   \simeq   0.034\times (0.2)^4+\emcr \simeq 4\times 10^{-6} 
 \end{equation} 
Thus, we see that in the small mass limit, $\wsj$ does not differ from the massless Minkowski vacuum in the center region, and
continues to mimic the mirror vacuum in the corner.

Since our analytical calculation is restricted to a very small $\Delta u, \Delta v$, where perhaps the effect of a
small mass is small, we can use the truncation $\wsj^t$ for comparisons with the standard vacuum in larger regions of $\diam$. 
This is shown in the residue plots in Figs.~\ref{fig:res-m0}. In the full diamond, we consider 
 the pairs  $(u,v)=(x,x)$ and $(u',v')=(-x,-x)$ for timelike separated points, and $(u,v)=(x,-x)$ and $(u',v')=(-x,x)$
 for spacelike separated points. We find that for $m=0.2$, $l\sim 0.02$, $\wsj^t$ differs very
 little from the 
 massless Minkowski vacuum, while as the mass increases, so does the discrepancy. 
 \begin{figure}[h]
\centering{\begin{tabular}{cc}
              \includegraphics[height=4cm]{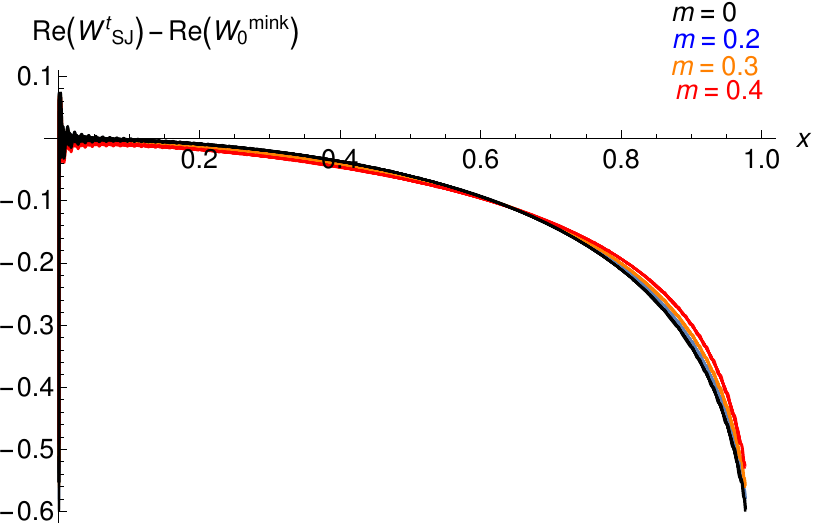}&\hskip 1cm
                                                                     \includegraphics[height=4cm]{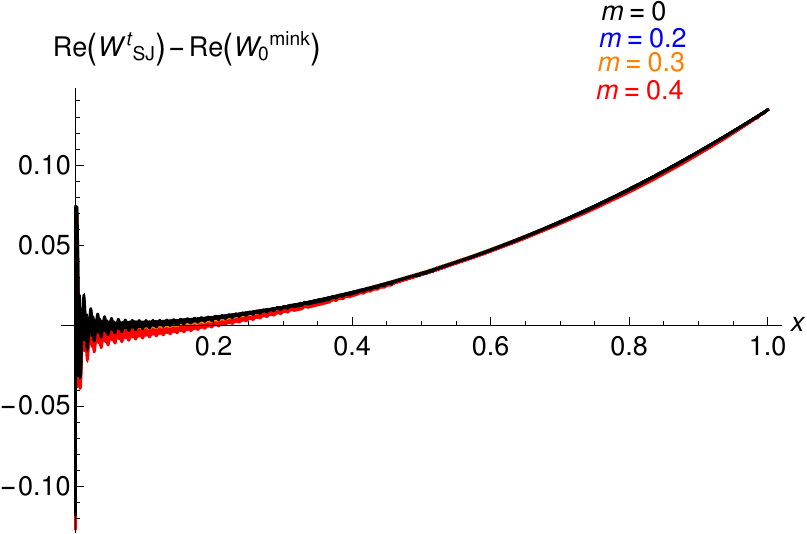}\\
\includegraphics[height=4cm]{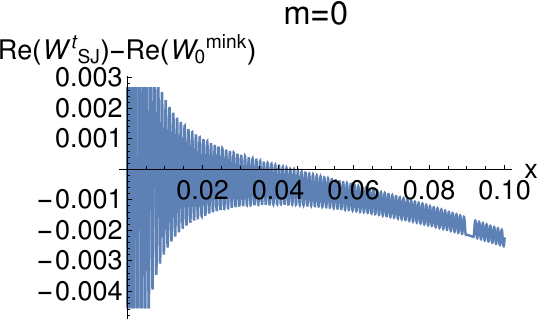}&\hskip 2cm
                                                            \includegraphics[height=4cm]{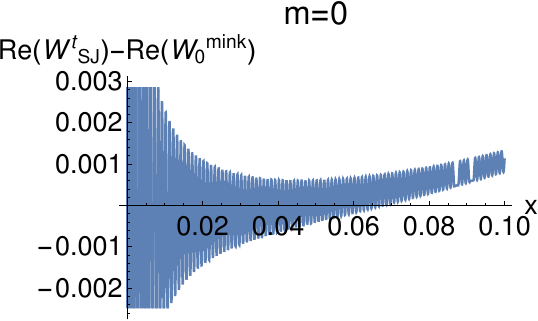}\\

         \includegraphics[height=4cm]{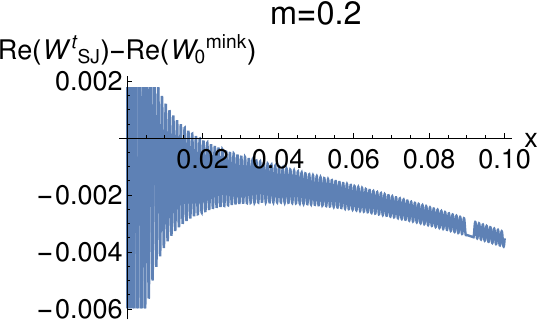}&\hskip 2cm
                                                            \includegraphics[height=4cm]{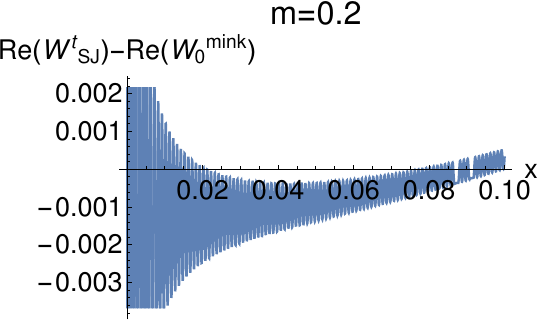}\\

              \includegraphics[height=4cm]{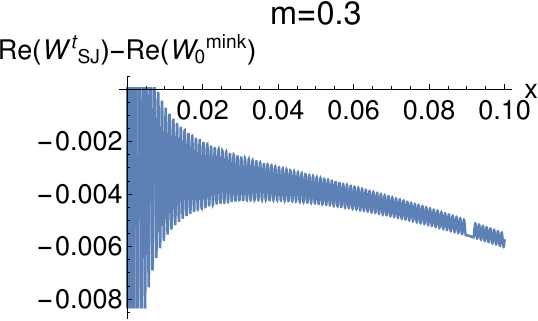}&\hskip 2cm
                                                            \includegraphics[height=4cm]{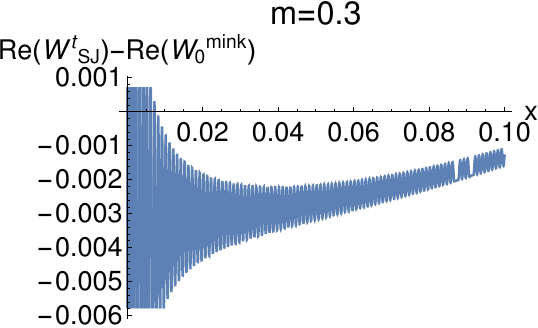}\\

              \includegraphics[height=4cm]{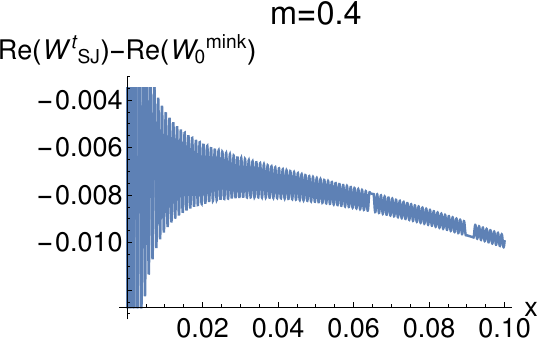}&\hskip 2cm
                                                            \includegraphics[height=4cm]{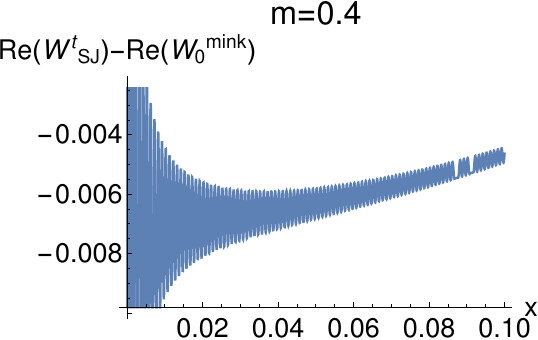}
            \end{tabular}
\caption{Residue plot of $\re(\wsj^t-\wmink)$ for timelike and spacelike separated points respectively, for the full
  diamond, as well as in a center region of size $l\sim 0.1$.}
\label{fig:res-m0}}
\end{figure}
On the other hand, as we see in  Figs.~\ref{fig:res-m}  we find that $\wsj^t$ clearly does  {\it not} agree with the massive Minkowski vacuum, in this small mass
limit.
\begin{figure}[h]
\centerline{\begin{tabular}{cc}
\includegraphics[height=5cm]{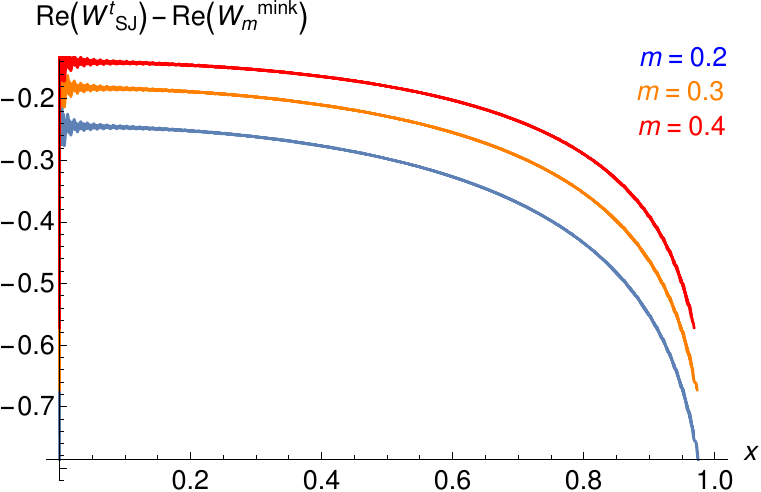}&\hskip 1cm
\includegraphics[height=5cm]{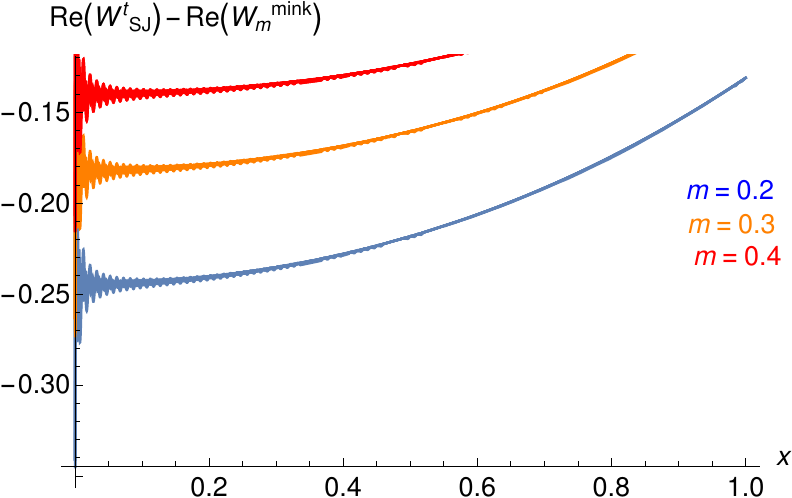}
\end{tabular}}
\caption{Residue plot of $\re(\wsj^t-\wminkm)$ for timelike and spacelike separated points respectively, for the full
  diamond. The discrepancy is obvious. }
\label{fig:res-m}
\end{figure}
 
A similar calculation in the corner shows that $\wsj^t$ looks like the massive mirror vacuum rather than the Rindler vacuum. Here, we consider pairs of points: $(u,v)=(l+x,-l+x)$ and $(u',v')=(l-x,-l-x)$ for timelike separation  and
$(u,v)=(l+x,-l-x)$ and $(u',v')=(l-x,-l+x)$ for spacelike separation, where the origin $(0,0)$ is at the left corner of the diamond $\diam$ and $2l$ is the length of the corner diamond $\diam_c$. This is shown in the residue plots in Figs.~\ref{fig:corres01} and \ref{fig:corres}.

\begin{figure}[h]
\centering{\begin{tabular}{cc}
\includegraphics[height=4cm]{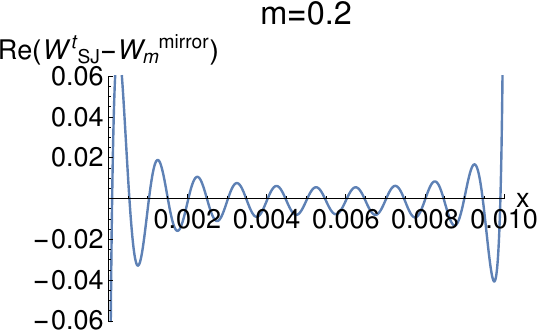}&\hskip 2cm
\includegraphics[height=4cm]{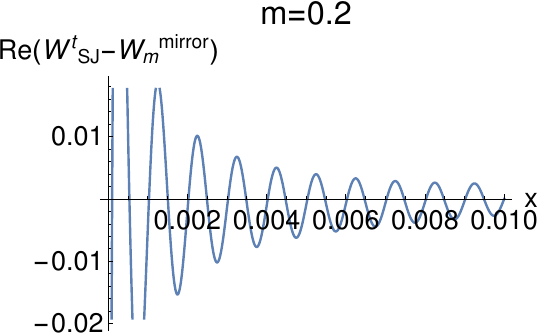}\\
\includegraphics[height=4cm]{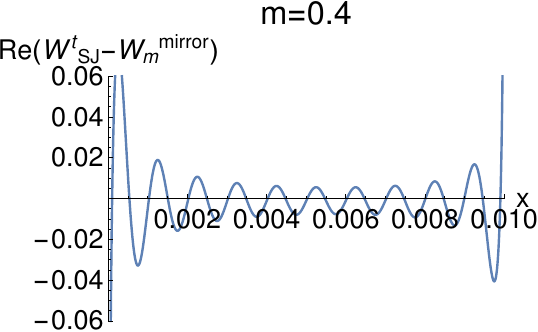}&\hskip 2cm
\includegraphics[height=4cm]{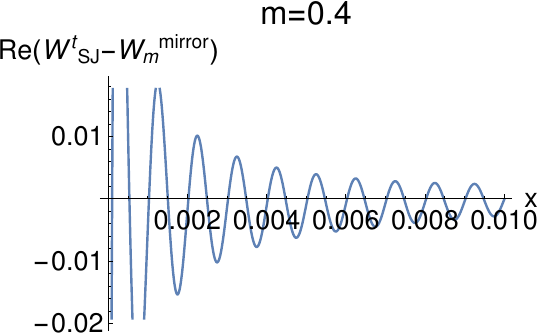}
\end{tabular}
\caption{Residue plot of $\re(\wsj^t-\wminkm)$ for timelike and spacelike separated points respectively in the corner
  region,  $l\sim 0.01$.}
\label{fig:corres01}}
\end{figure}
\begin{figure}[h]
\centering{\begin{tabular}{cc}
\includegraphics[height=4cm]{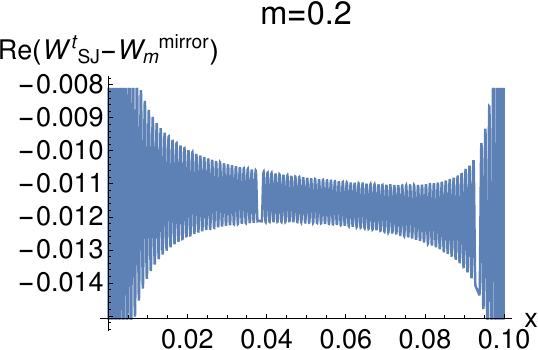}&\hskip 2cm
\includegraphics[height=4cm]{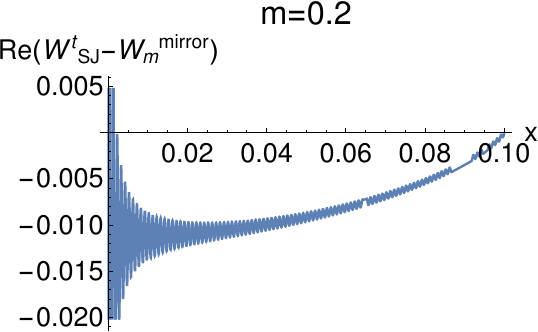}\\
\includegraphics[height=4cm]{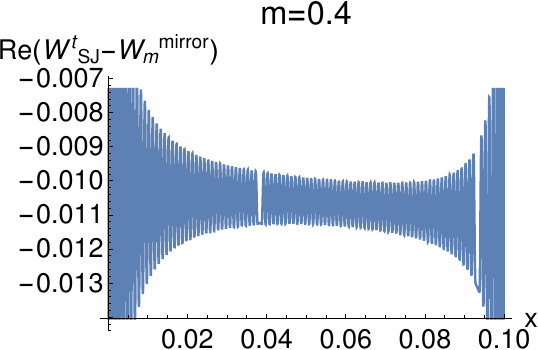}&\hskip 2cm
\includegraphics[height=4cm]{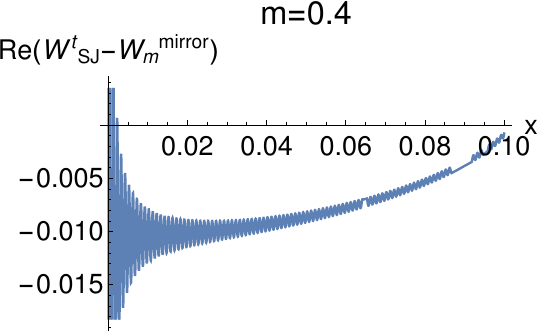}
\end{tabular}
\caption{Residue plot of $\re(\wsj^t-\wminkm)$ for timelike and spacelike separated points respectively in the corner
  region, $l\sim 0.1$.}
\label{fig:corres}}
\end{figure}
\begin{figure}[h]
\centerline{\begin{tabular}{ccc}
\includegraphics[height=3.5cm]{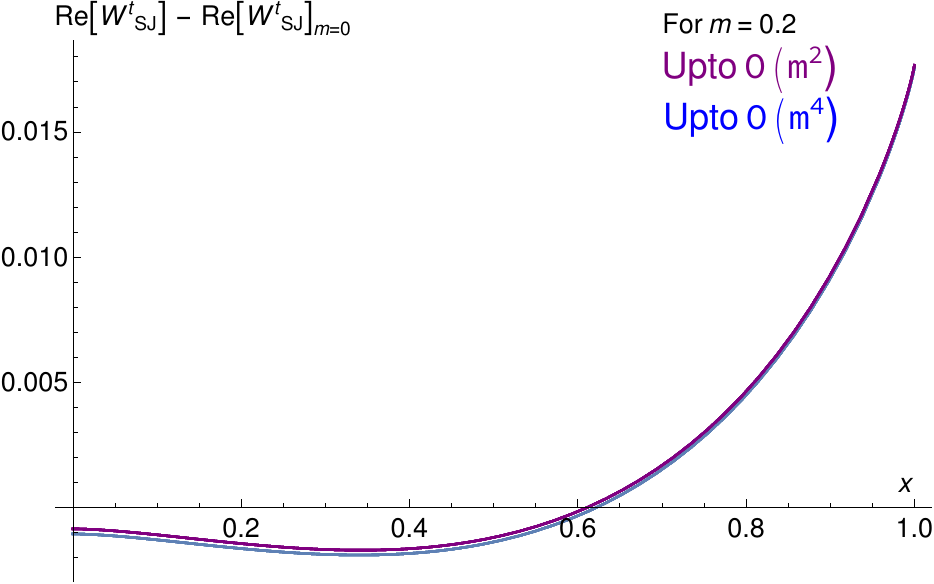}&
\includegraphics[height=3.5cm]{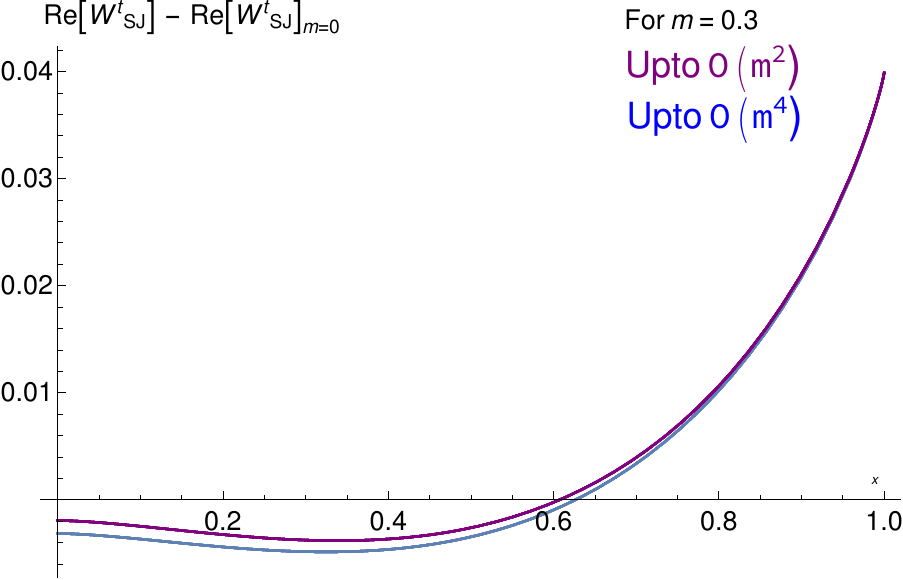}&
\includegraphics[height=3.5cm]{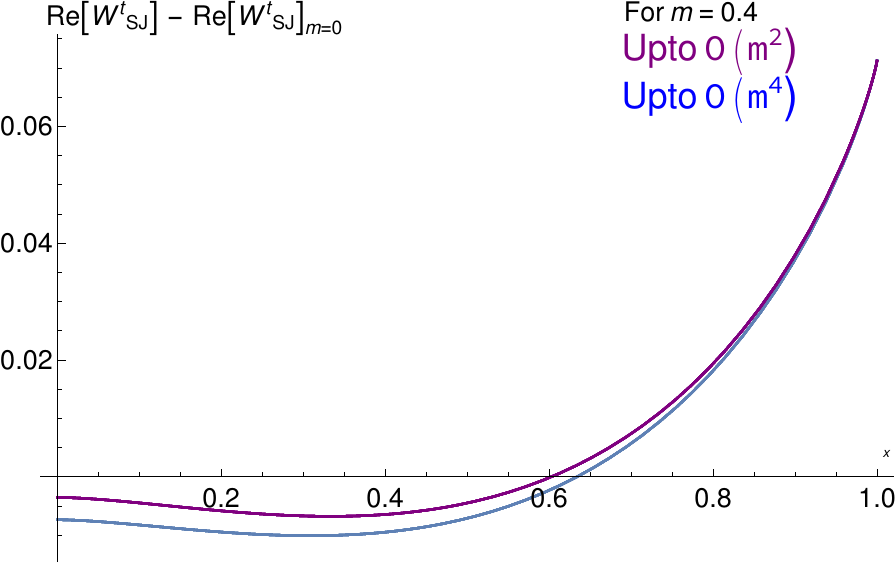}\\
\includegraphics[height=3.5cm]{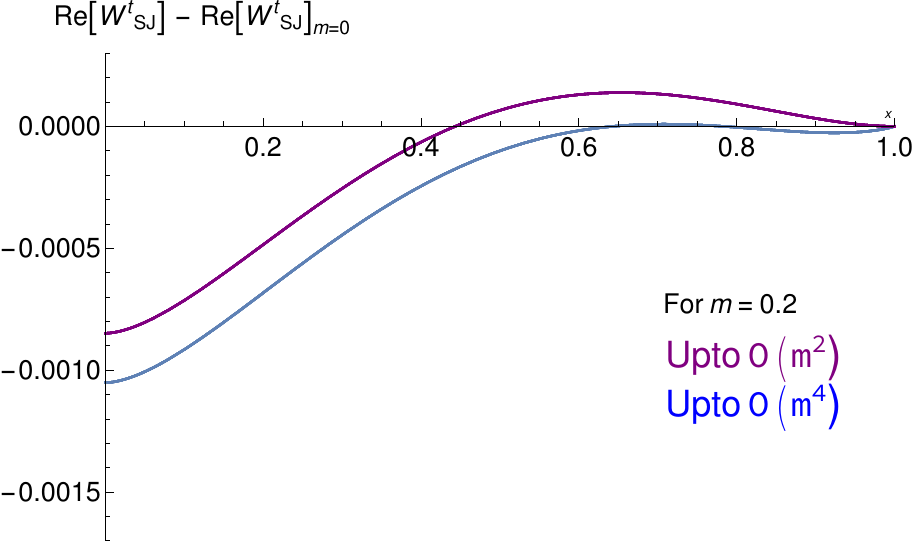}&
\includegraphics[height=3.5cm]{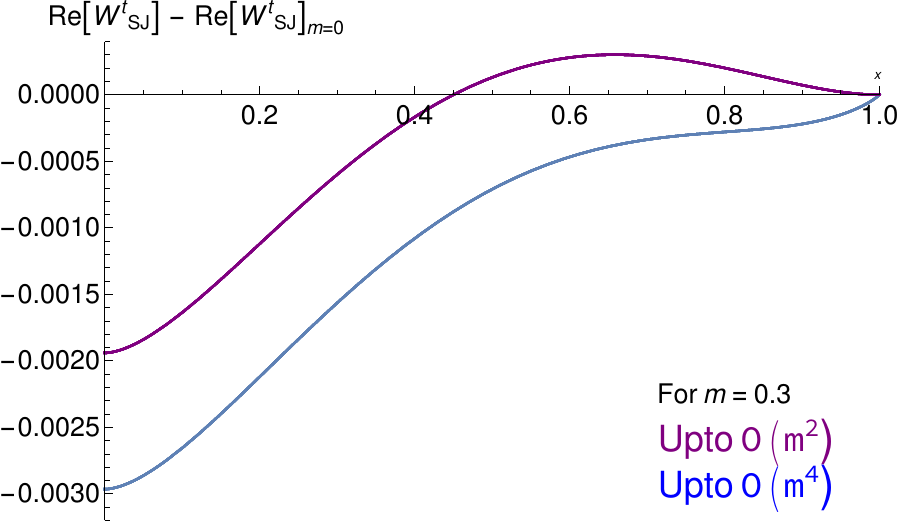}&
\includegraphics[height=3.5cm]{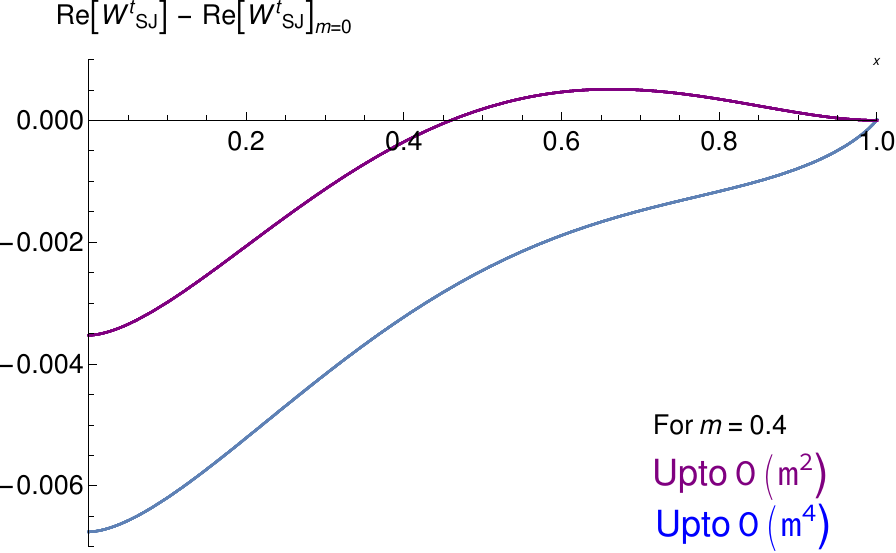}
\end{tabular}}
\caption{Plot of $\re(\wsj^t)-\re(\wsj)_{m=0}$ vs x for $\cO(m^2)$ and $\cO(m^4)$ corrections. The plots in the first line
  are all for timelike separated points while those in the second line are for spacelike separated points.}
\label{fig:diff-re}
\end{figure}
Our calculation suggest that the $\cO(m^4)$ corrections are largely irrelevant to $\wsj$ in the center and the corner
of $\diam$.  A question that occurs is whether increasing the order of the correction makes a significant difference. In Fig.~\ref{fig:diff-re}
we show the sensitivity of  the difference in $\wsj^t$ with $\wmink$,   to $\cO(m^2)$ and $\cO(m^4)$.   As we can see,
the $\cO(m^4)$  corrections while  not negligible, are relatively small for $m\sim 0.2$.

What we have seen from our calculations so far is that in the small mass approximation, $\wsj$  continues to behave in
the center like the massless Minkowski vacuum, and in the corner as the massive Mirror vacuum. This behavior is very
curious since it suggests an unexpected mass dependence in  $\wsj$, not seen in the standard vacuum. In order to explore
this we must examine $\wsj$ for large masses. Because  we are limited in our analytic calculations, we now proceed to
a fully numerical calculation of  $\wsj$ in a causal set for comparison.  

\section{The massive SJ Wightman function in the causal set }\label{sec:causet}

This curious behavior of the SJ vacuum seems to be a result of our small mass approximation. Since we do not know how
to evaluate it analytically for finite mass we look for a numerical evaluation on a causal set $\cc_\cm$ that is approximated
by $\diam$ (see \cite{Bombelli:1987aa,Surya:2019ndm} for an introduction to causal sets). 

$\cc_\cm$ is obtained via a Poisson sprinkling into  $\diam$ at density $\rho$. The expected total number of elements is
then $\langle N \rangle =\rho V_\cm$, where $V_\cm$ is the total volume of the spacetime manifold in which the elements are sprinkled. The partial order is then determined by the causal relation among the elements i.e. $X_i \prec X_j$ iff $X_j$ is in the causal future of $X_i$.

The causal set SJ Wightman function $\wsjc$ is constructed using the same procedure as in the continuum,
namely starting from the  causal set retarded Green function. The massive Green function  in $\diam$ is
\cite{johnston,Johnston:2008za}
\be
G_m=\left(\id+\frac{m^2}{\rho}G_0\right)^{-1}G_0,
\ee
where $\id$ is the $N\times N$ identity matrix and $G_0$ is the massless retarded Green function. Defining the  causal
matrix  $C$ on $\cc_\cm$ as   $C_{ij}=1$ if $X_i \prec X_j$ and $C_{ij}=0$ otherwise, we see that $G_0=C/2$. 

We sprinkle $N=10,000$ elements in $\diam$ of length $2$, i.e., of  density $\rho=2500$ for
$m=0.2,0.4,0.6,0.8,1, 2$ and $10$.   
In Fig.~\ref{fig:dis-ev}  we plot the SJ eigenvalues for these various masses. We find that the eigenvalues for small
masses are very close to the massless eigenvalues, especially for small $n$. As $n$ increases, they become
indistinguishable. In Fig.~\ref{fig:scatterwsjc} we show the scatter plot of $\wsjc$. For the smaller masses, $\wsjc$
tracks the massless case closely, but at larger masses $m\sim 10$ it shows the characteristic behavior expected of the
massive Minkowski  vacuum \cite{Johnston:2009fr}.  
\begin{figure}[h]
\centering{\begin{tabular}{cc}
\includegraphics[height=4cm]{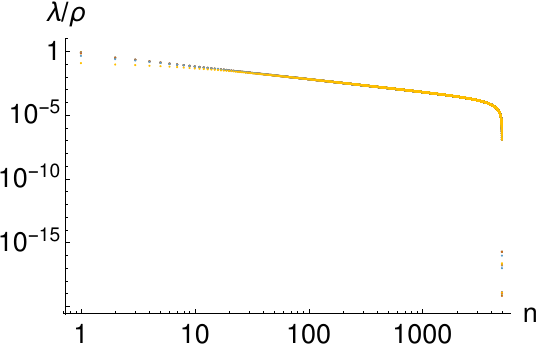}& \hskip 1cm
\includegraphics[height=4cm]{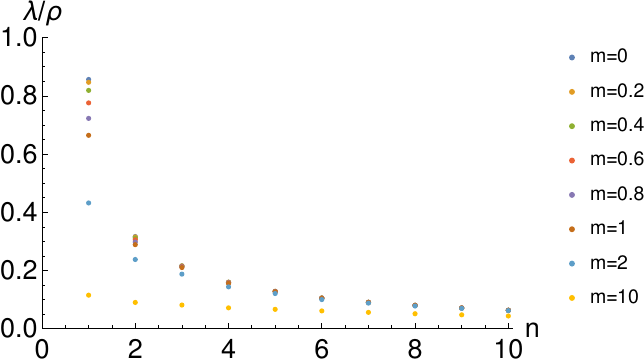}\\
(a)&(b)
\end{tabular}
\caption{(a):A log-log plot of the SJ eigenvalues $\lambda$ divided by density $\rho$ vs $n$ for $m=0,0.2,0.4,0.6,0.8,1,2$ and $10$, (b): a plot of $\lambda/\rho$ vs $n$ for small $n$.}
\label{fig:dis-ev}}
\end{figure}
\vskip 0.5cm
\begin{figure}[h]
\centering{\begin{tabular}{cc}
\includegraphics[height=4cm]{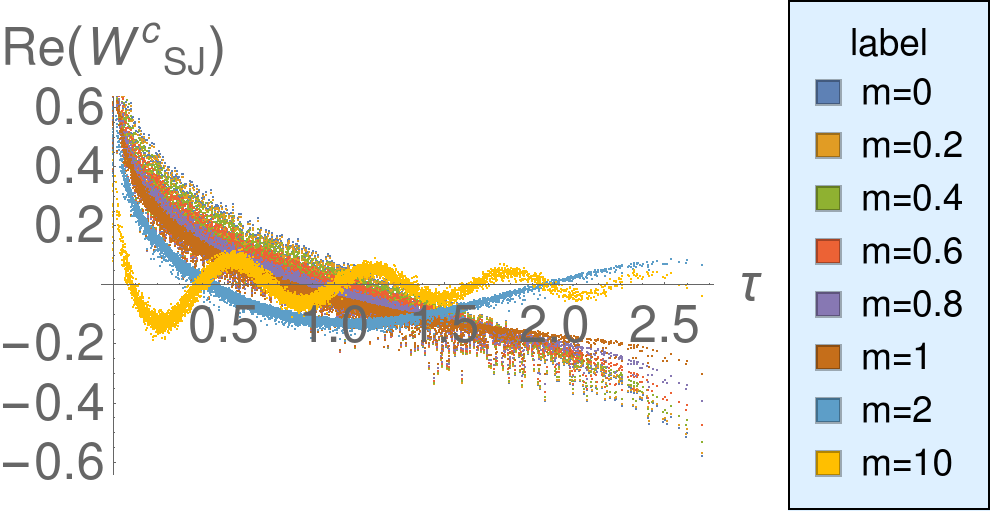}& \hskip 1cm
\includegraphics[height=4cm]{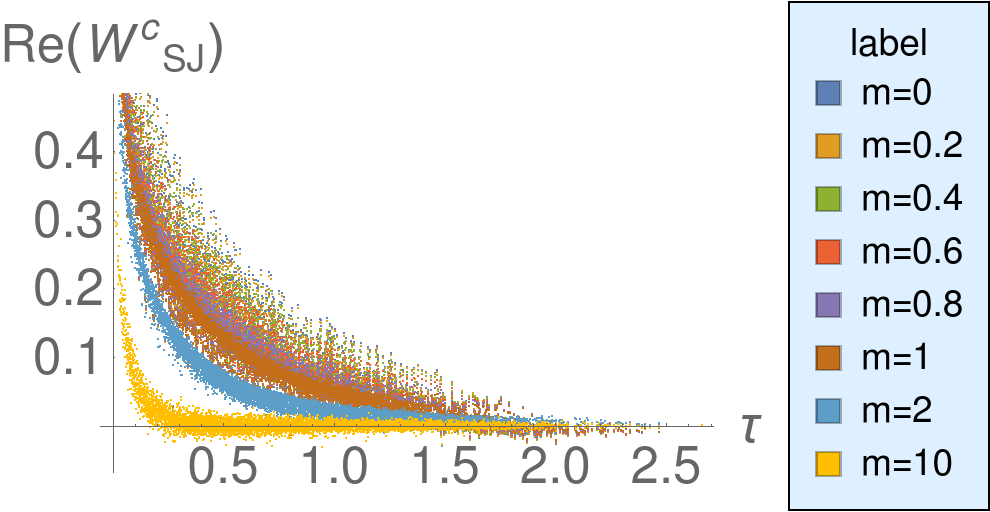}\\
(a)&(b)
\end{tabular}
\caption{$\wsjc$ for $m=0,0.2,0.4,0.6,0.8,1,2$ and $10$ for timelike and spacelike separated points.}
\label{fig:scatterwsjc}}
\end{figure}

Next, we focus our attention to the center of the diamond so that we can compare with our analytic results.  We consider
a central region $\diam_l$ with $l =0.1$. Figs ~\ref{fig:smallmass-sj-tau} and ~\ref{fig:largemass-sj-tau} shows $\wsjc$ vs proper time and proper distance for  timelike
and spacelike separated pairs, respectively for small and large masses.  The comparisons with the massless and massive
Minkowski vacuum show a curious behavior. For the small $m$ values $\wsjc$  agrees perfectly with our analytic results
above, namely that $\wsj$ is more like $\wmink$ than $\wminkm$. However, as $m$ increases,  $\wminkm$ approaches
$\wmink$,  coinciding with it at $m=2\cof$. After this value of $m$,
$\wsjc$ then tracks $\wminkm$ rather than $\wmink$. This transition is continuous, and suggests that the small $m$ behavior
of $\wsjc$ goes continuously over to $\wmink$, unlike $\wminkm$. 

Next we compare $\wsjc $ in the  corner of the diamond with  $\wmirrm$
and $\wrindm$ for all pair of spacetime points in the left corner of the diamond for a range of masses. Instead of
plotting the actual functions, we consider the correlation plot as was done in \cite{Afshordi:2012ez}. 
To generate these plots we considered a small causal diamond in the corner of length $l=0.2$ which contained  118
elements. $\wmirrm$ and $\wrindm$ were calculated for each pair of elements and compared with $\wsjc$ (see Figs.~\ref{fig:sj-mir} and \ref{fig:sj-rind}). In
\cite{Afshordi:2012ez} the IR cut-off $\cof$ was determined from Fig.~\ref{fig:sj-rind} for $m=0$ by setting the
intercept  to zero. We observe that there is much better
correlation between $\wsj$ and $\wmirrm$ as compared to $\wrind$ for all masses which is in
agreement with our analytic calculations.

\section{Discussion}
\label{sec:conclusions}

In this work, we calculated the massive scalar field SJ  modes up to  fourth order of mass. The
procedure we have developed for solving the central  eigenvalue problem can be used in principle  to find the SJ modes for higher
order mass corrections.  
 
Our work shows that $\wsjc$ in the causal set  is compatible with our analytic results in the small
mass regime. The curious behavior of  $\wsjc$ with mass in the center of the diamond suggests a hidden
subtlety in the finite region, ab-initio construction, that has hitherto been missed. In particular, it shows that the massive $\wsj$ in 2D has a well defined massless limit, unlike $\wminkm$. Such a continuous behavior with mass was also seen in the calculation of $\wsjc$ in de Sitter spacetime \cite{Surya:2018byh}. A possible source for this
behavior is that $\wsj$ is built from the advanced/retarded Green functions, which  themselves have a well defined 
 massless limit. It is surprising however  that $\wsj$ for small mass lies in the massless representation of the Poincare
 algebra rather than the expected massive representation. What this means for the uniqueness of the SJ vacuum is unclear
 and we hope to explore this in future work. 
 
In the corner of the diamond, we see that as in the massless case,  $\wsj$ resembles the  massive mirror
vacuum for all masses.  Thus, the expectation  (see \cite{Afshordi:2012ez})  that the massive 
$\wsj$ must be the Rindler vacuum seems to be incorrect.

We end with a broad comment on the SJ formalism. It is possible to construct a $\wsj$ using a different  inner product on
$\cF(M,g)$, instead of the $\mL^2$ inner product adopted in this work.  One way of doing this is to introduce  a non
trivial weight function in the integral. Thus, different choices of inner product give different SJ Wightman functions
even with the same defining conditions (Eqn.~(\ref{eq:wsjconditions})).  As an almost trivial example, in  Appendix
~\ref{sec:rindler}  we show that the choice of inner product can yield the Rindler vacuum in the corner.  In future work
we hope to  explore  this possibility in more detail. 

\begin{figure} 
\centerline{\begin{tabular}{cc}
\includegraphics[height=5cm]{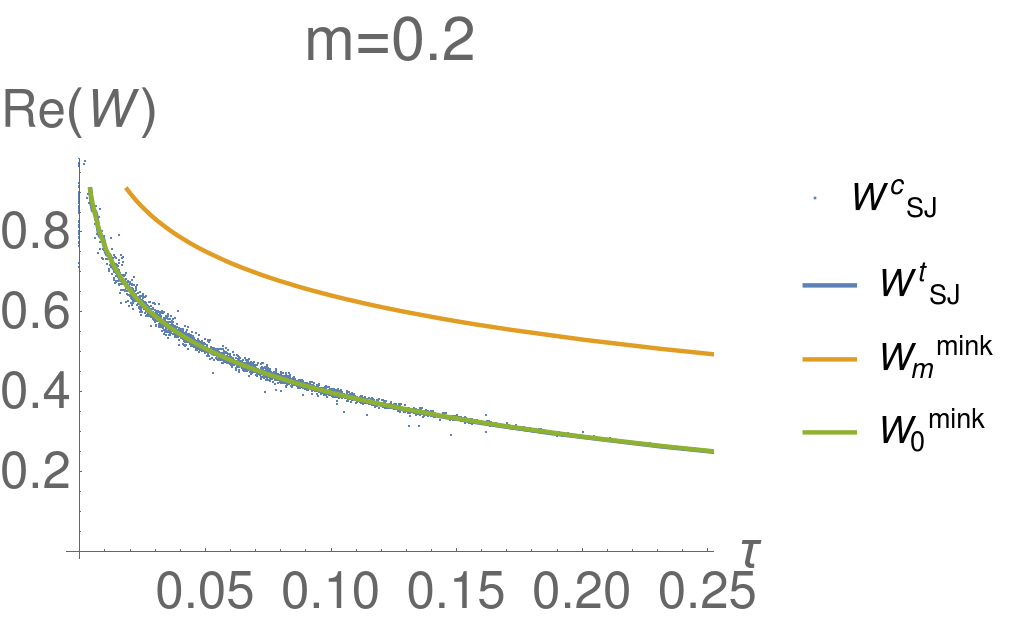}&
\includegraphics[height=5cm]{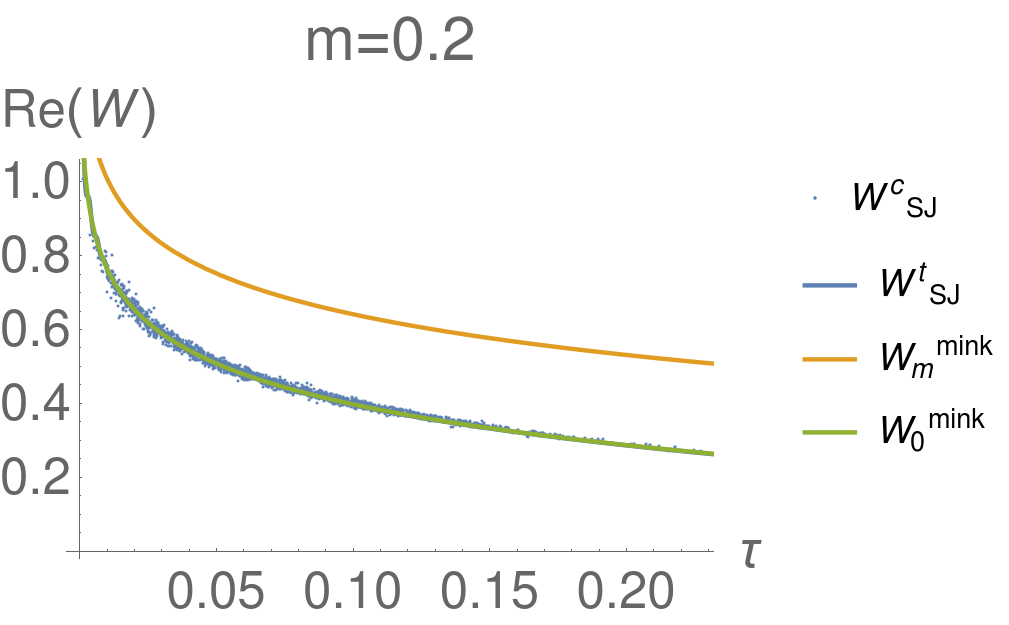}\\
\includegraphics[height=5cm]{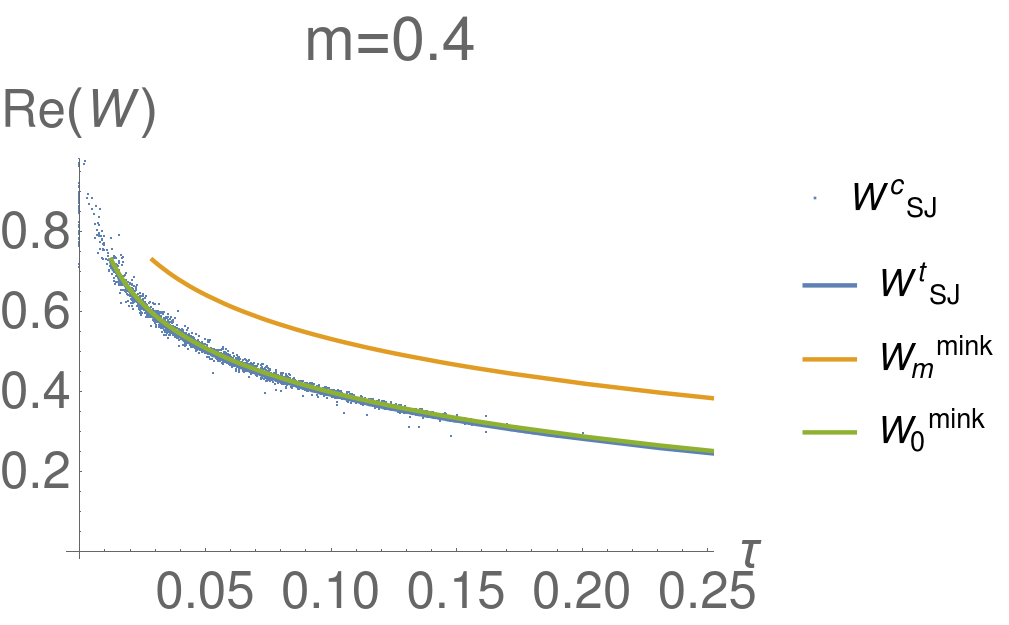}&
 \includegraphics[height=5cm]{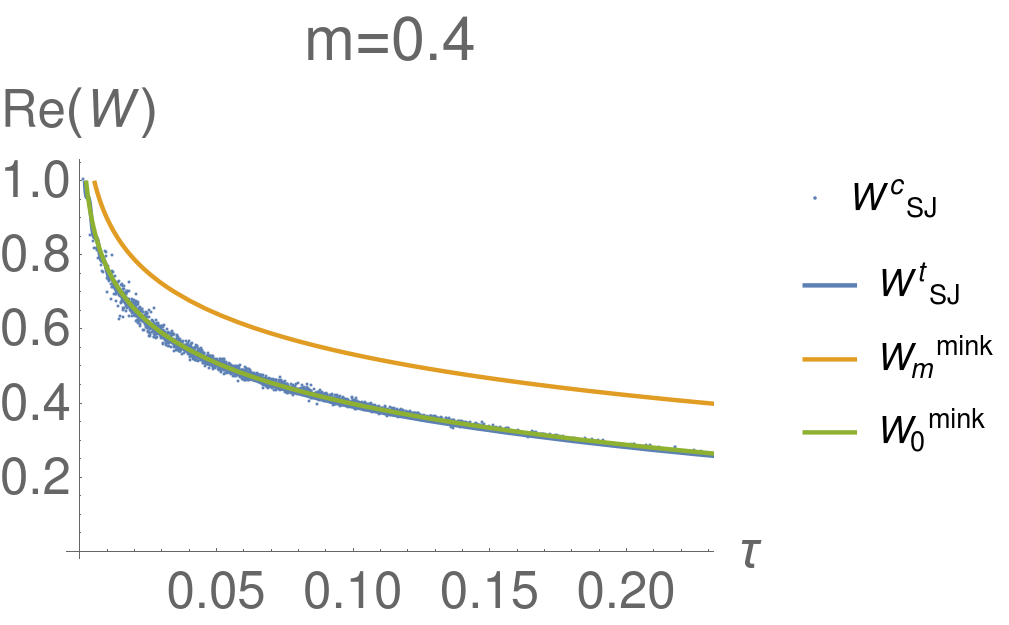}\\
            \end{tabular}}
\caption{$\wsjc$ (blue dots) vs proper time ($\tau$) in the center of the diamond. The plots on the left are for
  timelike separated points and those on the right are for spacelike separated points, for the small mass regime, $m=0.2$ and $ 0.4$. We show
  $\wmink$ (green), $\wminkm$ (orange) and our previous analytic calculation of $\wsj$ (blue
  line). The scatter plot clearly follows the massless green curve for these masses.}
        \label{fig:smallmass-sj-tau}\end{figure}
        \begin{figure} 
\centerline{\begin{tabular}{cc}
\includegraphics[height=5cm]{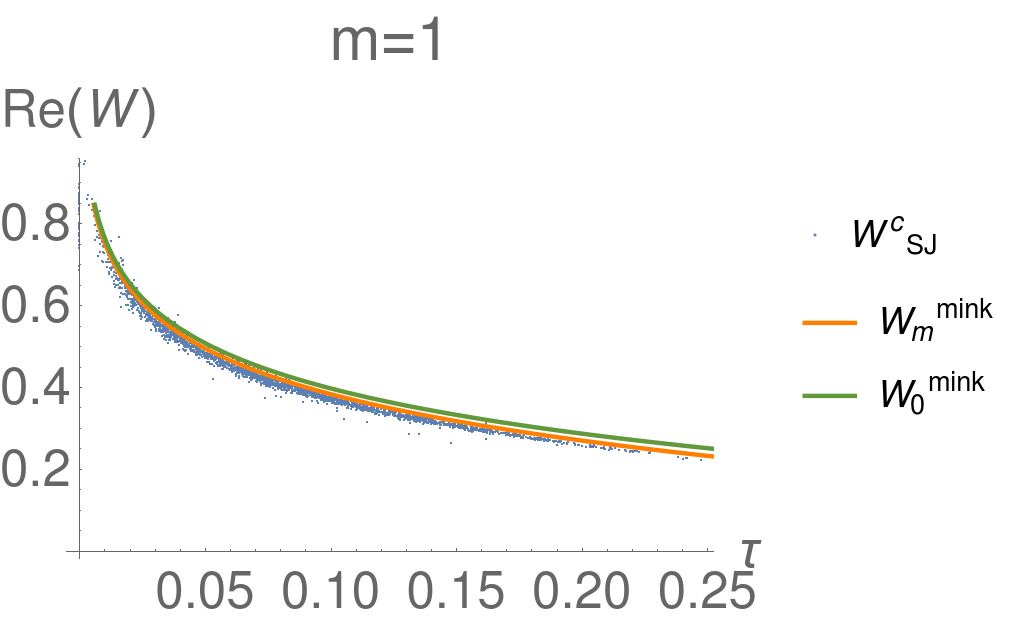}&
\includegraphics[height=5cm]{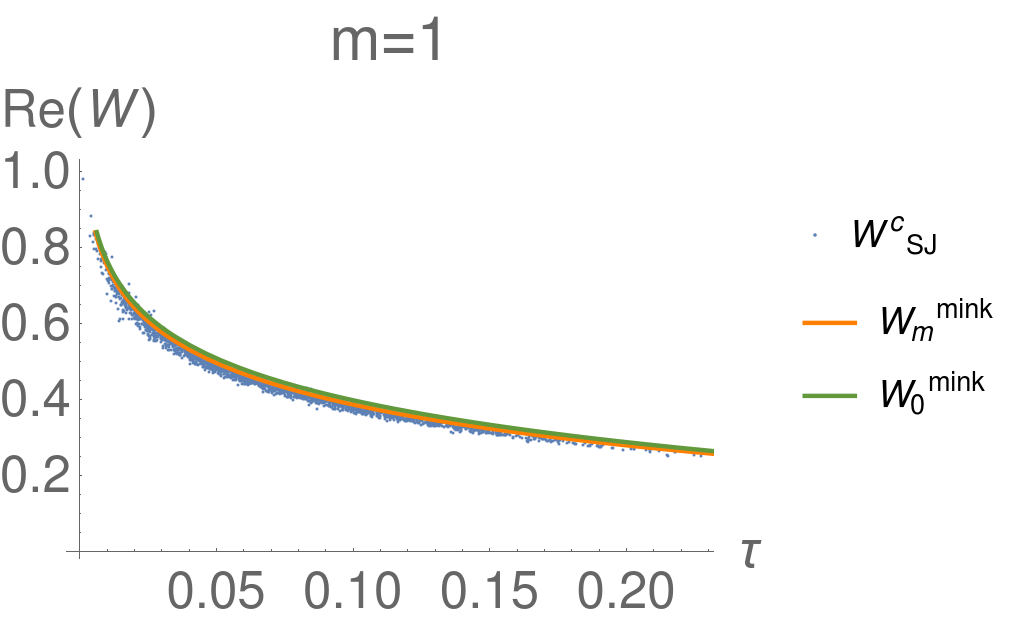}\\
\includegraphics[height=5cm]{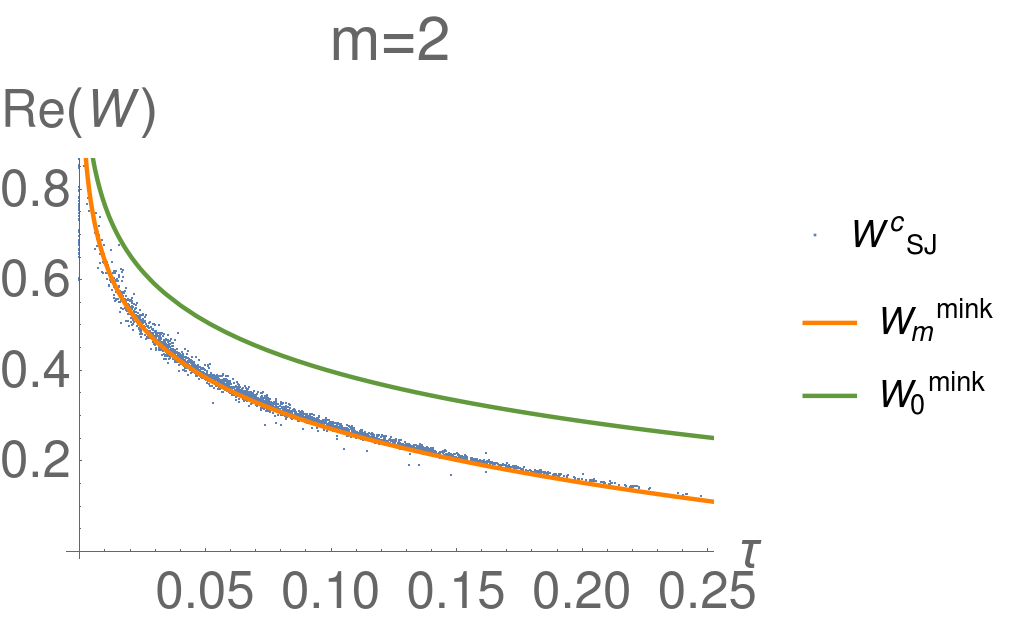}&
\includegraphics[height=5cm]{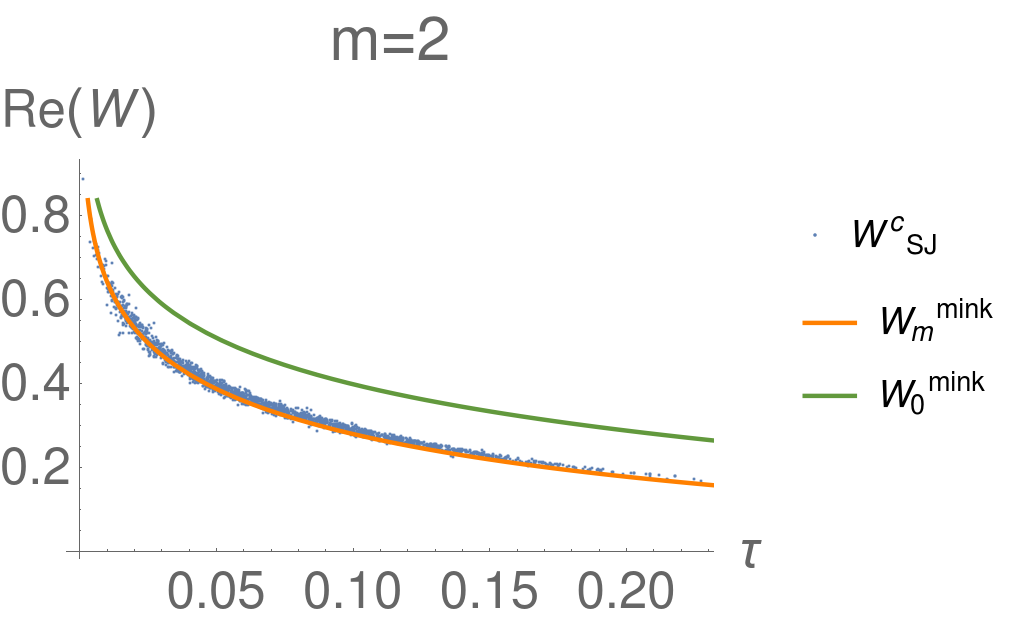}\\
\end{tabular}}
\caption{ The same plots as in Fig \ref{fig:smallmass-sj-tau} but for  $m=1$ and $m=2$. The scatter plot follows 
  the massive orange curve for $m\geq m_c$.}
\label{fig:largemass-sj-tau}
\end{figure}
\vskip 1 cm 
\begin{figure}
\centering{\begin{tabular}{cccc}
\includegraphics[height=4cm]{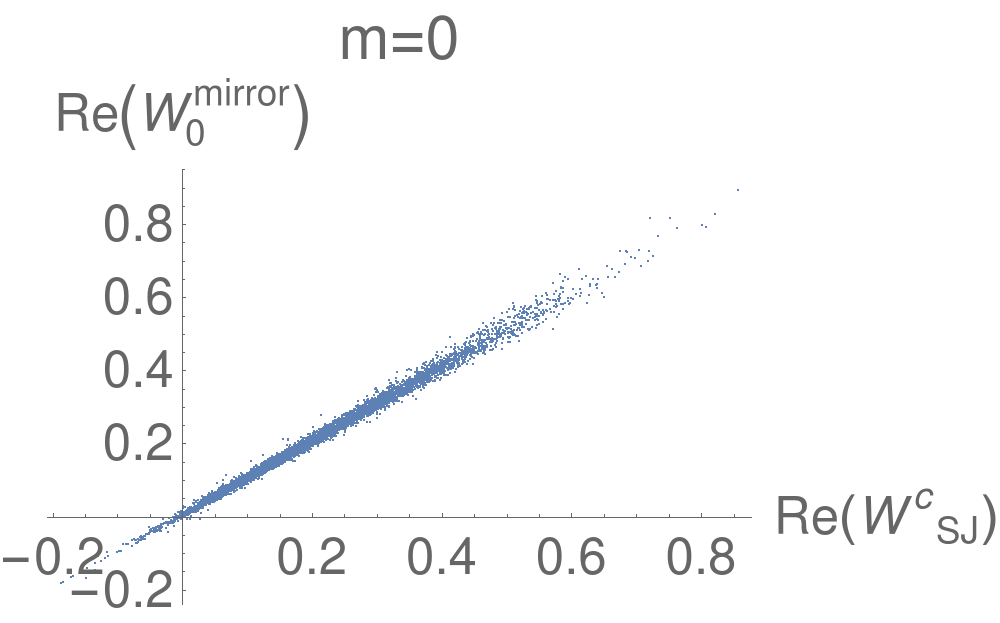}& 
\includegraphics[height=4cm]{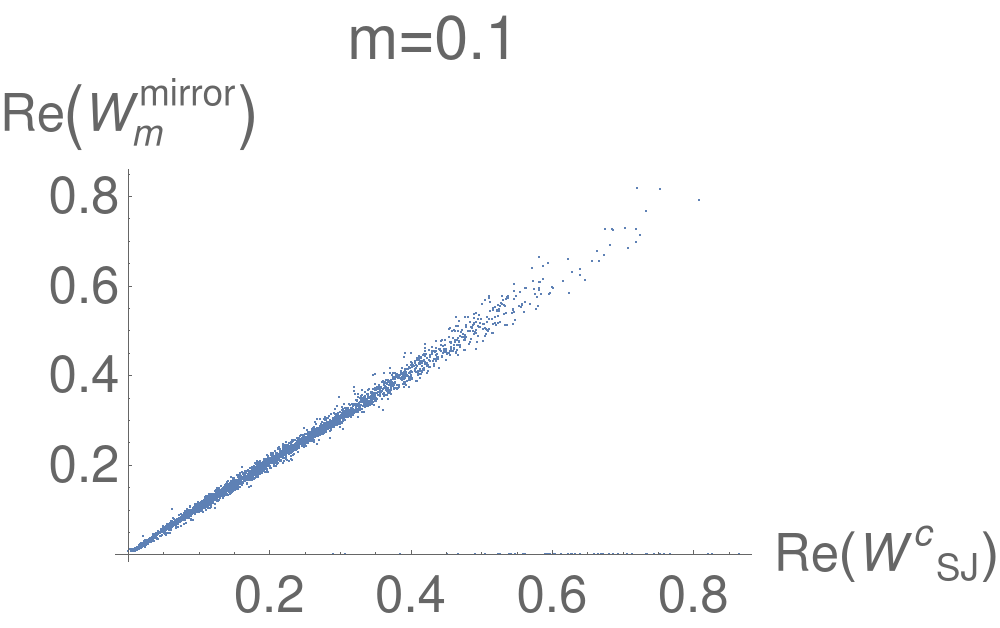} \\ 
\includegraphics[height=4cm]{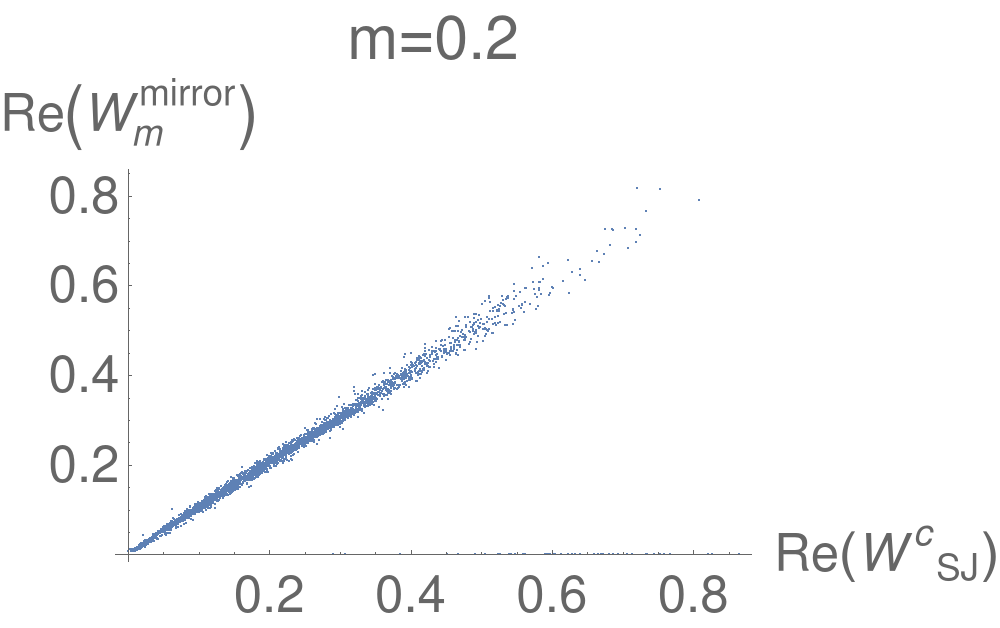}& 
\includegraphics[height=4cm]{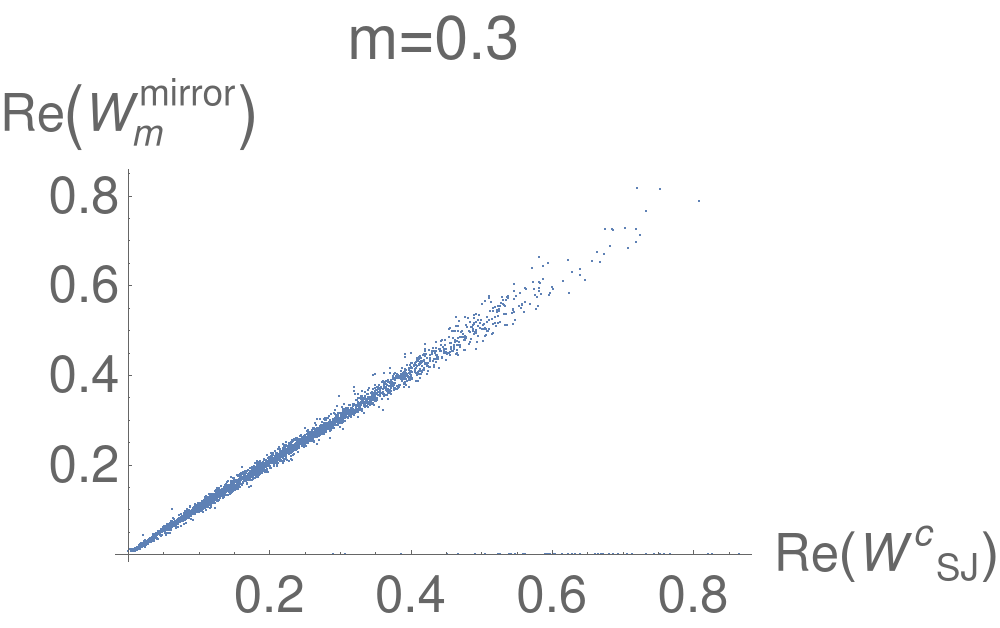}\\
\includegraphics[height=4cm]{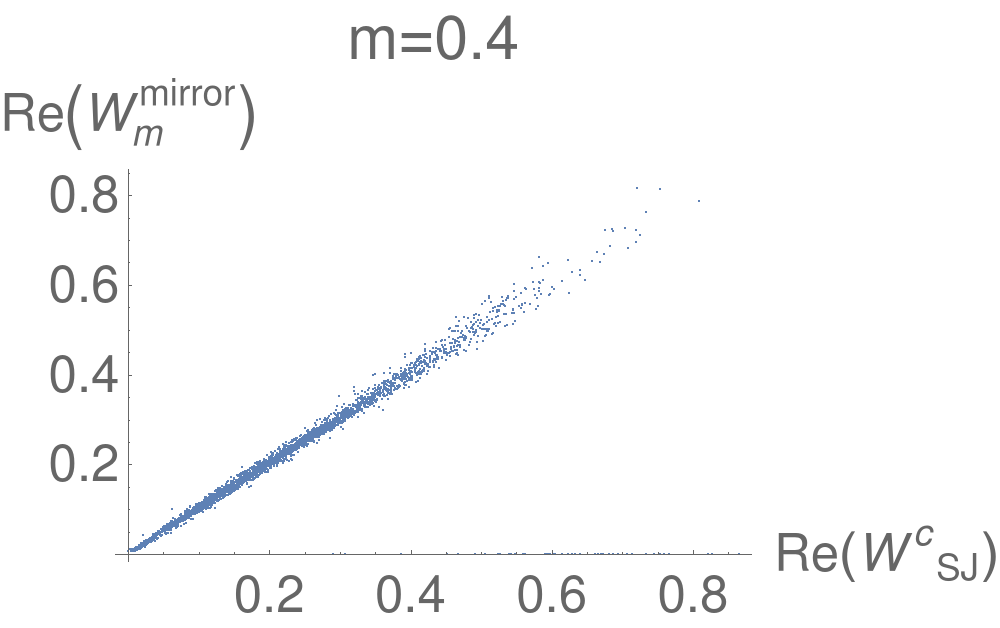}& 
\includegraphics[height=4cm]{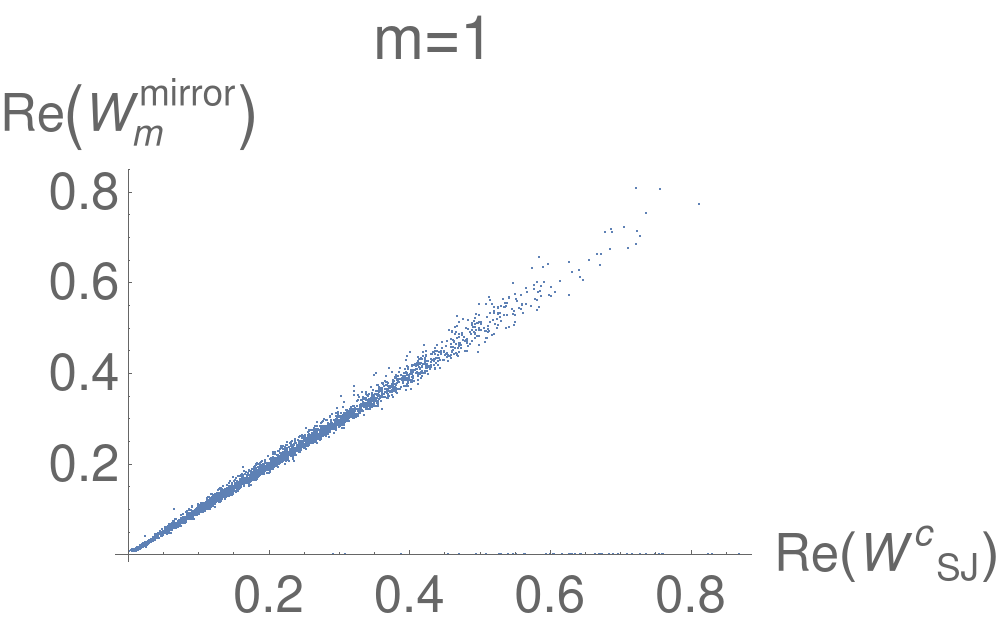}\\ 
\includegraphics[height=4cm]{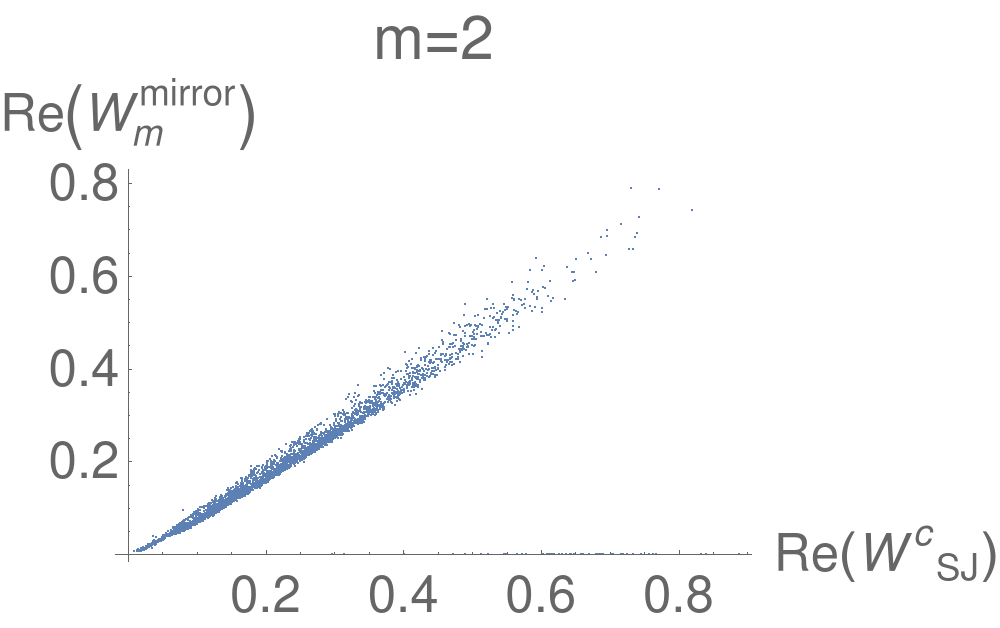}& 
\includegraphics[height=4cm]{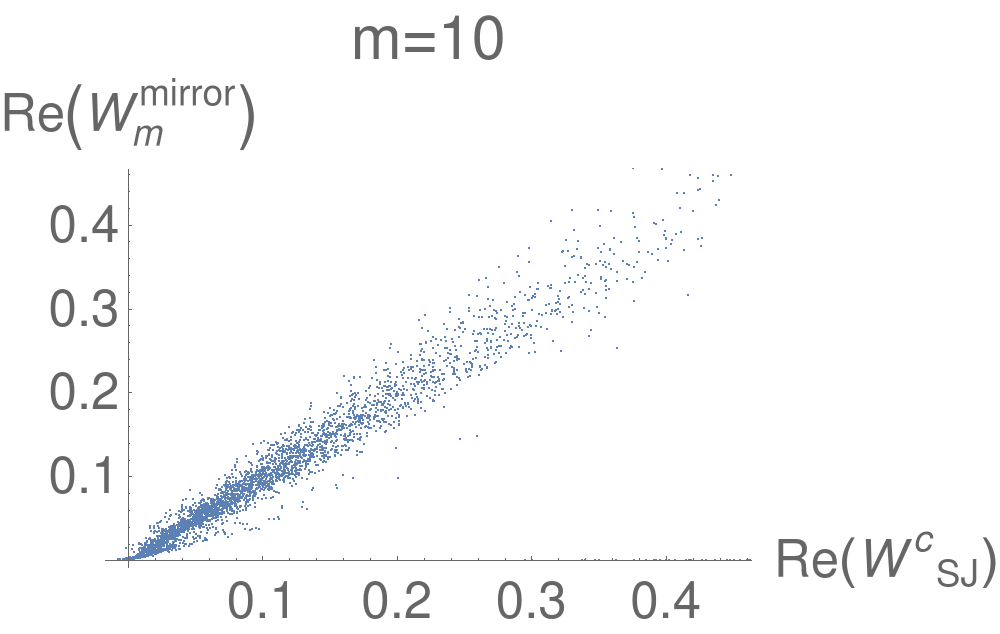}
\end{tabular}
\caption{Correlation plot of  $\wsjc$ vs $\wmirrm$ in the left corner of the diamond for a range of masses.}
\label{fig:sj-mir}}
\end{figure}
\vskip 1 cm 
\begin{figure} 
\centerline{\begin{tabular}{cccc}
\includegraphics[height=4cm]{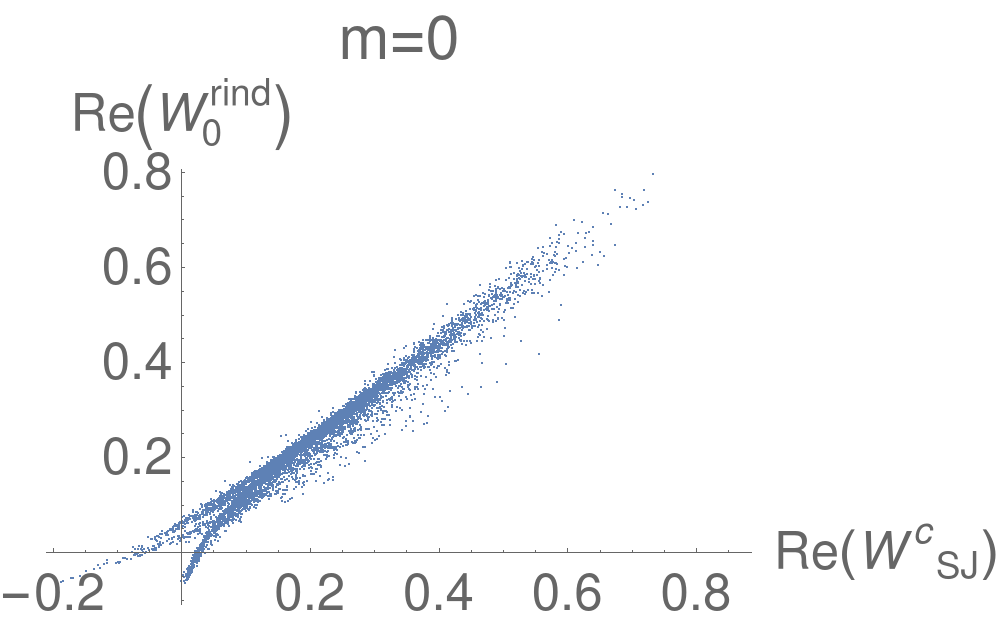}& 
\includegraphics[height=4cm]{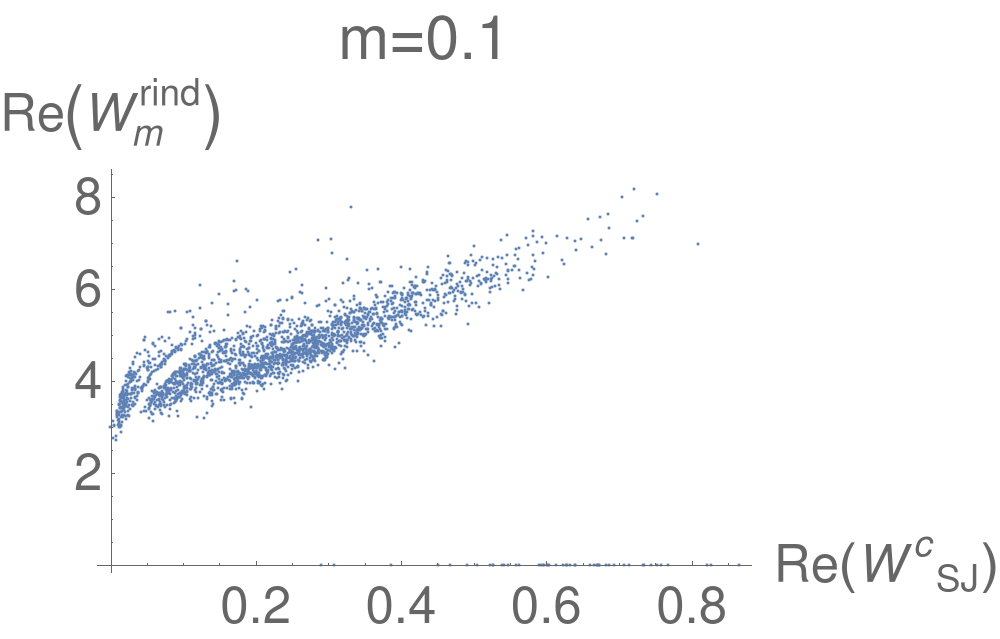}\\
\includegraphics[height=4cm]{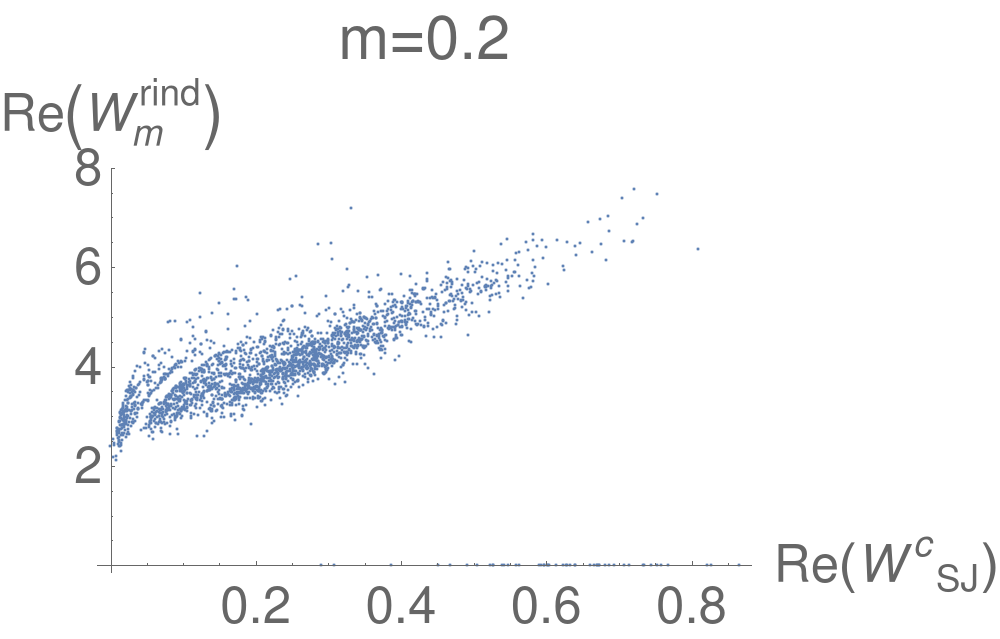}&
\includegraphics[height=4cm]{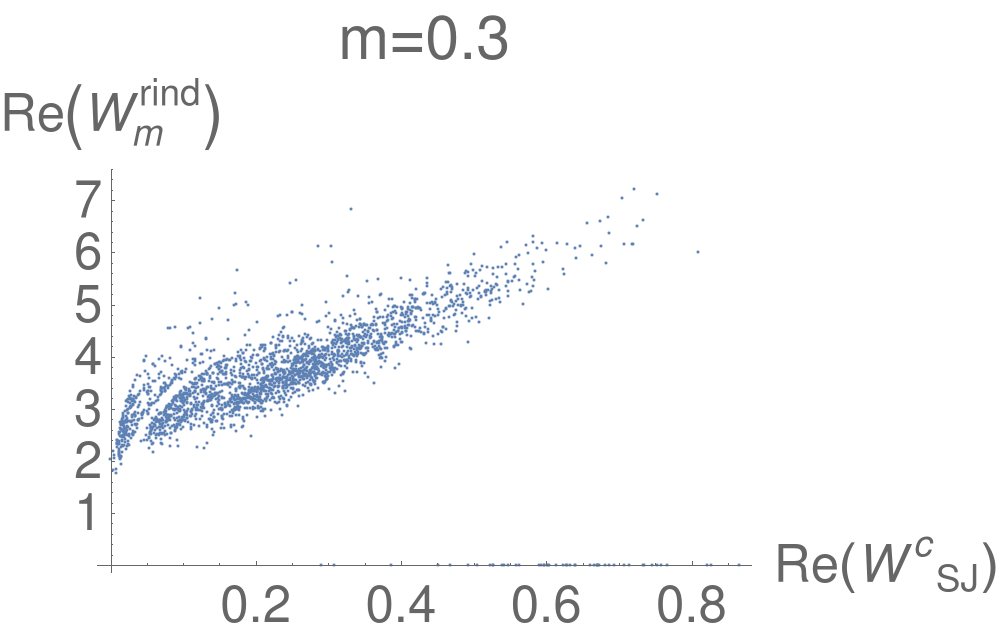}\\
\includegraphics[height=4cm]{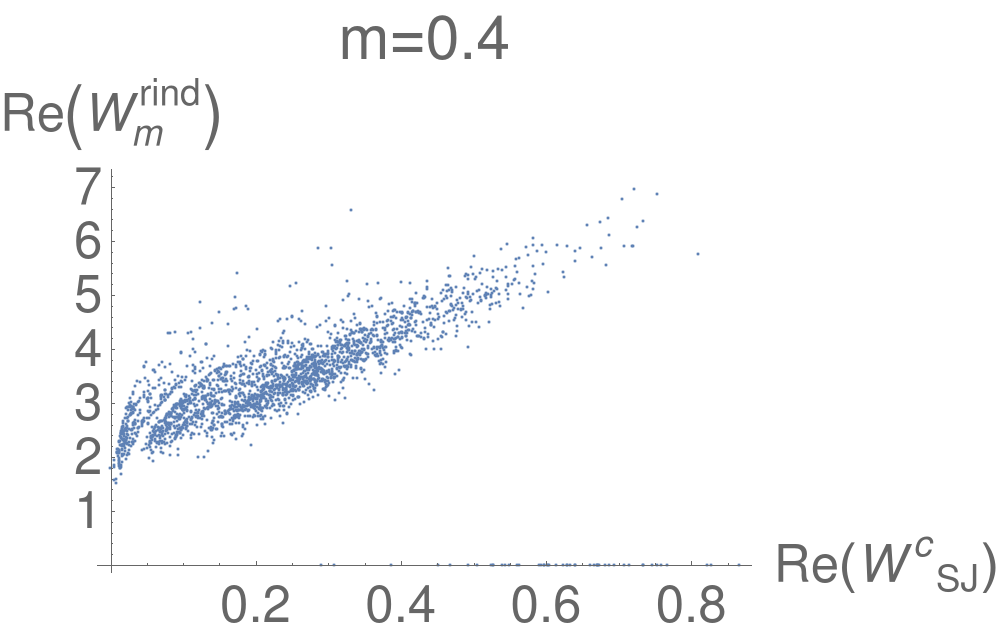}&
\includegraphics[height=4cm]{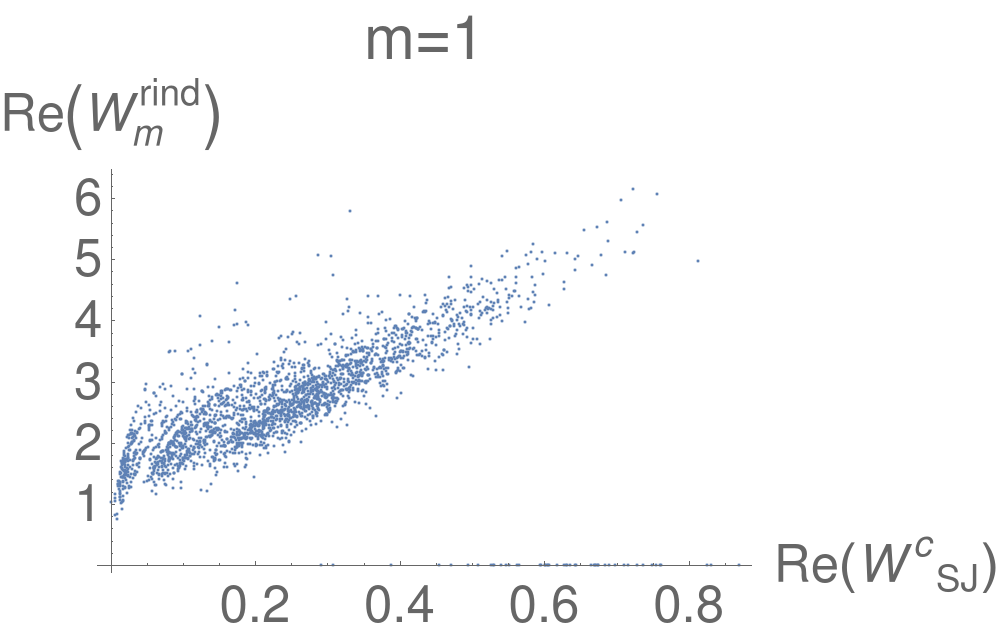}\\
\includegraphics[height=4cm]{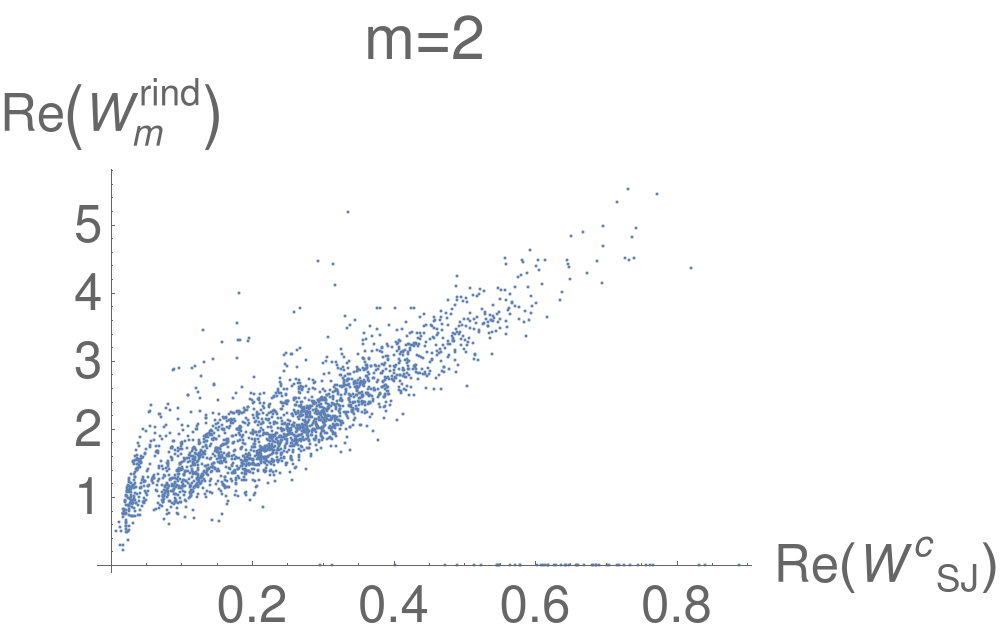}&
\includegraphics[height=4cm]{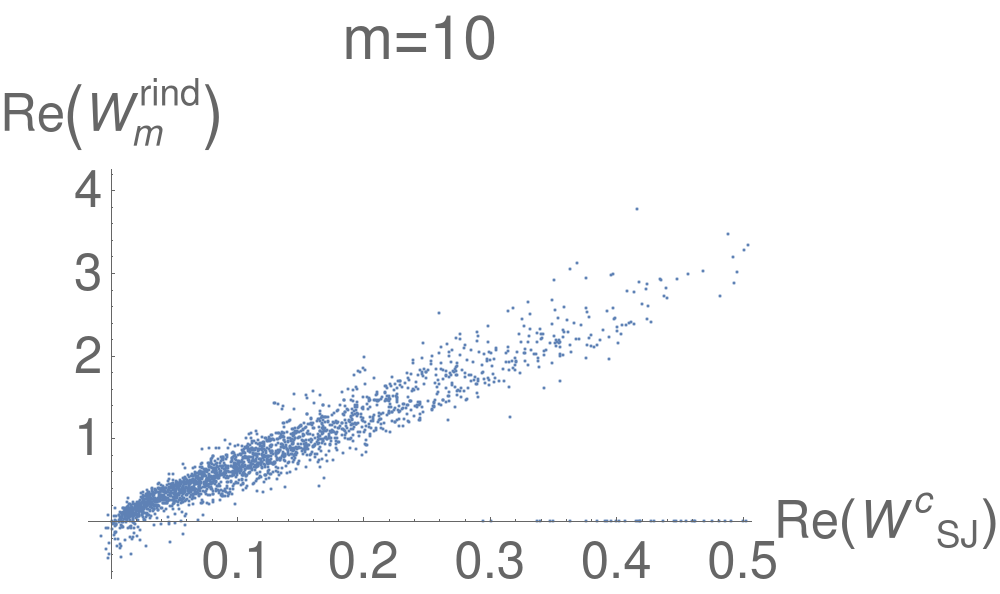}
\end{tabular}}
\caption{Correlation plot of  $\wsjc$ vs $\wrindm$ in the left corner of the diamond for a range of masses.}
\label{fig:sj-rind}
\end{figure}
\section{Acknowledgement}
We would like to thank Nomaan X, Rafael D. Sorkin, Yasaman K. Yazdi, Sujit K. Nath, and Joseph Samuel  for helpful discussions. We would also like to thank S. Vaidya for help with references. During this work  S.S. was supported in part by FQXi-MGA-1510 of the Foundational Questions Institute and an Emmy Noether Fellowship at the Perimeter Institute for Theoretical Physics. S.S. is currently partly supported by a Visiting Fellowship at the Perimeter Institute for Theoretical Physics.

\appendix

\section{Some expressions and derivation of results used in Sec.~\ref{sec:soln}}\label{expressions}

In this appendix we add some of the details of the calculations of Sec.~\ref{sec:soln}. These details include the simplified expression of $F_{ik,n}(u,v)$ and $G_{ik,n}(u,v)$ for $n=0,1,2$ , $Z^{A/S}_l(u,v)$ and $i\hD\circ Z^{A/S}_l(u,v)$, for $l=0,1,2$ and $\PAS_n(u,v)$ for $n=0,1,2$ up to the order in $m^2$, which is required in the calculation of SJ modes up to $\cO(m^4)$. Some details of the calculations of $u^A_k(u,v)$ and $u^S_k(u,v)$ can be found in Appendix \ref{sec:dasym} and \ref{sec:dsym} respectively.

Evaluating $F_{ik,n}(u,v)$ and $G_{ik,n}(u,v)$ defined in Eqn.~(\ref{f and g}) for $n=0,1,2$, we get
\bea
F_{ik,0}(u,v)&=&v, \nonumber\\
F_{ik,1}(u,v)&=&\frac{iv^2}{2k}-\frac{1}{4}(v^2u+2v+u), \nonumber\\
F_{ik,2}(u,v)&=&-\frac{v^3}{8k^2}-\frac{i}{24k}(2v^3u+3v^2-1)+\frac{1}{48}(v^3u^2+v^3+6v^2u+3vu^2+3v+2u). \nonumber
\eea
\bea
G_{ik,0}(u,v)&=&-1, \nonumber\\
G_{ik,1}(u,v)&=&-\frac{iv}{2k}+\frac{1}{4}(v^2+2uv+1), \nonumber\\
G_{ik,2}(u,v)&=&\frac{v^2}{8k^2}+\frac{i}{24k}(2v^3+3uv^2-u)-\frac{1}{48}(2v^3u+3v^2u^2+3v^2+6uv+u^2+1). \nonumber
\eea

Next, we list $Z^A_l(u,v)$ and $Z^S_l(u,v)$ defined in Eqn.~(\ref{eq:serieszl}) and Eqn.~(\ref{eq:symasym}) for $l=0,1,2$ up to the required order of $m^2$.
 \begin{eqnarray} 
Z^A_0(u,v) =0, & \quad & Z^S_0(u,v) \approx 2-m^2uv+\frac{m^4}{8}u^2v^2, \nonumber \\
Z^A_1(u,v) \approx (u-v)-\frac{m^2}{4}uv(u-v), & \quad & Z_1^S(u,v)\approx (u+v)-\frac{m^2}{4}uv(u+v),  \nonumber \\
Z^A_2(u,v)\approx u^2-v^2,& \quad & Z^S_2(u,v) \approx u^2+v^2.
 \end{eqnarray}

Next, we list $i\hD\circ Z^A_l(u,v)$ and $i\hD\circ Z^S_l(u,v)$ for $l=0,1,2$ up to the required order of $m^2$, where $i\hD\circ Z_l(u,v)$ is described in Eqn.~(\ref{eq:ideltaz}) 
\begin{eqnarray} 
  i\hD \circ Z^A_0(u,v)&=&0,   \nonumber \\
  i\hD \circ Z^S_0(u,v) &\approx&  -i\frac{L^2}{24}(u+v) (48-12 m^2(1+uv)+m^4(3+3uv+u^2v^2))),
                                                               \nonumber \\
  i\hD \circ Z^A_1(u,v)&\approx& iL^2\left(-\frac{1}{2}(u^2-v^2)+\frac{m^2}{24}(2uv+1)(u^2-v^2)\right),  \nonumber \\
  i\hD \circ Z^S_1(u,v) &\approx&  iL^2\left(\frac{1}{2}(2-u^2-v^2)-\frac{m^2}{24}\left(6(1+2uv)+(u^2+v^2)(1-2uv)\right)\right),\nonumber \\
  i\hD \circ Z^A_2(u,v) &\approx& \frac{iL^2}{3}\left((u-v)-(u^3-v^3)\right),   \nonumber \\
i\hD \circ Z^S_2(u,v) &\approx& \frac{iL^2}{3}\left((u+v)-(u^3+v^3)\right).
  \end{eqnarray} 

$\PAS_n(u,v)$ defined in Eqn.~(\ref{eq:pas}) for $n=0,1,2$.
\bea
P_0^A(u,v) &=&0, \nonumber\\
P_1^A(u,v) &=& \left(i\left(\frac{1}{2k}-Q^A_1(k)\right)(u-v)-\frac{1}{4}(u^2-v^2)\right),\nonumber\\
P_2^A(u,v) &= & -\frac{u^2-v^2}{8k^2}-\frac{i}{24k}(u-v)(2u^2+2v^2+5uv+1)+\frac{1}{24}(1+uv)(u^2-v^2),
  \nonumber \\ &&+ Q^A_1(k)\left(\frac{u^2-v^2}{2k}+\frac{i}{4}(u-v)(uv+1)\right)-iQ^A_2(k)(u-v) ,
\label{eq:pai}
\eea
\bea
P_0^S(u,v) &=& \frac{}{}-2+2ik(u+v), \nonumber\\
P_1^S(u,v)&=& -\frac{i(u+v)}{2k}+\frac{1}{4}(u^2+v^2+4uv+2)-\left(u^2+v^2+\frac{ik}{2}(uv+3)(u+v)\right)+iQ_1^S(k)(u+v),
\nonumber\\
P_2^S(u,v)&=& \frac{u^2+v^2}{8k^2}+\frac{i}{24k}(u+v)(2u^2+2v^2+uv-1)-\frac{1}{48}((2uv+4)(u^2+v^2)+6v^2u^2+12uv+2)
\nonumber\\ &&
+2k\left(-\frac{i(u^3+v^3)}{8k^2}+\frac{1}{24k}((2uv+3)(u^2+v^2)-2)+\frac{i}{48}(u+v)(u^2v^2+u^2+v^2+8uv+5)\right)\nonumber\\
&&
+Q_{1}^S(k)\left(-\frac{u^2+v^2}{2k}-\frac{i}{4}(uv+3)(u+v)\right)+iQ_{2}^S(k)(u+v), \label{psk}
\eea
where $Q^A_n(k)$ and $Q^S_n(k)$ for $n=1,2$ can be found in Sec.~\ref{sec:dasym} and \ref{sec:dsym} respectively.

\subsection{Details of the calculations for the antisymmetric SJ modes}\label{sec:dasym}
In this section we solve Eqn.~(\ref{eq:hask}) for $H^A_k(u,v)$ by constructing each $m^{2n}P^A_n(u,v)$ out of $Z_l(u,v)$ and $i\Delta\circ Z_l(u,v)$ for different $l$. Let us start with the first non zero $P^A_n(u,v)$. It can be observed that $m^2P^A_1(u,v)$ can be constructed out of $m^2Z^A_1(u,v)$ and $m^2i\Delta\circ Z^A_1(u,v)$ up to $\cO(m^2)$ as
\be
m^2P^A_1(u,v)=\frac{im^2}{2L^2}\left(\frac{L^2}{k}\left(1-2kQ^A_1(k)\right)Z^A_1(u,v)-i\Delta\circ Z^A_1(u,v)\right)
\ee
To make the term in the bracket look like $\left(i\Delta+\frac{L^2}{k}\right)\circ Z^A_1(u,v)$, we fix
\be
Q^A_1(k)=\frac{1}{k}.
\ee
Therefore Eqn.~(\ref{eq:hask}) for $H^A_k(u,v)$ up to $\cO(m^4)$ can be written as
\bea
&&\left(i\Delta+\frac{L^2}{k}\right)\circ\left(H^A_k(u,v)+\frac{im^2\cos(k)}{2k}Z^A_1(u,v)\right)-\frac{m^4L^2\cos(k)}{k}\left(\frac{3(u^2-v^2)}{8k^2}-\frac{i}{12k}(u^3-v^3)\right.\nonumber\\
&&\quad\left.+\frac{5i}{24k}(u-v)+\frac{1}{48}(u^2-v^2)-iQ^A_2(k)(u-v)\right)=0.\nonumber\\ \label{extra term a2}
\eea
In the remaining terms, i.e., the terms which are not yet written as $Z^A_l(u,v)$ or $i\Delta\circ Z^A_l(u,v)$, the highest order of $u$ and $v$ are $u^3$ and $v^3$, which can be identified with $i\Delta\circ Z_2(u,v)$. Therefore we use
\be
-\left(i\Delta+\frac{L^2}{k}\right)\circ\frac{m^4\cos(k)}{4k^2}Z^A_2(u,v)=-\frac{m^4L^2\cos(k)}{k}\left(\frac{i}{12k}(u-v)-\frac{i}{12k}(u^3-v^3)+\frac{1}{4k^2}(u^2-v^2)\right),
\ee
to write Eqn.~(\ref{extra term a2}) as
\bea
\left(i\Delta+\frac{L^2}{k}\right)\circ\left(H^A_k(u,v)+\cos(k)\left(\frac{im^2}{2k}Z^A_1(u,v)-\frac{m^4}{4k^2}Z^A_2(u,v)\right)\right)\nonumber\\
-\frac{m^4L^2\cos(k)}{k}\left(\frac{u^2-v^2}{8k^2}+\frac{i}{8k}(u-v)+\frac{1}{48}(u^2-v^2)-iQ^A_2(k)(u-v)\right)=0. \label{extra term a3}
\eea
The remaining terms in Eqn.~(\ref{extra term a3}) can be written as
\be
\left(i\Delta+\frac{L^2}{k}\right)\circ\left(-\frac{im^4\cos(k)}{24k^3}(6+k^2)Z^A_1(u,v)\right),
\ee
by fixing
\be
Q^A_2(k)=\frac{1}{12k}-\frac{1}{4k^3}.
\ee
Finally Eqn.~(\ref{extra term a3}) can be written as
\be
\left(i\Delta+\frac{L^2}{k}\right)\circ\left(H^A_k(u,v)+\cos(k)\left(\left(\frac{im^2}{2k}-\frac{im^4(6+k^2)}{24k^3}\right)Z^A_1(u,v)-\frac{m^4}{4k^2}Z^A_2(u,v)\right)\right)=0
\ee
which implies that
\be
u^A_k(u,v)=U^A_{ik}(u,v)-\cos(k)\left(\left(\frac{im^2}{2k}-\frac{im^4(6+k^2)}{24k^3}\right)Z^A_1(u,v)-\frac{m^4}{4k^2}Z^A_2(u,v)\right)+\cO(m^6)
\ee
with eigenvalue $-\frac{L^2}{k}$, where $k$ satisfies
\be
\sin(k)=\left(\frac{m^2}{k}+\frac{m^4}{12k}\left(1-\frac{3}{k^2}\right)\right)\cos(k)+\cO(m^6)\label{f cond.}
\ee
\subsection{Details of the calculations for the symmetric SJ modes}
\label{sec:dsym}
In this section we solve Eqn.~(\ref{eq:hask}) for $H^S_k(u,v)$ by constructing each $m^{2n}P^S_n(u,v)$ out of $Z_l(u,v)$ and $i\Delta\circ Z_l(u,v)$ for different $l$. Let us start with the first non zero $P^S_n(u,v)$. It can be observed that $P^S_0(u,v)$ can be constructed out of $Z^S_0(u,v)$ and $i\Delta\circ Z^S_0(u,v)$ up to $\cO(m^0)$ as
\be
P^S_0(u,v)=\left(i\Delta+\frac{L^2}{k}\right)\circ\left(-\frac{k}{L^2}Z^S_0(u,v)\right).
\ee
Therefore Eqn.~(\ref{eq:hask}) for $H^S_k(u,v)$ up to $\cO(m^4)$ can be written as
\bea
&&\left(i\Delta+\frac{L^2}{k}\right)\circ\left(H^S_k(u,v)+Z^S_0(u,v)\cos(k)\right)-\frac{L^2\cos(k)}{k}\left(m^2\left(-\frac{3}{4}(u^2+v^2)\right.\right.\nonumber\\
&&\left.\left.+i\left(Q^S_1(k)-k-\frac{1}{2k}\right)(u+v)+\frac{1}{2}\right)+m^4\left(\frac{u^2+v^2}{8k^2}+\frac{i}{24k}(u+v)(2u^2+2v^2+uv-1)\right.\right.\nonumber\\
&&\left.\left.-\frac{1}{24}((-3uv-4)(u^2+v^2)+6uv+5)+\left(-\frac{i(u^3+v^3)}{4k}+\frac{ik}{24}(u+v)(u^2+v^2+5uv+2)\right)\right.\right.\nonumber\\
&&\left.\left.+Q^S_1(k)\left(-\frac{u^2+v^2}{2k}-\frac{i}{4}(uv+3)(u+v)\right)+iQ^S_2(k)(u+v)\right)\right)=0. \label{extra term s2}
\eea
Since the extra terms in Eqn.~(\ref{extra term s2}) has $m^2$ as a factor, we need to look for $Z^S_l$ and $i\Delta\circ Z^S_l$ only up to $\cO(m^2)$. $\cO(m^2)$ terms in Eqn.~(\ref{extra term s2}) can be written in terms of $\left(i\Delta+\frac{L^2}{k}\right)\circ Z^S_0(u,v)$ and  $\left(i\Delta+\frac{L^2}{k}\right)\circ Z^S_1(u,v)$ for
\be
Q^S_1=2k-\frac{1}{k}
\ee
as
\be
\left(i\Delta+\frac{L^2}{k}\right)\circ m^2\cos(k)\left(\frac{3i}{2k}Z^S_1(u,v)+\frac{1}{2}Z^S_0(u,v)\right).
\ee
Therefore Eqn.~(\ref{extra term s2}) can further be written as
\bea
\left(i\Delta+\frac{L^2}{k}\right)\circ\left(H^S_k(u,v)+\cos(k)\left(\left(1+\frac{m^2}{2}\right)Z^S_0(u,v)+\frac{3im^2}{2k}Z^S_1(u,v)\right)\right)+\frac{im^4L^2\cos(k)}{48k^3}\left(8ik^2\right.\nonumber\\
\left.+k\left(-34-kQ^S_2(k)+56k^2\right)(u+v)+i(30-37k^2)(u^2+v^2)+2k(4-k^2)(u^3+v^3)\right)=0. \label{extra term s3}
\eea
Remaining $\cO(m^4)$ terms in Eqn.~(\ref{extra term s3}) can be written in terms of $\left(i\Delta+\frac{L^2}{k}\right)\circ Z^S_0(u,v)$, $\left(i\Delta+\frac{L^2}{k}\right)\circ Z^S_1(u,v)$, $\left(i\Delta+\frac{L^2}{k}\right)\circ Z^S_2(u,v)$ for
\be
Q^S_2(k)=\frac{3-29k^2+28k^4}{12k^3}
\ee
as
\be
-\frac{m^4\cos(k)}{8k^2}\left((4-k^2)Z^S_2(u,v)+\frac{i(6-31k^2)}{3k}Z^S_1(u,v)+(2-9k^2)Z^S_0(u,v)\right).
\ee
Hence Eqn.~(\ref{extra term s3}) can be written as
\bea
\left(i\Delta+\frac{L^2}{k}\right)\circ\left(H^S_k(u,v)+\cos(k)\left(\left(1+\frac{m^2}{2}-\frac{m^4}{8k^2}(2-9k^2)\right)Z^S_0(u,v)\right.\right.\nonumber\\
\left.\left.+\left(\frac{3im^2}{2k}-\frac{im^4}{24k^3}(6-31k^2)\right)Z^S_1(u,v)-\frac{m^4}{8k^2}(4-k^2)Z^S_2(u,v)\right)\right)=0.
\eea
Therefore the symmetric SJ modes are
\bea
u^S_k(u,v)&=&U^S_{ik}(u,v)-\cos(k)\left(\left(1+\frac{m^2}{2}-\frac{m^4}{8k^2}(2-9k^2)\right)Z^S_0(u,v)\right.\nonumber\\
&&\left.+\left(\frac{3im^2}{2k}-\frac{im^4}{24k^3}(6-31k^2)\right)Z^S_1(u,v)-\frac{m^4}{8k^2}(4-k^2)Z^S_2(u,v)\right)+\cO(m^4),
\eea
with eigenvalue $-\frac{L^2}{k}$, where $k$ satisfies
\be
\sin(k)=\left(2k-\frac{m^2}{k}(1-2k^2)+\frac{m^4}{12k^3}(3-29k^2+28k^4)\right)\cos(k)+\cO(m^4). \label{g cond.}
\ee

\section{Summation of series with inverse powers of roots of a transcendental equation} \label{app:speigel}
In this appendix we make use of the work of \cite{speigel} to evaluate the series (Eqn.~(\ref{eq:ser1}) and Eqn.~(\ref{eq:ser2})), which involves the roots of the transcendental equation (Eqn.~(\ref{g condition m0})). They are used in Sec(\ref{sec:compl}) to determine the completeness of the SJ modes

Let us start with a brief discussion on the work of \cite{speigel}. Consider a transcendental equation of the form
\be
S(x) \equiv 1+\sum_{n=1}^\infty a_n x^n=0 \label{eqn:gentr}
\ee
with $x_1,x_2,x_3\dots$ as its roots, which means the equation can be factorized as
\be
\left(1-\frac{x}{x_1}\right)\left(1-\frac{x}{x_2}\right)\left(1-\frac{x}{x_3}\right)\dots=0 \label{eqn:gentrfac}
\ee
On comparing Eqn.~(\ref{eqn:gentr} and \ref{eqn:gentrfac}), we find that
\be
a_1= \sum_{i=1}^\infty \frac{1}{x_i}, \quad
a_2= \sum_{i<j} \frac{1}{x_ix_j}, \quad
a_3= \sum_{i<j<k} \frac{1}{x_ix_jx_k} 
\ee
and so on. It is straight forward to see that
\be
\sum_{i=1}^\infty\left(\frac{1}{x_i}\right)^2=\left(\sum_{i=1}^\infty \frac{1}{x_i}\right)^2-2\sum_{i<j} \frac{1}{x_ix_j}=a_1^2-2a_2
\ee
and similarly
\be
\sum_{i=1}^\infty\left(\frac{1}{x_i}\right)^3=3a_1a_2-3a_3-a_1^3.
\ee
Similarly we can get the sum of higher inverse powers of the roots.

Now let us come to the equation of our interest i.e. Eqn.~(\ref{g condition m0}), which on series expansion becomes
\be
S(k^2)\equiv 1-\left(1-\frac{1}{3!}\right)k^2+\left(\frac{2}{4!}-\frac{1}{5!}\right)k^4-\left(\frac{2}{6!}-\frac{1}{7!}\right)k^6\dots=0.\label{eq:expand}
\ee
The roots of Eqn.~(\ref{eq:expand}) are $k^S_0\in\cK_g$, and therefore
\bea
\sum_{k^S_0\in\cK_g}\frac{1}{{k^S_0}^2}&= &a_1 =\frac{5}{6},\nonumber\\
\sum_{k^S_0\in\cK_g}\frac{1}{{k^S_0}^4}&= &a_1^2-2a_2 =\frac{49}{90},\nonumber\\
\sum_{k^S_0\in\cK_g}\frac{1}{{k^S_0}^6}&= &3a_1a_2-3a_3-a_1^3 =\frac{377}{945}.
\eea
We are also interested in the series involving the inverse power of $4{k^S_0}^2-1$, where $k^S_0\in\cK_g$. We start with finding an equation whose solutions are given by $4{k^S_0}^2-1$. If ${k^S_0}^2$ are the solutions of $S(k^2)=0$, then $4{k^S_0}^2-1$ are the solutions of $S\left(\frac{k^2+1}{4}\right)=0$.
\be
S\left(\frac{k^2+1}{4}\right) \equiv 1-\frac{1}{4}k^2+\frac{5\cos(1/2)-9\sin(1/2)}{32\left(\cos(1/2)-\sin(1/2)\right)}k^4-\frac{53\cos(1/2)-97\sin(1/2)}{384\left(\cos(1/2)-\sin(1/2)\right)}k^6\dots=0.
\ee
Using the same method as above, we find
\bea
\sum_{k^S_0\in\cK_g}\frac{1}{4{k^S_0}^2-1}&=&\frac{1}{4},\\
\sum_{k^S_0\in\cK_g}\frac{1}{(4{k^S_0}^2-1)^2}&=&-\frac{1}{4}\left(\frac{\cos(1/2)-2\sin(1/2)}{\cos(1/2)-\sin(1/2)}\right),\\
\sum_{k_0\in\cK_g}\frac{1}{(4{k^S_0}^2-1)^3}&=&\frac{1}{64}\left(1+\frac{19\cos(1/2)-35\sin(1/2)}{\cos(1/2)-\sin(1/2)}\right).
\eea

\section{Some expressions used in Sec.~\ref{sec:wightmann}}\label{sec:wight-app}

Here we list the expressions of $\wa,\waa,\waaa$ and $\waaaa$ defined in Eqn.~(\ref{eq:aone}) in terms of Polylogarithms.
\bea
\wa&\equiv& \sum_{n=1}^\infty\frac{1}{8n\pi}\left(1-\frac{2m^2}{n^2\pi^2}+\frac{m^4}{n^2\pi^2}\left(\frac{7}{n^2\pi^2}-\frac{1}{6}\right)\right)\left(e^{-in\pi u}-e^{-in\pi v}\right)\left(e^{in\pi u'}-e^{in\pi v'}\right)\nonumber\\
&=&\frac{1}{8\pi}\left[\li_1\left(e^{-i\pi(u-u')}\right)+\li_1\left(e^{-i\pi(v-v')}\right)-\li_1\left(e^{-i\pi(u-v')}\right)-\li_1\left(e^{-i\pi(v-u')}\right)\right]\nonumber\\
&&-\frac{m^2}{4\pi^3}\left(1+\frac{m^2}{12}\right)\left[\li_3\left(e^{-i\pi(u-u')}\right)+\li_3\left(e^{-i\pi(v-v')}\right)-\li_3\left(e^{-i\pi(u-v')}\right)-\li_3\left(e^{-i\pi(v-u')}\right)\right]\nonumber\\
&&+\frac{7m^4}{8\pi^5}\left[\li_5\left(e^{-i\pi(u-u')}\right)+\li_5\left(e^{-i\pi(v-v')}\right)-\li_5\left(e^{-i\pi(u-v')}\right)-\li_5\left(e^{-i\pi(v-u')}\right)\right],
\eea
\bea
\waa&\equiv& \sum_{n=1}^\infty\frac{1}{8n\pi}\left(1-\frac{2m^2}{n^2\pi^2}\right)\left(e^{-in\pi u}-e^{-in\pi v}\right)\Psi_A^*(n,u',v')\nonumber\\
&=& \frac{1}{8\pi} \sum_{j=1}^3 f_j^*(m;u',v')\left[\li_{j+1}\left(-e^{-i\pi u}\right)-\li_{j+1}\left(-e^{-i\pi v}\right)\right]+\frac{im^4}{8\pi^4}(u'-v')\left[\li_{4}\left(-e^{-i\pi u}\right)-\li_{4}\left(-e^{-i\pi v}\right)\right]\nonumber\\
&&+\frac{1}{8\pi}\sum_{j=1}^3\left( g_j^*(m;u',v')\left[\li_{j+1}\left(e^{-i\pi(u-u')}\right)-\li_{j+1}\left(-e^{-i\pi(v-u')}\right)\right]\right.\nonumber\\
&&\left.\quad\quad\quad-g_j^*(m;v',u')\left[\li_{j+1}\left(-e^{-i\pi(u-v')}\right)-\li_{j+1}\left(-e^{-i\pi(v-v')}\right)\right]\right)\nonumber\\
&&-\frac{im^4}{8\pi^4}\left((2u'+v')\left[\li_{4}\left(-e^{-i\pi(u-u')}\right)-\li_{4}\left(-e^{-i\pi(v-u')}\right)\right]\right.\nonumber\\
&&\left.\quad\quad\quad-(2v'+u')\left[\li_{4}\left(-e^{-i\pi(u-v')}\right)-\li_{4}\left(-e^{-i\pi(v-v')}\right)\right]\right),
\eea
\bea
\waaa&\equiv& \sum_{n=1}^\infty\frac{1}{8n\pi}\left(1-\frac{2m^2}{n^2\pi^2}\right)\Psi_A(n,u,v)\left(e^{in\pi u'}-e^{in\pi v'}\right)\nonumber\\
&=& \frac{1}{8\pi} \sum_{j=1}^3 f_j(m;u,v)\left[\li_{j+1}\left(-e^{i\pi u'}\right)-\li_{j+1}\left(-e^{i\pi v'}\right)\right]-\frac{im^4}{8\pi^4}(u-v)\left[\li_{4}\left(-e^{i\pi u'}\right)-\li_{4}\left(-e^{i\pi v'}\right)\right]\nonumber\\
&&+\frac{1}{8\pi}\sum_{j=1}^3\left( g_j(m;u,v)\left[\li_{j+1}\left(e^{-i\pi(u-u')}\right)-\li_{j+1}\left(-e^{-i\pi(v-u')}\right)\right]\right.\nonumber\\
&&\left.\quad\quad\quad-g_j(m;v,u)\left[\li_{j+1}\left(-e^{-i\pi(u-v')}\right)-\li_{j+1}\left(-e^{-i\pi(v-v')}\right)\right]\right)\nonumber\\
&&+\frac{im^4}{8\pi^4}\left((2u+v)\left[\li_{4}\left(-e^{-i\pi(u-u')}\right)-\li_{4}\left(-e^{-i\pi(v-u')}\right)\right]\right.\nonumber\\
&&\left.\quad\quad\quad-(2v+u)\left[\li_{4}\left(-e^{-i\pi(u-v')}\right)-\li_{4}\left(-e^{-i\pi(v-v')}\right)\right]\right),
\eea
\bea
\waaaa&\equiv& \sum_{n=1}^\infty\frac{1}{8n\pi}\Psi_A(n,u,v)\Psi_A^*(n,u',v')\nonumber\\
&=&\frac{m^4}{32\pi^3}\left[\zeta(3)(u-v)(u'-v')-(u-v)(2u'+v')\li_3\left(-e^{i\pi u'}\right)-(2u+v)(u'-v')\li_3\left(-e^{-i\pi u}\right)\right.\nonumber\\
&&\left.+(u-v)(u'+2v')\li_3\left(-e^{i\pi v'}\right)+(u+2v)(u'-v')\li_3\left(-e^{-i\pi v}\right)\right.\nonumber\\
&&\left.+(2u+v)(2u'+v')\li_3\left(e^{-i\pi(u-u')}\right)+(u+2v)(u'+2v')\li_3\left(e^{-i\pi(v-v')}\right)\right.\nonumber\\
&&\left.-(2u+v)(u'+2v')\li_3\left(e^{-i\pi(u-v')}\right)-(u+2v)(2u'+v')\li_3\left(e^{-i\pi(v-u')}\right)\right],
\eea

Here we list the expressions of $\ws,\wss,\wsss$ and $\wssss$ defined in Eqn.~(\ref{eq:sone}) in terms of Polylogarithms.
\bea
\ws&\equiv& \frac{1}{4\pi}\sum_{n=1}^\infty \frac{1}{(2n-1)}\left(e^{-i\left(n-\frac{1}{2}\right)\pi u}+e^{-i\left(n-\frac{1}{2}\right)\pi v}\right)\left(e^{i\left(n-\frac{1}{2}\right)\pi u'}+e^{i\left(n-\frac{1}{2}\right)\pi v'}\right)\nonumber\\
&=&\frac{1}{4\pi}\left[\li_1\left(e^{-i\pi\frac{(u-u')}{2}}\right)+\li_1\left(e^{-i\pi\frac{(u-v')}{2}}\right)+\li_1\left(e^{-i\pi\frac{(v-u')}{2}}\right)+\li_1\left(e^{-i\pi\frac{(v-v')}{2}}\right)\right]\nonumber\\
&&-\frac{1}{8\pi}\left[\li_1\left(e^{-i\pi(u-u')}\right)+\li_1\left(e^{-i\pi(u-v')}\right)+\li_1\left(e^{-i\pi(v-u')}\right)+\li_1\left(e^{-i\pi(v-v')}\right)\right],
\eea
\bea
\wss &\equiv& \frac{1}{4\pi}\sum_{n=1}^\infty\frac{1}{2n-1}\left(e^{-i\left(n-\frac{1}{2}\right)\pi u}+e^{-i\left(n-\frac{1}{2}\right)\pi v}\right)\Psi_S^*(n,u',v')\nonumber\\
&=& \frac{im^2v'}{4\pi^2}\left[\li_2\left(e^{-i\pi\frac{(u-u')}{2}}\right)+\li_2\left(e^{-i\pi\frac{(v-u')}{2}}\right)-\frac{1}{4}\li_2\left(e^{-i\pi(u-u')}\right)-\frac{1}{4}\li_2\left(e^{-i\pi(v-u')}\right)\right]\nonumber\\
&&+\frac{im^2u'}{4\pi^2}\left[\li_2\left(e^{-i\pi\frac{(u-v')}{2}}\right)+\li_2\left(e^{-i\pi\frac{(v-v')}{2}}\right)-\frac{1}{4}\li_2\left(e^{-i\pi(u-v')}\right)-\frac{1}{4}\li_2\left(e^{-i\pi(v-v')}\right)\right]\nonumber\\
&&-\frac{m^4v'^2}{16\pi^3}\left[\li_2\left(e^{-i\pi\frac{(u-u')}{2}}\right)+\li_2\left(e^{-i\pi\frac{(v-u')}{2}}\right)-\frac{1}{4}\li_2\left(e^{-i\pi(u-u')}\right)-\frac{1}{4}\li_2\left(e^{-i\pi(v-u')}\right)\right]\nonumber\\
&&-\frac{m^4u'^2}{16\pi^3}\left[\li_2\left(e^{-i\pi\frac{(u-v')}{2}}\right)+\li_2\left(e^{-i\pi\frac{(v-v')}{2}}\right)-\frac{1}{4}\li_2\left(e^{-i\pi(u-v')}\right)-\frac{1}{4}\li_2\left(e^{-i\pi(v-v')}\right)\right],\nonumber\\
\eea
\bea
\wsss &\equiv& \frac{1}{4\pi}\sum_{n=1}^\infty\frac{1}{2n-1}\Psi_S(n,u,v)\left(e^{i\left(n-\frac{1}{2}\right)\pi u'}+e^{i\left(n-\frac{1}{2}\right)\pi v'}\right)\nonumber\\
&=& -\frac{im^2v}{4\pi^2}\left[\li_2\left(e^{-i\pi\frac{(u-u')}{2}}\right)+\li_2\left(e^{-i\pi\frac{(v-u')}{2}}\right)-\frac{1}{4}\li_2\left(e^{-i\pi(u-u')}\right)-\frac{1}{4}\li_2\left(e^{-i\pi(v-u')}\right)\right]\nonumber\\
&&-\frac{im^2u}{4\pi^2}\left[\li_2\left(e^{-i\pi\frac{(u-v')}{2}}\right)+\li_2\left(e^{-i\pi\frac{(v-v')}{2}}\right)-\frac{1}{4}\li_2\left(e^{-i\pi(u-v')}\right)-\frac{1}{4}\li_2\left(e^{-i\pi(v-v')}\right)\right]\nonumber\\
&&-\frac{m^4v^2}{16\pi^3}\left[\li_2\left(e^{-i\pi\frac{(u-u')}{2}}\right)+\li_2\left(e^{-i\pi\frac{(v-u')}{2}}\right)-\frac{1}{4}\li_2\left(e^{-i\pi(u-u')}\right)-\frac{1}{4}\li_2\left(e^{-i\pi(v-u')}\right)\right]\nonumber\\
&&-\frac{m^4u^2}{16\pi^3}\left[\li_2\left(e^{-i\pi\frac{(u-v')}{2}}\right)+\li_2\left(e^{-i\pi\frac{(v-v')}{2}}\right)-\frac{1}{4}\li_2\left(e^{-i\pi(u-v')}\right)-\frac{1}{4}\li_2\left(e^{-i\pi(v-v')}\right)\right],\nonumber\\
\eea
\bea
\wssss &\equiv& \frac{1}{4\pi}\sum_{n=1}^\infty\frac{1}{2n-1}\Psi_S(n,u,v)\Psi_S^*(n,u',v')\nonumber\\
&=&\frac{m^4}{4\pi^3}\left[vv'\left(\li_3\left(e^{-i\pi\frac{(u-u')}{2}}\right)-\frac{1}{8}\li_3\left(e^{-i\pi(u-u')}\right)\right)\right.+vu'\left(\li_3\left(e^{-i\pi\frac{(u-v')}{2}}\right)-\frac{1}{8}\li_3\left(e^{-i\pi(u-v')}\right)\right)\nonumber\\
&&\quad+ uv'\left(\li_3\left(e^{-i\pi\frac{(v-u')}{2}}\right)-\frac{1}{8}\li_3\left(e^{-i\pi(v-u')}\right)\right)+\left.uu'\left(\li_3\left(e^{-i\pi\frac{(v-v')}{2}}\right)-\frac{1}{8}\li_3\left(e^{-i\pi(v-v')}\right)\right)\right],\nonumber\\
\eea

\section{Modifying the inner product to get the 2D Rindler Vacuum}
\label{sec:rindler}
In this section we obtain the massless Rindler Wightman function in the right Rindler Wedge as a particular limit of the
massless SJ Wightman function in 2D causal diamond. We achieve this by deviating from the standard $\mL^2$ inner product
on the function space $\cF(M,g)$, by  introducing a suitable non-trivial weight function $w(X)$, 
\be
(f,g)_w=\int_M f^*(X)g(X) \,  w(X) dV_X \label{eq:wip}
\ee
where $dV_X$ is the spacetime volume element. $w(X)$ takes real, positive and finite value for all $X$. The inner product
defined in Eqn.~(\ref{eq:wip}) is well defined in $(M,g)$ and  satisfies the defining properties of an  inner product: 
\begin{itemize}
\item $(f,g)_w$ is linear in $g$.
\item $(f,g)_w$ is anti-linear in $f$.
\item $(f,f)_w\geq 0$. Equality holds iff $f=0$.
\end{itemize}
Similarly,  we redefine the integral operator  $i\hD$  to make it consistent with this inner product 
\be
\left(i\hD\circ_w f\right)(X)=\int_M i\Delta(X,X')f(X') \,  w(X') dV_{X'}. \label{eq:ihdw}
\ee
It is straightforward to check that even with this modification, $i\hD$ is hermitian: 
\be
\left(f,i\hD\circ_w g\right)_w = \left(i\hD\circ_w f,g\right)_w. 
\ee
Next, we see that: 
\begin{claim}
$\kker(\Bkg)=\im_w(i\hD)$ for $w(X)$ real, positive and finite valued  in  $X$.
\end{claim}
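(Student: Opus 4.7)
The plan is to prove the two inclusions $\text{Im}_w(i\hat{\Delta})\subseteq\ker(\Bkg)$ and $\ker(\Bkg)\subseteq\text{Im}_w(i\hat{\Delta})$ separately, leveraging the already-stated standard result Eqn.~(\ref{ker-im}) that identifies $\ker(\Bkg)$ with the range of the unweighted Pauli--Jordan operator.

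For the forward inclusion, I would take an arbitrary element $\phi\in\text{Im}_w(i\hat{\Delta})$, so that $\phi(X)=\int_M i\Delta(X,X')\,f(X')\,w(X')\,dV_{X'}$ for some admissible $f\in\cF(M,g)$. Then I would apply $\Bkg$ in the variable $X$ and move it inside the integral. Since $i\Delta(X,X')=i(G_R(X,X')-G_A(X,X'))$, and both $G_R$ and $G_A$ satisfy $\Bkg G_{R/A}(X,X')=\delta(X,X')$, their difference is annihilated by $\Bkg$ in $X$. Hence $\Bkg\phi=0$. The only mild technicality here is justifying the interchange of $\Bkg$ with the integral, but this follows from standard distributional manipulation together with the boundedness of $w$ and $f$ on the compact region.

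For the reverse inclusion, which is the heart of the claim, I would use the standard unweighted result: given any $\phi\in\ker(\Bkg)$, there exists $f\in\cF(M,g)$ such that $\phi(X)=\int_M i\Delta(X,X')\,f(X')\,dV_{X'}$, i.e., $\phi = i\hat{\Delta}\circ f$. Now define $\tilde f(X) \equiv f(X)/w(X)$. Since $w$ is real, strictly positive, and finite at every $X$, $\tilde f$ is well-defined pointwise, and one then has
\begin{equation}
(i\hat{\Delta}\circ_w \tilde f)(X) \;=\; \int_M i\Delta(X,X')\,\frac{f(X')}{w(X')}\,w(X')\,dV_{X'} \;=\; \phi(X),
\end{equation}
so $\phi\in\text{Im}_w(i\hat{\Delta})$. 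Thus $\ker(\Bkg)\subseteq\text{Im}_w(i\hat{\Delta})$, completing the proof.

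The main obstacle I anticipate is not the algebra but the functional-analytic bookkeeping: one must ensure that $\tilde f=f/w$ still lies in the domain $\cF(M,g)$ of bounded functions on which $i\hat{\Delta}$ acts. This is where the qualifiers on $w$ enter: ``real'' makes the modified inner product Hermitian and keeps $i\hat{\Delta}$ self-adjoint with respect to $(\cdot,\cdot)_w$; ``positive'' ensures that $\tilde f=f/w$ is unambiguous (no division by zero) and that $(\cdot,\cdot)_w$ is a genuine inner product; and ``finite'' (which in practice should be strengthened to $w$ being bounded above and bounded away from zero on the compact region $M$) guarantees that boundedness of $f$ transfers to boundedness of $\tilde f$. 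With these hypotheses in place, the construction is valid and the two inclusions establish the claim.
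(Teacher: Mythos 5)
Your proposal is correct, and its key step coincides with the paper's: for $\kker(\Bkg)\subseteq\im_w(i\hD)$ you divide by $w$ and invoke the unweighted identity Eqn.~(\ref{ker-im}), which is exactly the paper's argument. The only difference is in the easy inclusion: the paper notes $i\hD\circ_w\psi=i\hD\circ(w\psi)$, so $\im_w(i\hD)\subseteq\im(i\hD)$, and then cites Eqn.~(\ref{ker-im}) again, whereas you show $\im_w(i\hD)\subseteq\kker(\Bkg)$ directly by commuting $\Bkg$ with the integral and using that $\Delta$ is a difference of Green functions; both are valid one-line arguments, with the paper's version keeping the whole proof at the level of the absorption trick and yours making the bi-solution property explicit. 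Your closing remark that ``finite'' should effectively mean $w$ bounded above and away from zero on the compact region, so that $f/w$ and $wf$ remain in $\cF(M,g)$, is a fair sharpening of a point the paper passes over silently.
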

\begin{proof}
For any  $\chi\in\im_w(i\hD)$,  there exists a $\psi\in\cF(M,g)$ such that $\chi=i\hD\circ_w\psi$. Since 
\be
i\hD\circ_w(\psi) = i\hD\circ (w\psi)
\ee
this implies that $\chi=i\hD\circ (w\psi) \in\im (i\hD)$, since  $w\psi\in\cF(M,g)$. Thus $\im_w (i\hD)\subseteq  \im (i\hD)$. Conversely, for
any $\chi'\in\im (i\hD)$, there exists a $\psi'\in\cF(M,g)$ such that $\chi'=i\hD\circ\psi'$. Since  $w$ is  real,
positive and finite valued  in  $X$,  $\psi/w \in \cF(M,g)$ and hence  $\chi'=i\hD\circ_w (\psi/w) \in \im_w
(i\hD)$. Hence $\im_w (i\hD) = \im (i\hD) = \kker(\Bkg)$.
\end{proof}

The 2D Minkowski metric in Rindler coordinates is 
\be
ds^2=e^{2a\xi}\left(-d\eta^2+d\xi^2\right)
\ee
where  
\be
t=a^{-1}e^{a\xi}\sinh(a\eta)\;,\quad x=a^{-1}e^{a\xi}\cosh(a\eta)
\ee
and $a>0$ is the acceleration parameter. Consider a causal diamond of length $2l$ centered at $(0,0)$ in 
$(\eta,\xi)$ coordinates. The center of the diamond  $(u,v)=(0,0)$ in the $u-v$
plane is at $(t,x)=(0,a^{-1})$,  and thus to the corner of the diamond in the $t-x$ plane as shown in Fig.~\ref{fig:rind}  
\begin{figure}[h]
\centering{\begin{tabular}{cc}
\includegraphics[height=4.5cm]{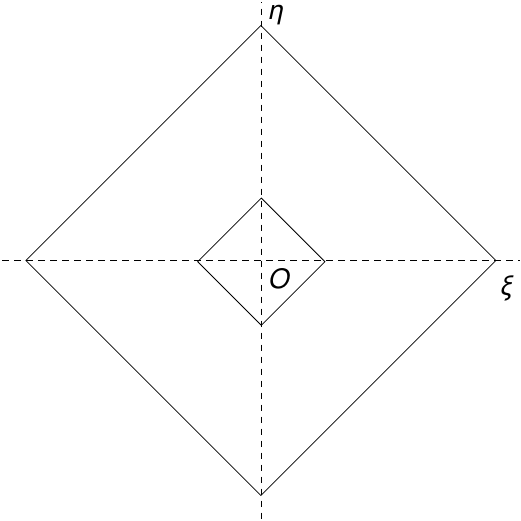}& \hskip 1.5cm
\includegraphics[height=4.5cm]{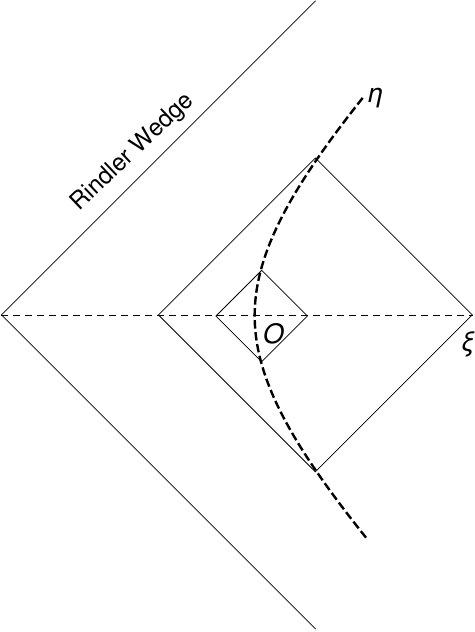}\\
(a)&\hskip 1.5cm (b)
\end{tabular}
\caption{A small causal diamond centered in a causal diamond $\diam$ in the $\eta-\xi$ plane is shifted to the corner of
  $\diam$ in the $t-x$ plane.}
\label{fig:rind}}
\end{figure}
The Pauli Jordan function is then similar to that in Minkowski coordinates
\be
i\Delta(u,v;u',v')=-\frac{i}{2}\left(\theta(u-u')+\theta(v-v')-1\right), 
\ee
where we have used the new light cone coordinates  $u=\frac{1}{\sqrt{2}}(\eta+\xi)$ and $ v=\frac{1}{\sqrt{2}}(\eta-\xi)$. 
The ``w-SJ''  modes  $u_k^w$ are then given by 
\bea
\int_{-L}^Li\Delta(u,v;u',v')u^w_k(u',v') w(u',v') e^{2a\xi'}du'dv'=\lambda_ku^w_k(u,v) \label{eq:rindev}
\eea
If  we now choose $w(u,v)=e^{-2a\xi}$, Eqn.~(\ref{eq:rindev}) is exactly the same as the eigenfunction equation for the
massless SJ modes in $\diam$ and hence  $\wsj$  is the same as the massless SJ function of \cite{Afshordi:2012ez}. Thus,
at the center of this diamond $\wsj$ takes the same form as Eqn.~(\ref{eq:minkm0}). The critical difference is that in
this case the $u$ and $v$ are lightcone coordinates for a  Rindler observer instead of an  inertial observer. Thus, in
$(t,x)$ coordinates, $\wsj$ {\it is} the Rindler vacuum (see Eqn (\ref{eq:rindm0})). The small diamond at the center
of $\diam$  the $\eta-\xi$ plane is a small diamond near (but not at)  the corner of  $\diam$ in the $t-x$
plane. Here, $\wsj$ then resembles the Rindler vacuum.   

Of course, the question is whether $\wsj$ will also look like $\wmink$ near the center of the diamond in the $t-x$ plane,
i.e. at $(t,x)=(0,a^{-1}\cosh(\sqrt{2}L a))$, which is $(0,a^{-1}\ln(\cosh(\sqrt{2}La)))$ in the $\eta-\xi$ plane. This is the
mirror vacuum, $\wmirr$ which rather than corresponding to  $\wmink$ is a  ``Rindler-mirror'' vacuum. This is clearly
not desirable. 

What we have presented here is a ``trick'' for achieving a desired form for the vacuum in the corner. However, this
messes up the expected form at the center. The question is
whether a smooth modification of $w$ from $1$ in the center of the $t-x$ plane diamond to $\exp(-a \xi)$ at the corners
could lead to  the desired form. However, modifications of the inner product mean that the SJ vacuum is no longer
unique.
\vskip 1in
\leftline{\bf\Large Erratum}

We correct a simulation error in our paper which led to the  incorrect conclusion in Sec.~\ref{sec:causet} that the causal set
Sorkin-Johnston Wightman function $\wsjc$ is incompatible with the Rindler Wightman function $\wrindm$ in the corner
of a 2d Minkowski diamond for a scalar field with large mass (with respect to the size of the diamond). Instead we find
that it is {\it as} compatible as the mirror  Wightman function $\wmirrm$, which we had shown is compatible with $\wsjc$
for all masses. As we discuss now, this seeming compatibility with $\wrindm$ can be traced to the fact that for our
simulations, $\wrindm \sim \wmirrm$ for large mass.  Note that this does not affect the analytic results of our paper for small mass, nor its
broader conclusions which remain unchanged.

The error in our paper was due to using  incompatible coordinates to simulate  $\wrindm$ which led to the erroneous Fig.~\ref{fig:sj-rind}. This was used to suggest 
that {\it only} $\wmirrm$ is compatible with  $\wsjc$.  We find instead, that there {\it is}  a correlation between  $\wsjc$ and $\wrindm$, but further analysis shows that $\wrindm$ and $\wmirrm$ themselves become indistinguishable for larger $m$.
To flesh this out we have explored a larger range of masses than discussed in our paper. Figs. \ref{fig:mirrsj} and
\ref{fig:rindsj} 
show that while the correlation of $\wsjc$ and $\wmirrm$ remains largely unchanged with mass, that with  $\wrindm$ 
 increases with mass.  This can be traced to the  increased correlation between $\wrindm$ and
$\wmirrm$ with mass as shown in Fig~\ref{fig:rindmirr}.  This in turn is related to the
dominance of $\wminkm$  in the expressions for $\wrindm$ and $\wmirrm$ (Eqns.~\eqref{eq:rindm} and \eqref{eq:mirrorm} in our paper)  for large
mass,   as shown in  Fig~\ref{fig:mrm}. 
The difference  is not captured by our current causal set simulations for which $N\sim 10,000$ elements are sprinkled  into the larger diamond (of height $2 \sqrt{2}$) to give
 $\sim 118$ elements in the corner diamond (of height $\sim 0.3$). Whether
 this ``degeneracy'' in choice of vacuum is broken with significantly larger simulations is a question we leave to
 future investigations.

 It was recently brought to our notice that the $m=10$ case was also studied in \cite{yasaman}.
\vskip 0.1in
{\bf{Acknowledgement:}}
We are grateful to Hans Muneesamy for pointing out the simulation error in our paper. His master's thesis \cite{hans} also discusses the $m=1$ and $m=10$ cases.

\begin{figure}[H]
 \centering
  \subfloat[$m=0.1$]{\includegraphics[height=1.9cm]{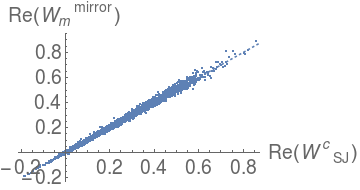} }\, 
  \subfloat[$m=0.2$]{\includegraphics[height=1.9cm]{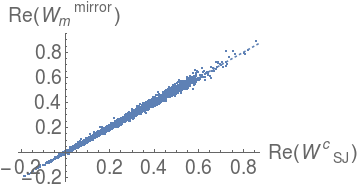}} \, 
  \subfloat[$m=0.3$]{\includegraphics[height=1.9cm]{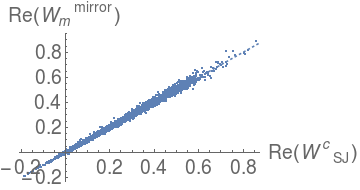}} \,
  \subfloat[$m=0.4$]{\includegraphics[height=1.9cm]{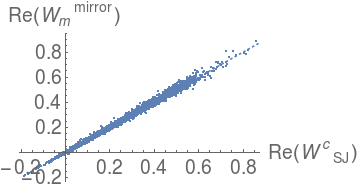} }\, 
  \subfloat[$m=1$]{\includegraphics[height=1.9cm]{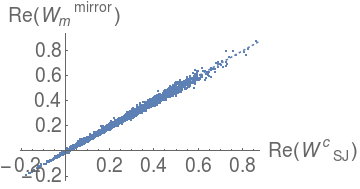}} \, 
  \subfloat[$m=2$]{\includegraphics[height=1.9cm]{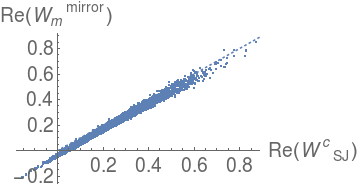}} \,
 \subfloat[$m=5$]{\includegraphics[height=1.9cm]{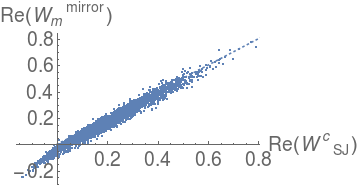}} \,
 \subfloat[$m=8$]{\includegraphics[height=1.9cm]{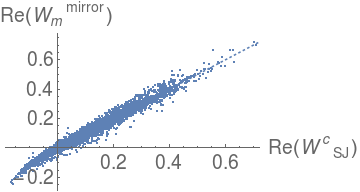}} \,
  \subfloat[$m=10$]{\includegraphics[height=1.9cm]{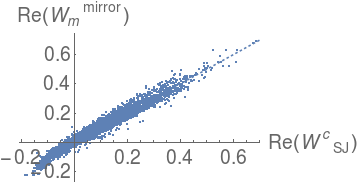}}\,
 \subfloat[$m=12$]{\includegraphics[height=1.9cm]{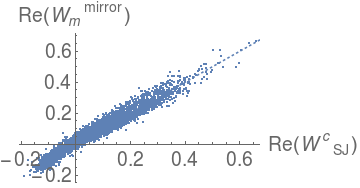}} \,
 \subfloat[$m=15$]{\includegraphics[height=1.9cm]{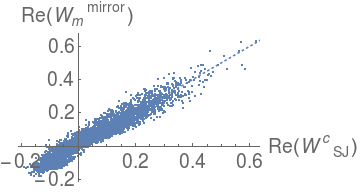}} 
  \caption{A correlation plot of the real parts of $\wsjc$ vs $\wmirrm$ in the left-hand corner of the 2d causal diamond for a range of masses. The
    diagonal is denoted by a dotted line. As is evident, the correlation remains largely unchanged with mass. The increase in scatter with mass is related to the fact that the density of sprinkling is left
 unchanged. }
   \label{fig:mirrsj} 
 \end{figure}
\begin{figure}[H]
 \centering
  \subfloat[$m=0.1$]{\includegraphics[height=1.9cm]{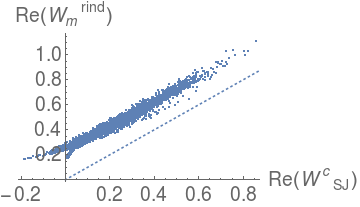} }\, 
  \subfloat[$m=0.2$]{\includegraphics[height=1.9cm]{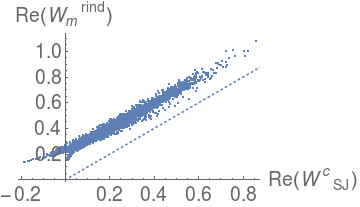}} \, 
  \subfloat[$m=0.3$]{\includegraphics[height=1.9cm]{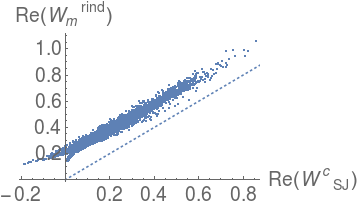}} \,
  \subfloat[$m=0.4$]{\includegraphics[height=1.9cm]{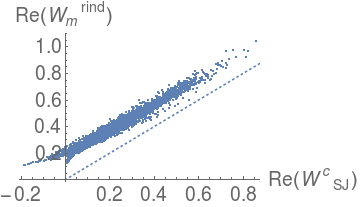} }\, 
  \subfloat[$m=1$]{\includegraphics[height=1.9cm]{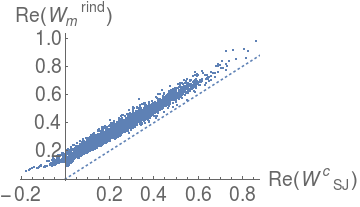}} \, 
  \subfloat[$m=2$]{\includegraphics[height=1.9cm]{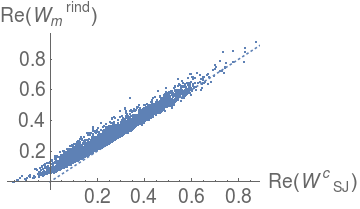}} \,
 \subfloat[$m=5$]{\includegraphics[height=1.9cm]{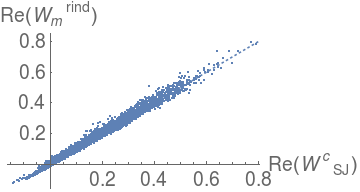}} \,
 \subfloat[$m=8$]{\includegraphics[height=1.9cm]{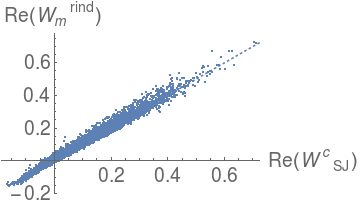}} \,
  \subfloat[$m=10$]{\includegraphics[height=1.9cm]{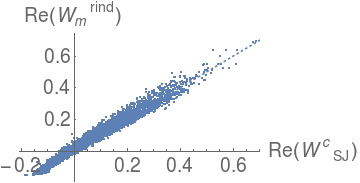}}\,
 \subfloat[$m=12$]{\includegraphics[height=1.9cm]{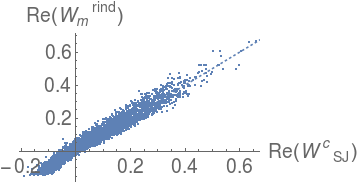}} \,
 \subfloat[$m=15$]{\includegraphics[height=1.9cm]{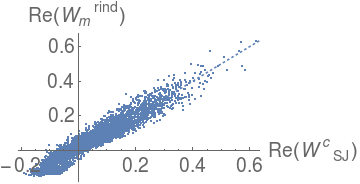}} 
  \caption{A correlation plot of the real parts of $\wsjc$ vs $\wrindm$ for the same
    range of masses. For small masses, the correlation is poor but improves
    with mass.}
   \label{fig:rindsj} 
 \end{figure}    
\begin{figure}[h]
    \centering
    \subfloat[$m=0.1$]{\includegraphics[height=1.9cm]{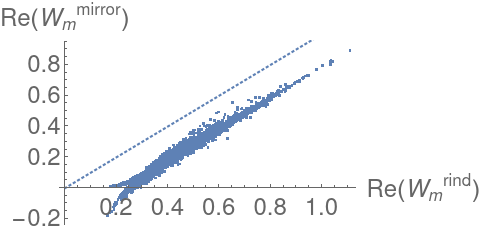}}\,
    \subfloat[$m=0.2$]{\includegraphics[height=1.9cm]{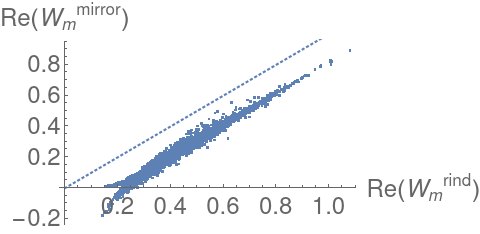}}\,
    \subfloat[$m=0.3$]{\includegraphics[height=1.9cm]{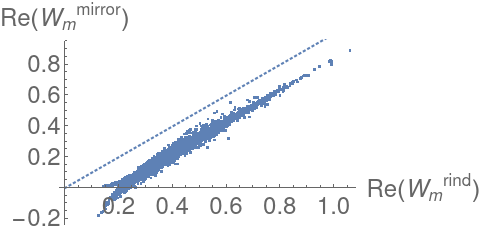}}\,    
    \subfloat[$m=0.4$]{\includegraphics[height=1.9cm]{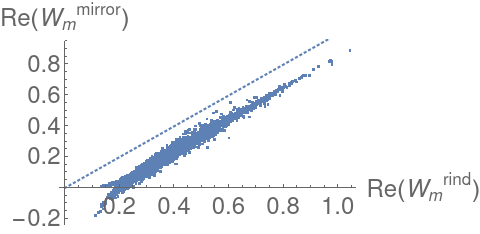}}\,
    \subfloat[$m=1$]{\includegraphics[height=1.9cm]{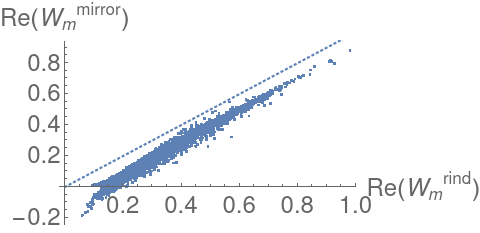}}\,
    \subfloat[$m=2$]{\includegraphics[height=1.9cm]{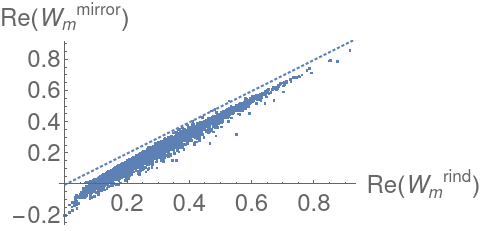}}\,
    \subfloat[$m=5$]{\includegraphics[height=1.9cm]{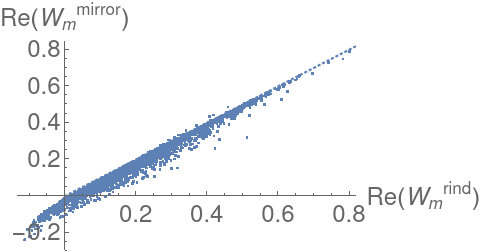}}\,
 \subfloat[$m=8$]{\includegraphics[height=1.9cm]{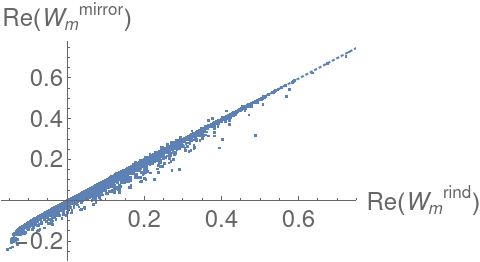}}\,
 \subfloat[$m=10$]{\includegraphics[height=1.9cm]{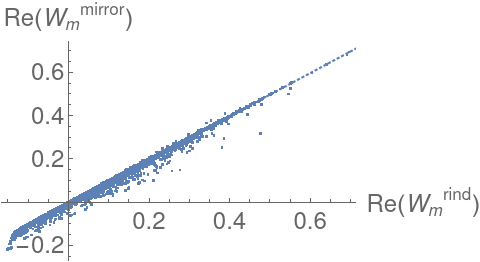}}\,
 \subfloat[$m=12$]{\includegraphics[height=1.9cm]{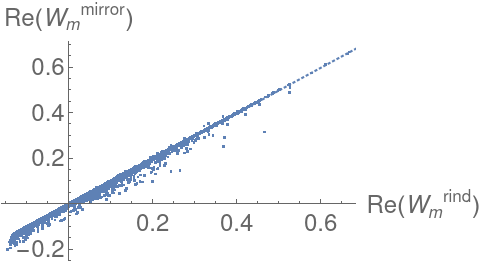}}\,
 \subfloat[$m=15$]{\includegraphics[height=1.9cm]{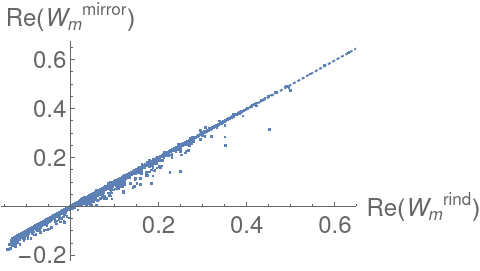}}\,
    \caption{A correlation plot of the real parts of $\wmirrm$ vs $\wrindm$ for the same
    range of masses. For small masses, the correlation is poor but improves
    with mass.}
    \label{fig:rindmirr}
  \end{figure}
 
  \begin{figure}[h]
    \centering
    \subfloat[$m=0.1$]{\includegraphics[height=2.4cm]{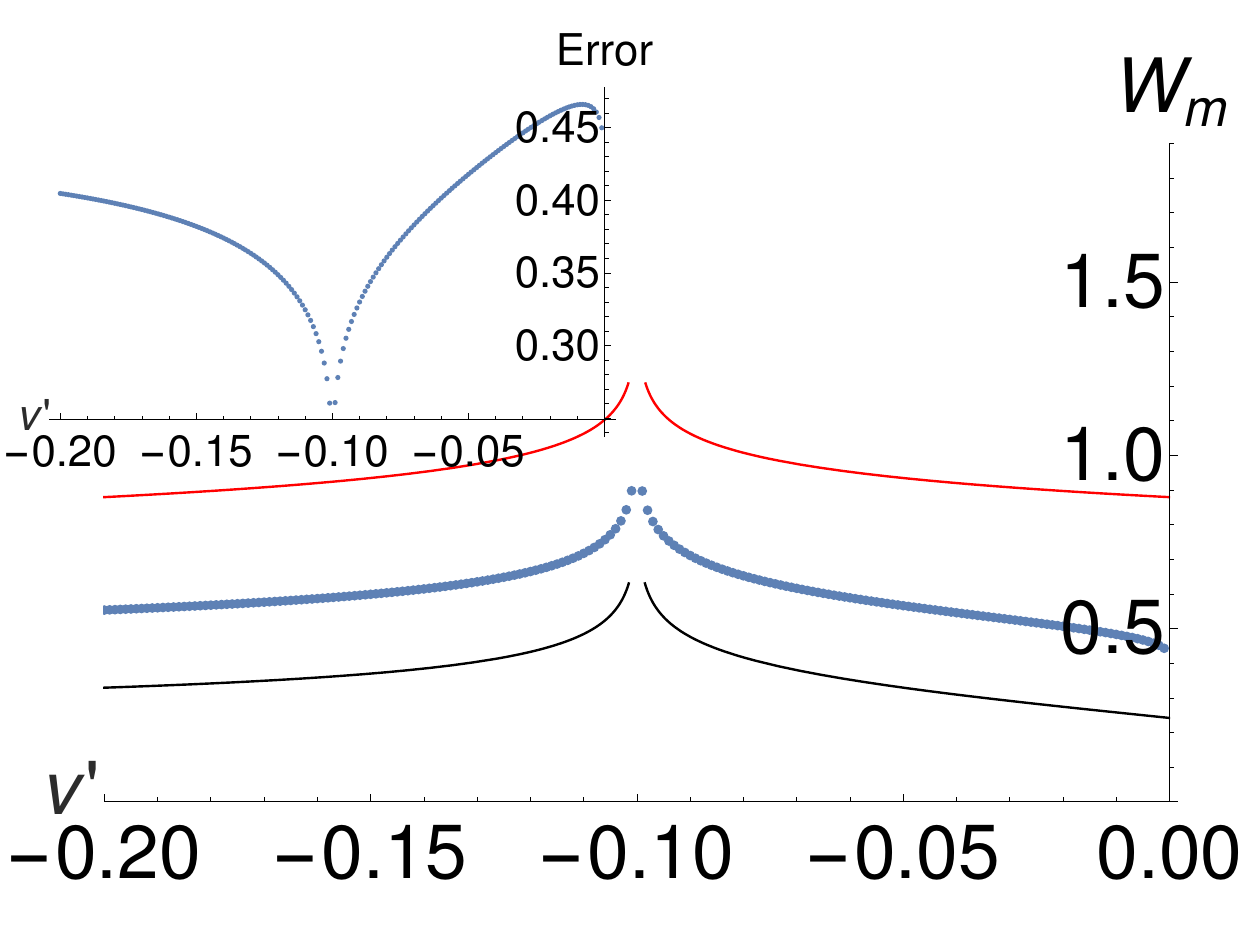}}\hskip 0.1in\,
    \subfloat[$m=0.2$]{\includegraphics[height=2.4cm]{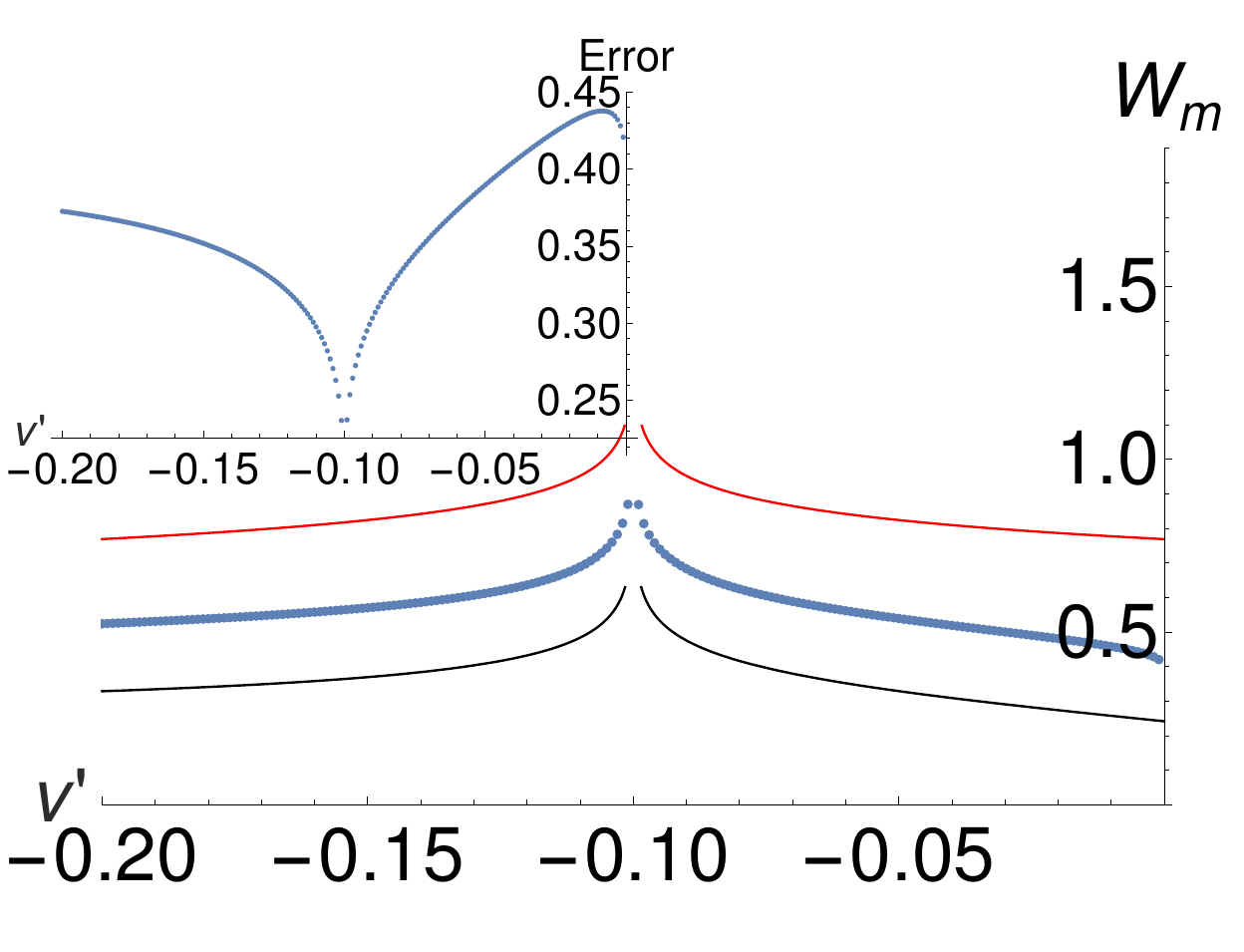}}\hskip 0.1in\,
    \subfloat[$m=0.3$]{\includegraphics[height=2.4cm]{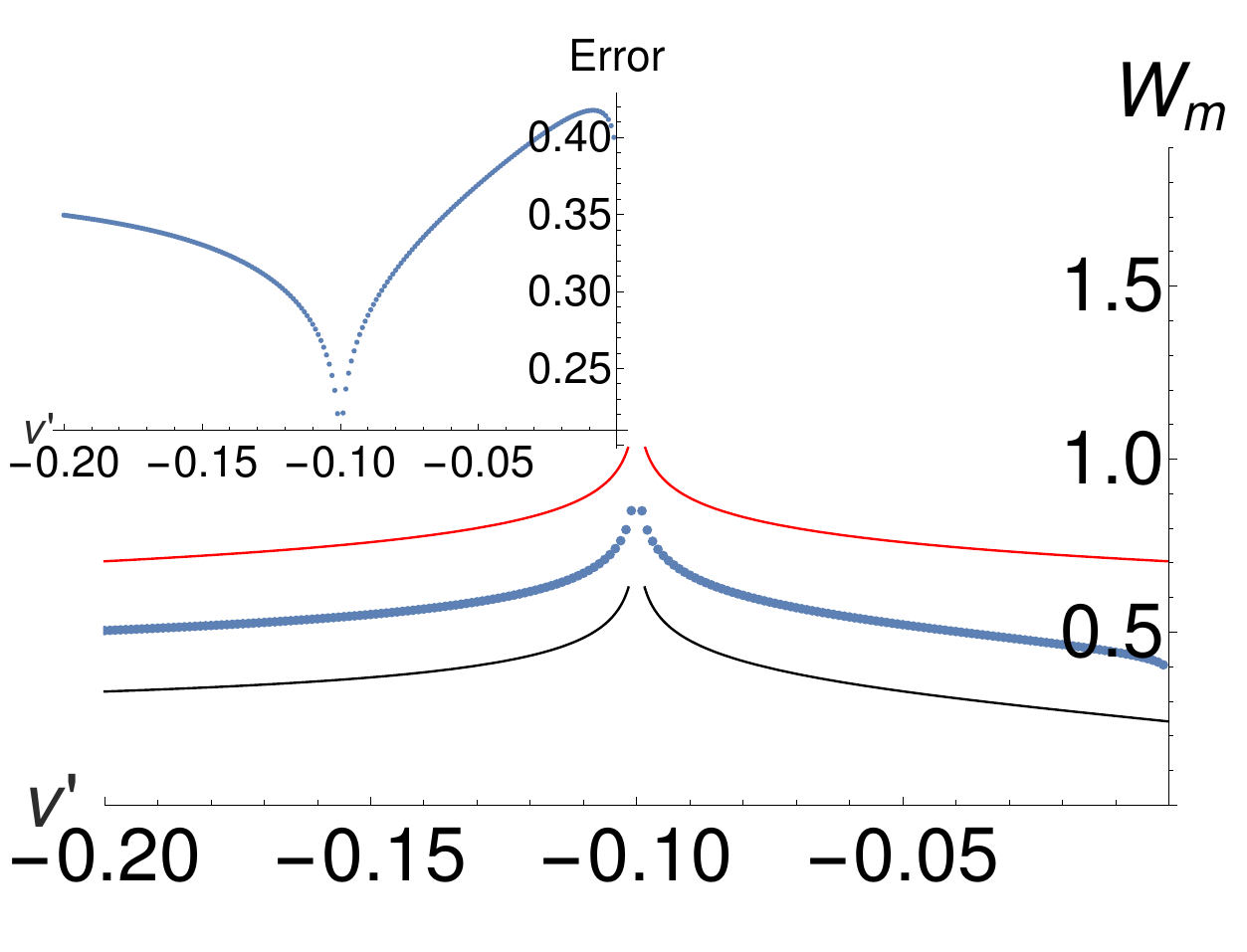}}\hskip 0.1in\,    
    \subfloat[$m=0.4$]{\includegraphics[height=2.4cm]{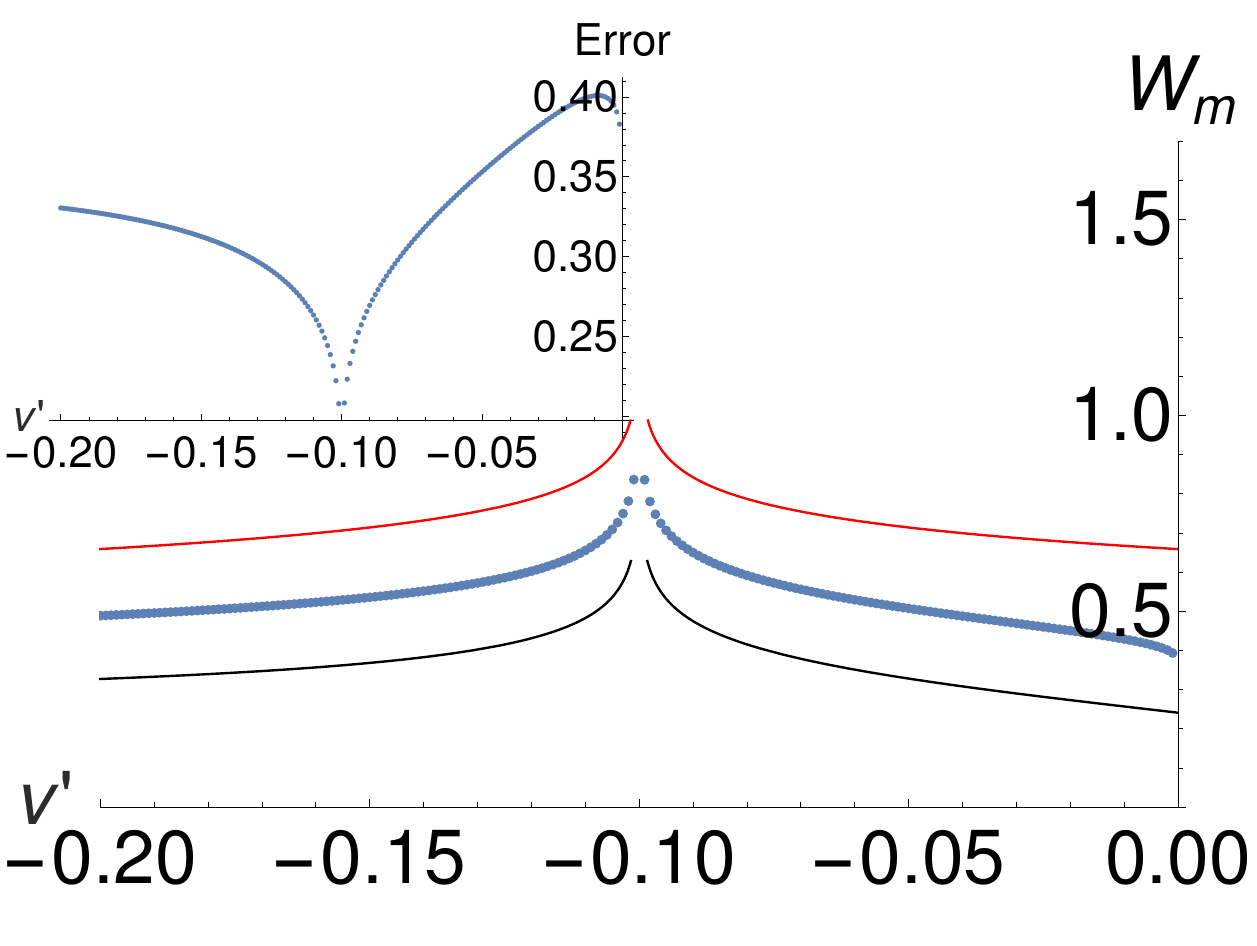}}\hskip 0.1in\,
 \subfloat[$m=1$]{\includegraphics[height=2.4cm]{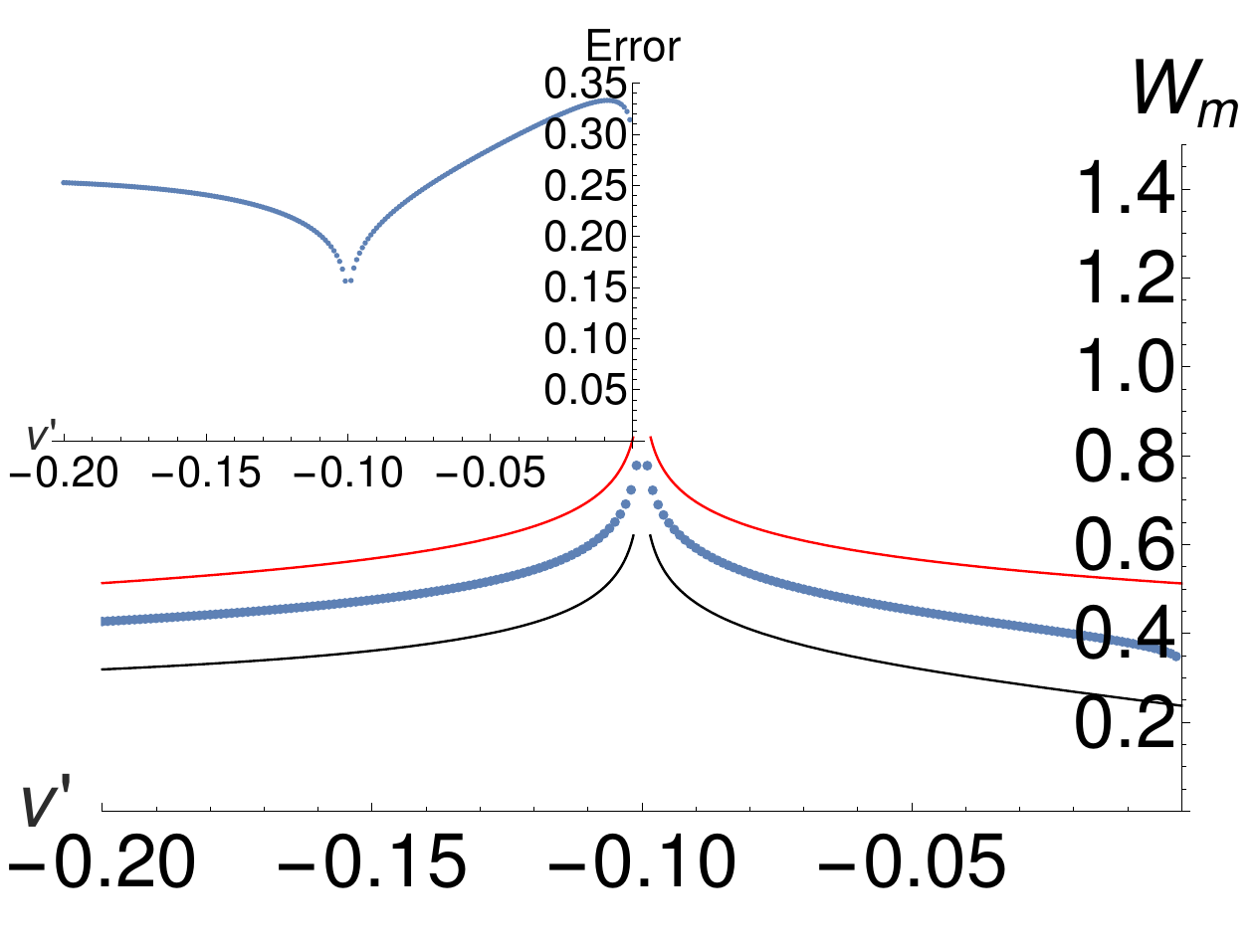}}\hskip 0.1in\,
    \subfloat[$m=2$]{\includegraphics[height=2.4cm]{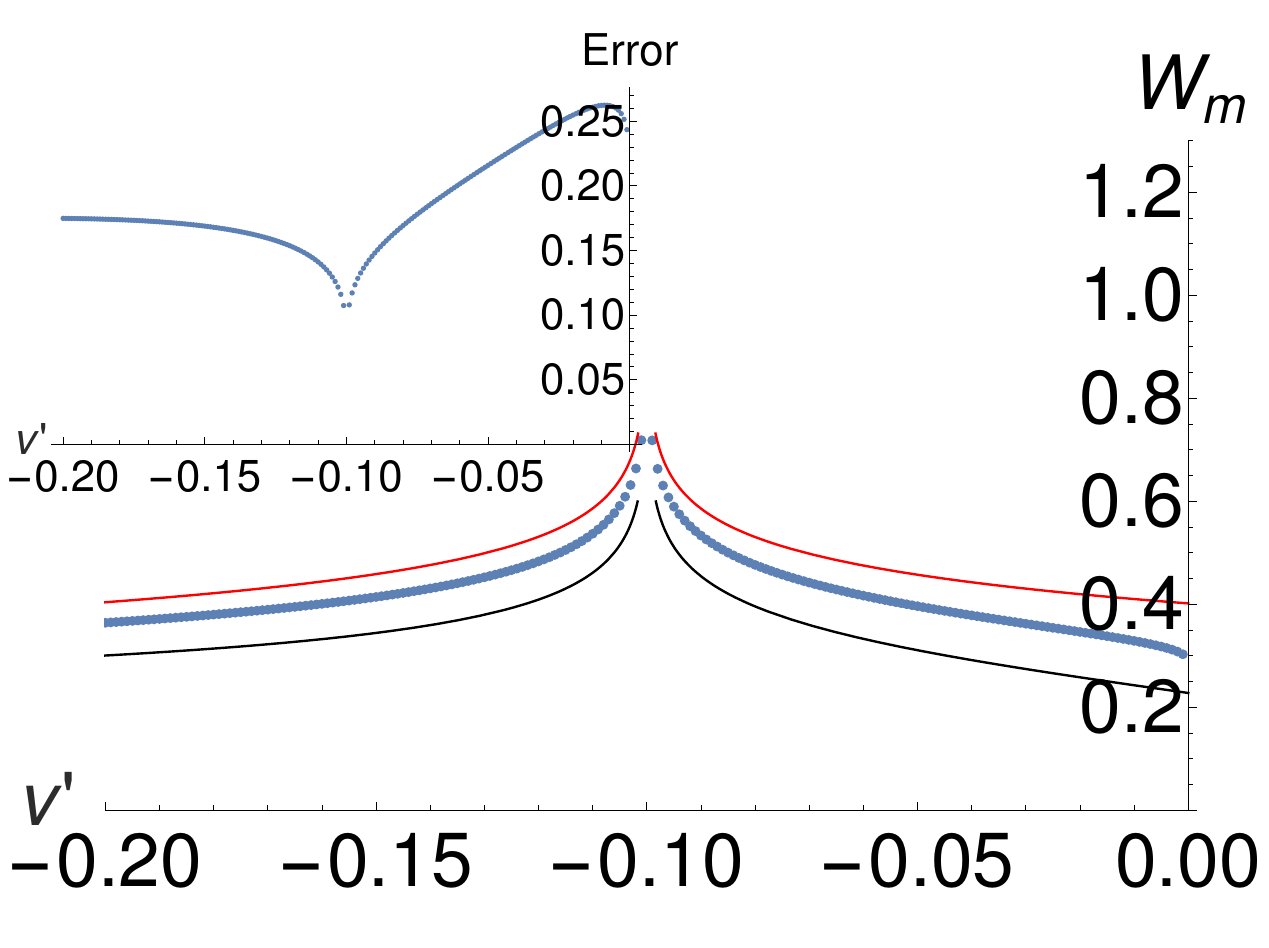}}\hskip 0.1in\,
    \subfloat[$m=5$]{\includegraphics[height=2.4cm]{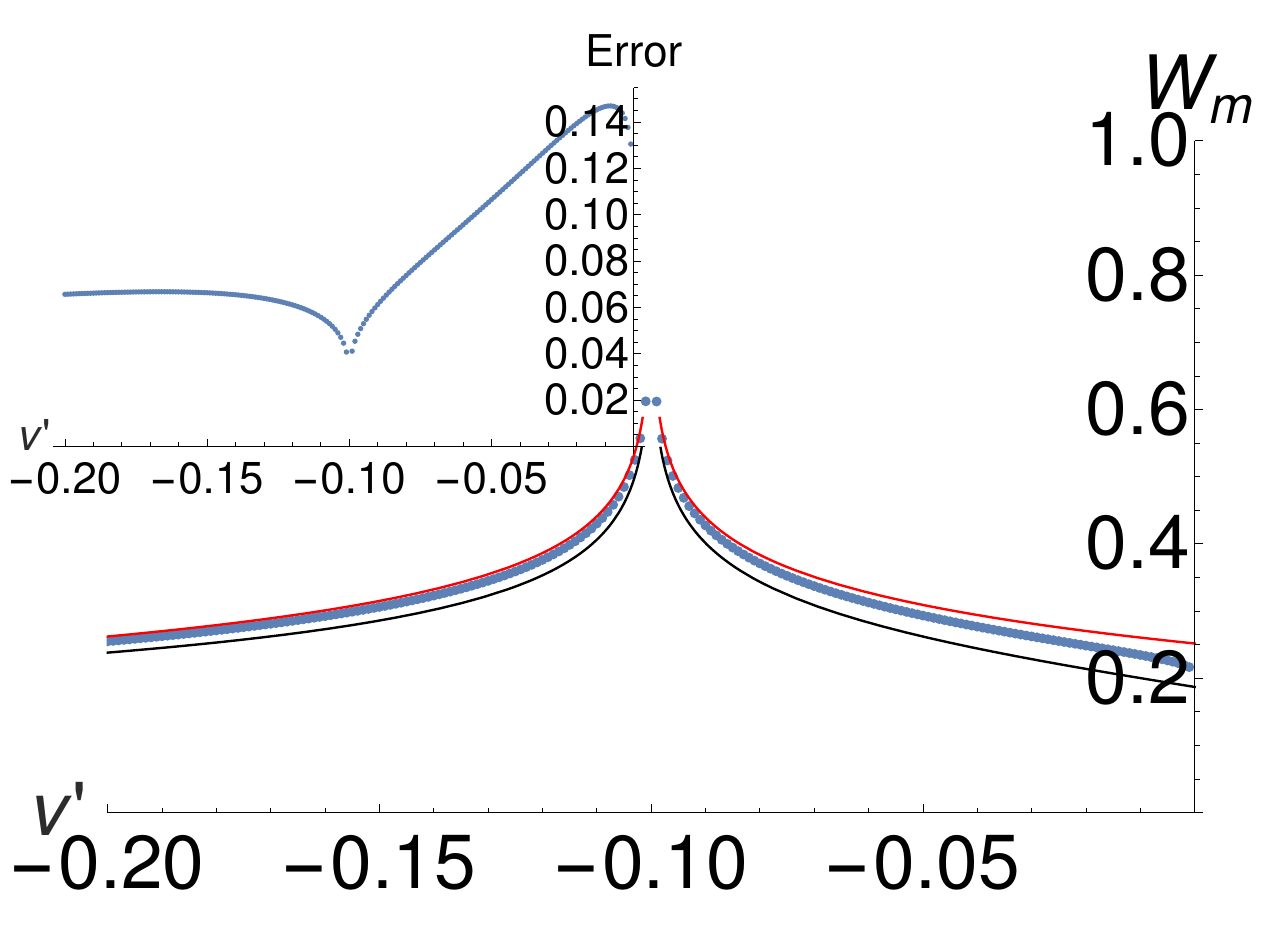}}\hskip 0.1in\,    
    \subfloat[$m=8$]{\includegraphics[height=2.4cm]{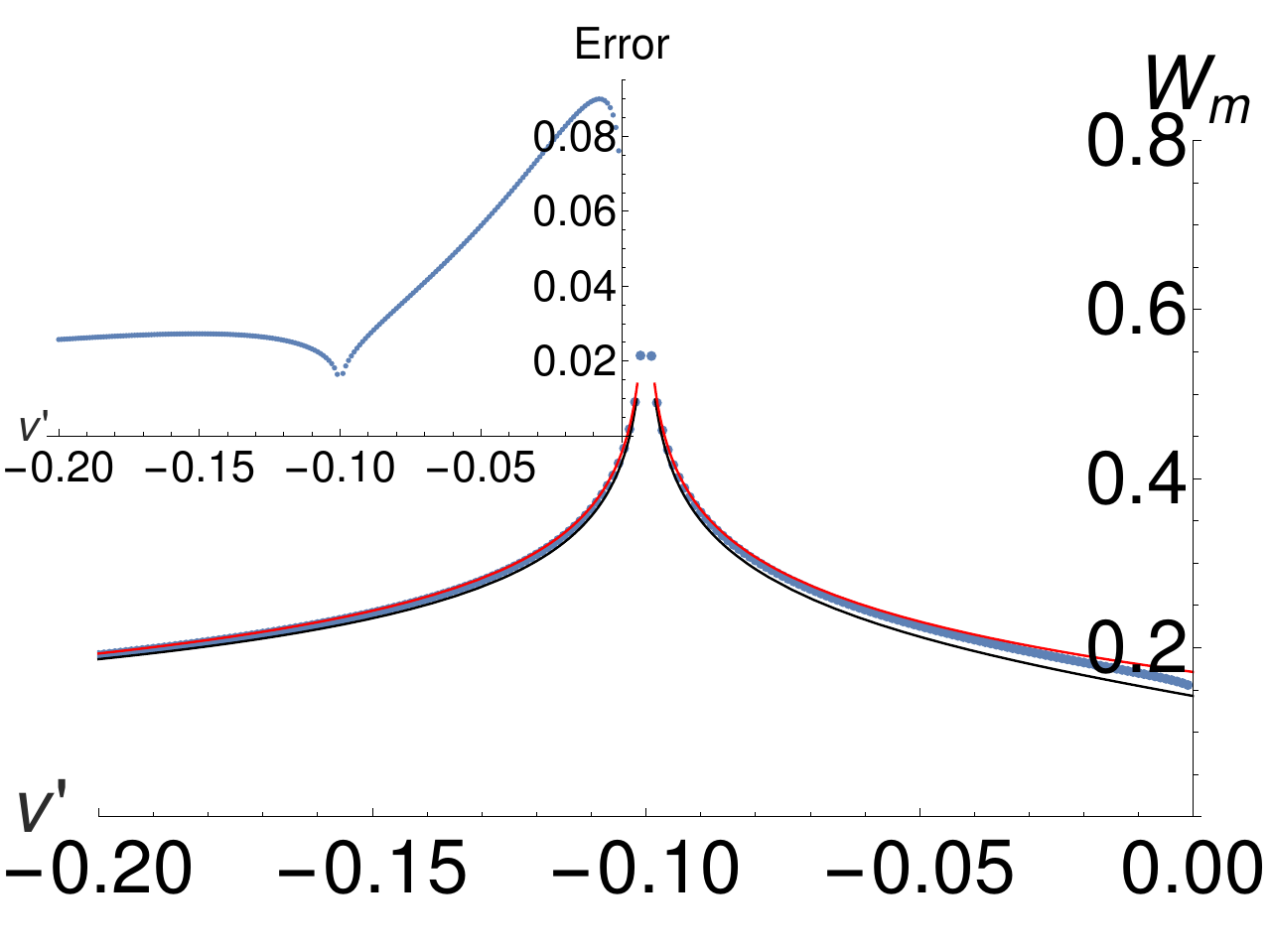}}\hskip 0.1in\,
 \subfloat[$m=10$]{\includegraphics[height=2.4cm]{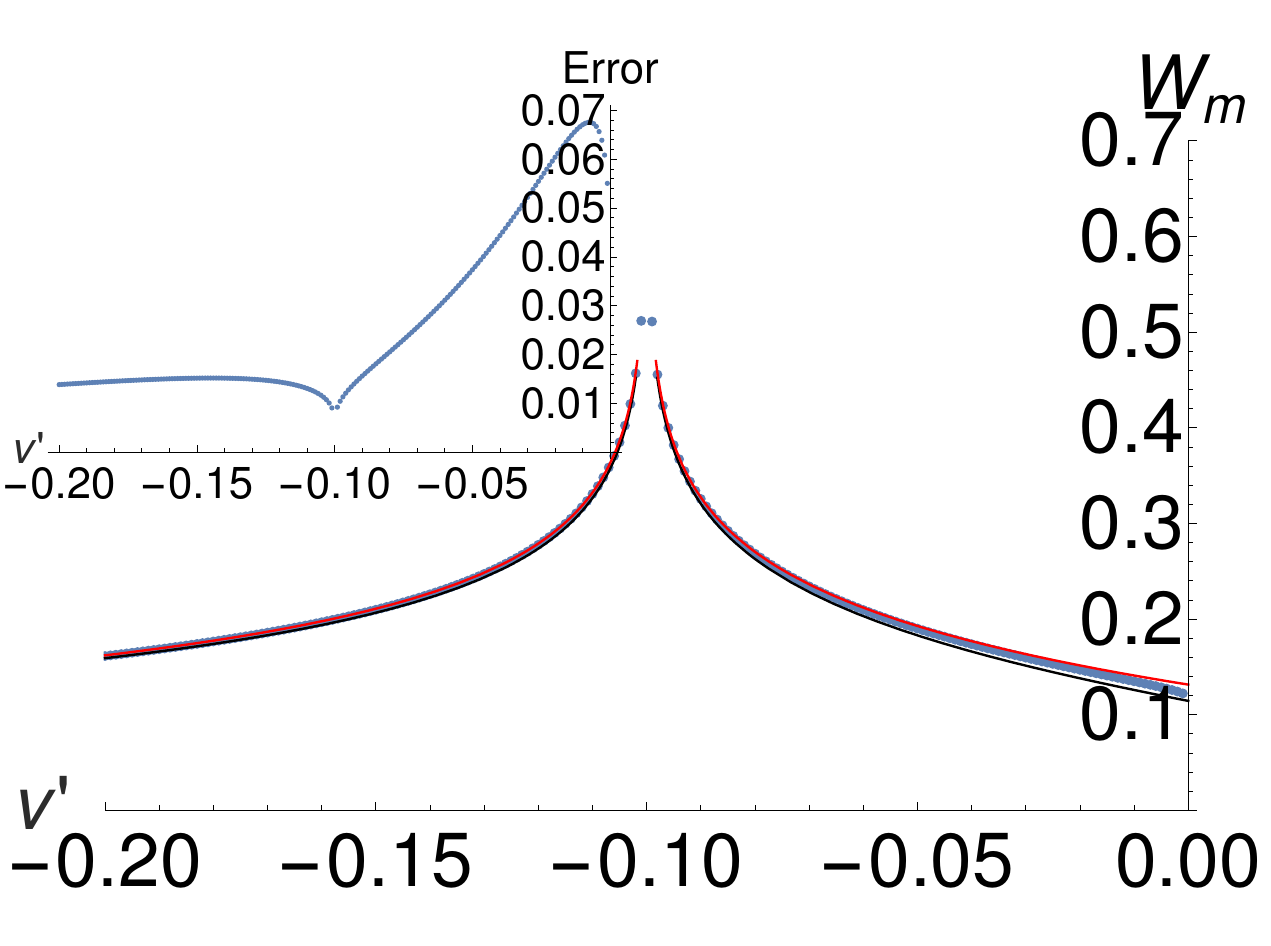}}\hskip 0.1in\,
    \subfloat[$m=12$]{\includegraphics[height=2.4cm]{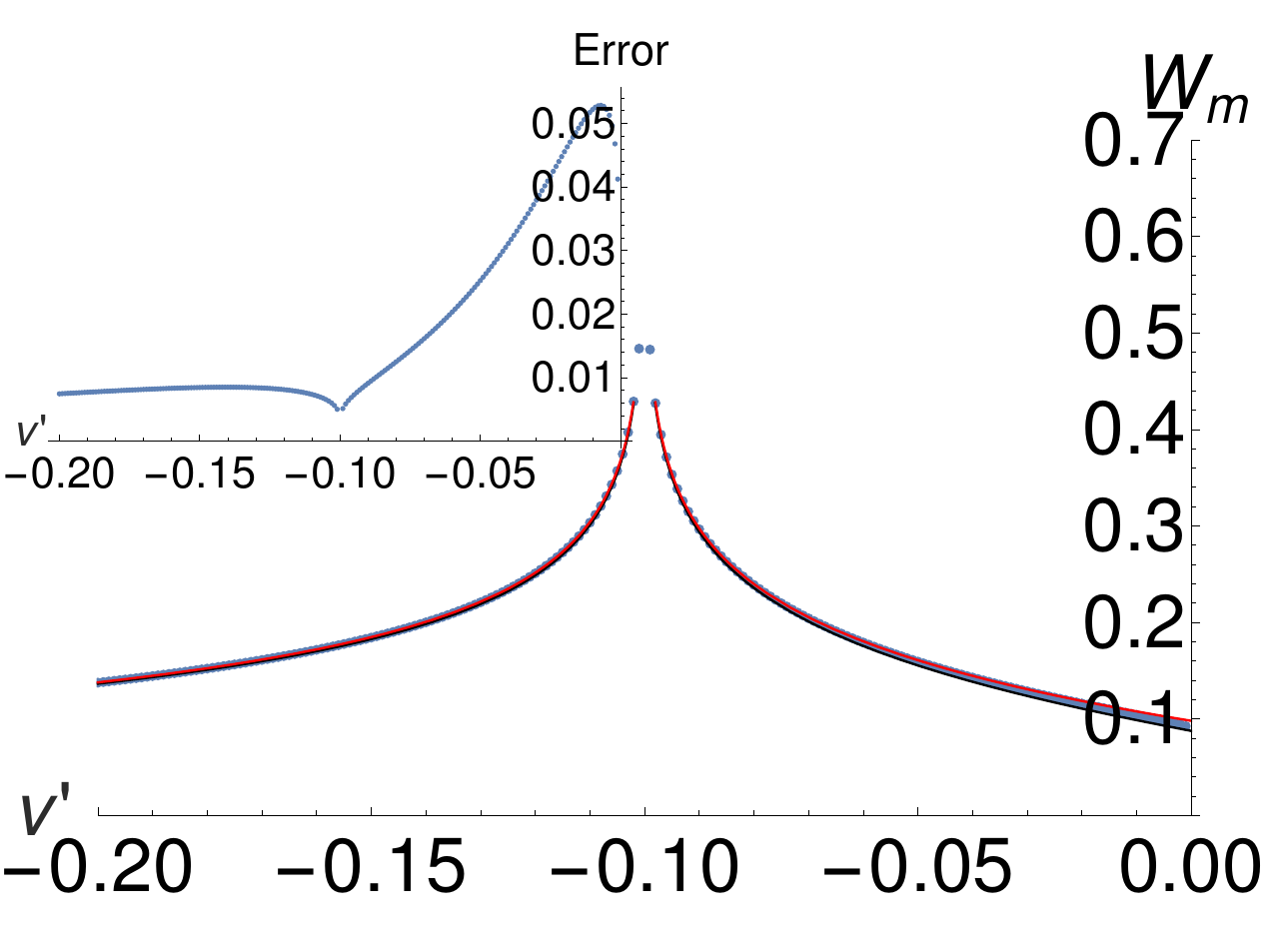}}\hskip 0.1in\,
    \subfloat[$m=15$]{\includegraphics[height=2.4cm]{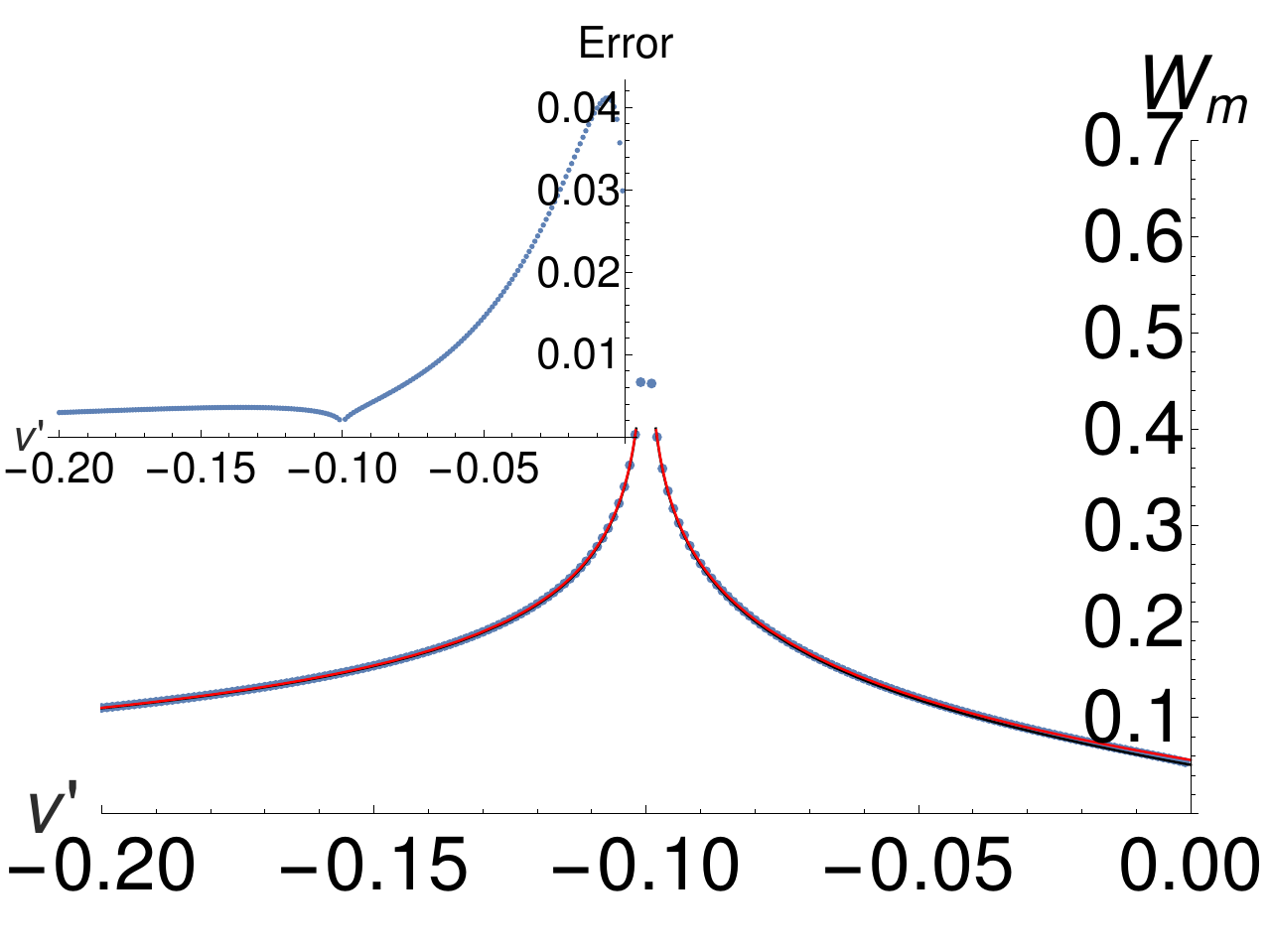}}
     \caption{Real parts of $\wminkm$ (red), $\wrindm$ (blue) and $\wmirrm$ (black) for a pair of points
      $(u=0.1,v=-0.1)$ and $(u'=0.11,v')$ with  varying $v'$. 
      As the mass increases all three converge to a common value.  To make the comparison explicit, the inset figure shows the relative error  between
    the real parts of   $\wrindm$ and $\wmirrm$ as a function of $v'$. } 
\label{fig:mrm}
\end{figure}

\bibliography{reference}
\bibliographystyle{ieeetr}
\end{document}